\definecolor{ForestGreen}{rgb}{0.1333,0.5451,0.1333}
\definecolor{DarkRed}{rgb}{0.65,0,0}
\definecolor{Red}{rgb}{1,0,0}
\newcommand{\tO}{\tilde{O}}
\newcommand{\tOmega}{\tilde{\Omega}}
\newcommand{\alert}[1]{{\textcolor{red}{#1}}}
\newcommand{\eat}[1]{}
\def\thatchaphol#1{\marginpar{$\leftarrow$\fbox{T}}\footnote{$\Rightarrow$~{\sf\textcolor{purple}{#1 --Thatchaphol}}}}
\def\jason#1{\marginpar{$\leftarrow$\fbox{JL}}\footnote{$\Rightarrow$~{\sf\textcolor{blue}{#1 --Jason}}}}
\def\danupon#1{{\sf\textcolor{green}{DN: #1}}}
\def\kent#1{\marginpar{$\leftarrow$\fbox{K}}\footnote{$\Rightarrow$~{\sf\textcolor{red}{#1 --Kent}}}}
\def\ruoxu#1{\marginpar{$\leftarrow$\fbox{R}}\footnote{$\Rightarrow$~{\sf\textcolor{cyan}{#1 --Ruoxu}}}}
\def\debmalya#1{\marginpar{$\leftarrow$\fbox{DP}}\footnote{$\Rightarrow$~{\sf\textcolor{brown}{#1 --Debmalya}}}}
\def\thatchaphol#1{}
\def\jason#1{}
\def\danupon#1{}
\def\kent#1{}
\def\ruoxu#1{}
\def\debmalya#1{}
\declaretheorem[numberwithin=section]{theorem}
\declaretheorem[numberlike=theorem]{lemma}
\crefname{algorithm}{Algorithm}{Algorithms}
\Crefname{algorithm}{Algorithm}{Algorithms}
\theoremstyle{definition}
\declaretheorem[numberlike=theorem]{definition}
\newenvironment{krq}{
  \begingroup
}{
  \endgroup
}
\providecommand{\outdegree}{\fparnew{\operatorname{deg}^+}} %
\NewDocumentCommand{\cutsize}{O{\delta} g g e{_}}{%
  #1
  \IfNoValueF{#3}{_{#3}}%
  \IfNoValueF{#4}{_{#4}}%
  \IfNoValueF{#2}{\parof{#2}}
}%
\newcommand{\incutsize}{\cutsize[\delta^-]}%
\newcommand{\outcutsize}{\cutsize[\delta^+]}%
\newcommand{\outcut}{\cutsize[\partial^+]}%
\newcommand{\incut}{\cutsize[\partial^-]}%
\newcommand{\inneighbors}{\fparnew{N^-}}  %
\newcommand{\outneighbors}{\fparnew{N^+}} %
\providecommand{\ec}{\lambda} %
\providecommand{\vc}{\kappa} %
\providecommand{\inneighbors}{\fparnew{N^{-}}} %
\providecommand{\inneighbors}{\fparnew{N^-}} %
\providecommand{\sparseG}{G_0}
\providecommand{\sparseV}{V_0}
\providecommand{\sparseE}{E_0}
\providecommand{\weight}{\fparnew{w}} %
\providecommand{\revG}{G_{\text{rev}}} %
\providecommand{\splitG}{G_{\text{split}}}%
\providecommand{\tout}{t^+}    %
\providecommand{\rin}{\root^-}          %
\providecommand{\vin}{v^-}
\providecommand{\vout}{v^+}         %
\providecommand{\uout}{u^+}
\providecommand{\optSink}{T^{\star}} %
\providecommand{\Sink}{T} %
\newcommand{\defterm}[1]{\emph{#1}} %
\providecommand{\citet}{\cite}%
\newcommand{\sink}{t}%
\renewcommand{\root}{s}%
\newcommand{\optSource}{S^*}%
\renewcommand{\optSink}{T^*}%
\newcommand{\varec}{\tilde{\ec}}%
\newcommand{\val}{\mathrm{val}}
\renewcommand{\paragraph}[1]{\medskip\noindent{\bf #1}\xspace}
\title{Minimum Cuts in Directed Graphs via Partial Sparsification\thanks{This paper combines, and improves on, two independent manuscripts by Quanrud~\cite{Quanrud21} and the other authors~\cite{CenLNPS21}.}}
\author{Ruoxu Cen\\Duke University\and Jason Li\\Carnegie Mellon University\and Danupon Nanongkai\\University of Copenhagen \& KTH \and Debmalya Panigrahi\\Duke University \and Kent Quanrud\\Purdue University\and Thatchaphol Saranurak\\University of Michigan, Ann Arbor}
\begin{document}

\maketitle

\eat{

\begin{abstract}
We give an algorithm to find a minimum cut in an edge-weighted directed graph with $n$ vertices and $m$ edges in $\tO(n\cdot \max(m^{2/3}, n))$ time. This improves on the 30 year old bound of $\tO(nm)$ obtained by Hao and Orlin for this problem. Our main technique is to reduce the directed mincut problem to $\tO(\min(n/m^{1/3}, \sqrt{n}))$ calls of {\em any} maxflow subroutine. Using state-of-the-art maxflow algorithms, this yields the above running time.
Our techniques also yield fast {\em approximation} algorithms for finding minimum cuts in directed graphs. For both  edge and vertex weighted graphs, we give $(1+\epsilon)$-approximation algorithms that run in $\tO(n^2 / \epsilon^2)$ time.
\end{abstract}

}

\begin{abstract}
We give an algorithm to find a minimum cut in an edge-weighted directed graph with $n$ vertices and $m$ edges in $\tO(n\cdot \max\{m^{2/3}, n\})$ time. This improves on the 30 year old bound of $\tO(nm)$ obtained by Hao and Orlin for this problem.  Using similar techniques, we also obtain $\tO(n^2 / \eps^2)$-time $(1+\eps)$-approximation algorithms for both the minimum edge and minimum vertex cuts in directed graphs, 
for any fixed $\epsilon$. Before our work, no $(1+\eps)$-approximation algorithm better than the exact runtime of $\tO(nm)$ is known for either problem.

Our algorithms follow a two-step template. In the first step, we employ a {\em partial sparsification} of the input graph to preserve a critical subset of cut values approximately. In the second step, we design algorithms to find the (edge/vertex) mincut among the preserved cuts from the first step. For edge mincut, we give a new reduction to $\tO(\min\{n/m^{1/3}, \sqrt{n}\})$ calls of \emph{any} maxflow subroutine, via packing arborescences in the sparsifier. For vertex mincut, we develop new local flow algorithms to identify small unbalanced cuts in the sparsified graph.

\end{abstract}

\pagenumbering{gobble}
\clearpage
\tableofcontents

\clearpage

\pagenumbering{arabic}

\section{Introduction}
\label{sec:introduction}

The {\em minimum cut} (or mincut) problem is one of the most widely studied problems in graph algorithms. In (edge-)weighted\footnote{We assume throughout that edge/vertex weights are polynomially bounded integers.} directed graphs (or {\em digraphs}), a mincut is a bipartition of the vertices into two non-empty sets $(S, V\setminus S)$ so that the total weight of edges from $S$ to $V\setminus S$ is minimized. 
This problem can be solved by solving the {\em $s$-mincut} problem (also called rooted mincut), where for a given {\em root} vertex $s$, we want to find the minimum weight cut $(S, V\setminus S)$ such that $s\in S$. (We call such cuts minimum $s$-cuts or $s$-mincuts.) This is because the mincut can be computed as the minimum between two $s$-mincuts for an arbitrary vertex $s$: one with the original edge directions in the input digraph, and the other with the edge directions reversed.


A simple algorithm for $s$-mincut (and thus mincut) on an $m$-edge, $n$-vertex digraph is to use $n-1$ maxflow calls to obtain the minimum $s-t$ cut for every vertex $t \not= s$ in the graph, and return the minimum among these. A beautiful result of Hao and Orlin~\cite{HaoO92} showed that these maxflow calls can be amortized to match the running time of a single maxflow call, provided one uses the  {\em push-relabel maxflow algorithm}~\cite{GoldbergT88}. This leads to an overall running time of $\tO(mn)$. 
Since their work, better maxflow algorithms have been designed, but the amortization does not work for these algorithms. 
As a consequence, the Hao-Orlin bound remains the best known for the directed mincut problem almost 30 years after their work.

\danupon{This is too late now, but perhaps in the next version we can consider discussing undirected graphs and unwieghted graphs so the readers see the full picture. We should also discuss why it's much more challenging to attack weighted directed graphs.}

\eat{
Using a different technique of duality between rooted mincuts and arborescences, Gabow~\cite{Gabow1995} obtained a running time of $\tO(m\lambda)$ for $s$-mincut, where $\lambda$ is the weight of a mincut (assuming integer weights).

%
For simple, unweighted graphs, we know $\lambda \le n-1$,
so Gabow's algorithm is no worse than Hao-Orlin, and sometimes better. Recently \cite{cq-simple-connectivity} obtained a randomized $\tO(n^2 U)$ time algorithm for integer capacities between $1$ and $U$.
For arbitrary weights, however, the the Hao-Orlin bound of $\tO(mn)$ remains the state of the art for the directed mincut problem.

\kent{I wrote the following to brief introduce vertex connectivity. I tried to keep it short. Otherwise it shows up for the first time in the results section which seems a little abrupt.}
We also consider vertex connectivity in vertex-weighted digraphs. A {\em vertex cut} in a digraph is a set of vertices $X \subset V$ such that removing $X$ from the graph ensures that $G$ is not strongly connected (or consisting of a single vertex). The \emph{minimum vertex cut} (or vertex mincut) problem is to find the vertex cut of minimum weight. One also has the rooted version of the problem where there is a fixed vertex $s$, and the goal is to compute the minimum weight set of vertices $X \subseteq V - \root$ such that either $X = V - \root$ or, after removing $X$, $\root$ is unable to reach at least one other vertex. Currently the fastest exact algorithm for minimum global and rooted vertex cut in weighted graphs is due to Henzinger, Rao, and Gabow \cite{hrg-00} and runs in $\apxO{m n}$ randomized time. Interestingly this algorithm is based on the Hao-Orlin algorithm mentioned above. This bound has recently been improved for the special case of unweighted digraphs to $\tO(mn^{11/12+o(1)})$ \cite{li+21}. 
This result is one of several recent developments for vertex connectivity in unweighted graphs as discussed below in \Cref{related-work}.

}

\subsection{Our Results}
In this paper, we can solve the $s$-mincut---thus the directed mincut problem---by essentially reducing it to $O(\sqrt{n})$ maxflow calls. At first glance, this is worse than the Hao-Orlin algorithm that only uses a single maxflow call. But crucially, while the Hao-Orlin algorithm is restricted to a specific maxflow subroutine and therefore cannot take advantage of faster, more recent maxflow algorithms, our new algorithm treats the maxflow subroutine as a black box, thereby allowing the use of {\em any} maxflow algorithm. Using state of the art maxflow algorithms that run in $\tO(m+n^{3/2})$ time~\cite{BrandLLSSSW21}, this already improves on the Hao-Orlin bound. 
Using some additional ideas, we further reduce to $O(\min\{n/m^{1/3}, \sqrt{n}\})$ maxflow calls, which yields our eventual running time of $\tO(nm^{2/3} + n^2)$:

\begin{theorem}
\label{thm:main}
There is a randomized Monte Carlo algorithm that solves $s$-mincut (and therefore directed mincut) whp in $\tO(nm^{2/3} + n^2)$ time on an $n$-vertex, $m$-edge (edge-weighted) directed graph.
\end{theorem}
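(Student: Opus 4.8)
The plan is to exploit the classical duality between rooted mincuts and arborescence packings, replace $G$ by a sparse \emph{partial sparsifier} that keeps only the cuts we care about, and then reduce the residual problem to a small number of black-box maxflow calls. By Edmonds' branching theorem the $s$-mincut value $\lambda$ equals the maximum (fractional) weight of a packing of spanning out-arborescences rooted at $s$, and such a packing together with an $s$-mincut attaining it can be computed in $\tO(m\lambda)$ time (Gabow). So the only hard regime is $\lambda$ large. I would first guess $\lambda$ within a factor $2$ by running the whole algorithm for each $\hat\lambda \in \{1,2,4,\dots,\mathrm{poly}(n)\}$ and returning the cheapest $s$-cut ever found, at an $O(\log n)$ overhead; for the guess with $\lambda \le \hat\lambda < 2\lambda$ the rest is correct. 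Fixing a threshold $\tau = \tO(\min\{n/m^{1/3},\sqrt n\})$: if $\hat\lambda \le \tau$, call Gabow's algorithm directly, in $\tO(m\tau) = \tO(nm^{2/3}+n^2)$ time; henceforth assume $\hat\lambda > \tau$.

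The second step is the partial sparsification: replace $G$ by a graph $H$ with only $\tO(n\cdot\mathrm{poly}(1/\eps))$ edges that preserves, up to a $(1\pm\eps)$ factor, the ``critical'' family of cuts --- the $\delta^+(S)$ with $s\in S$ whose value in $G$ is within a constant factor of $\lambda$ --- and in particular preserves $\lambda$ itself. Some care is needed: unlike the undirected case, a digraph can have exponentially many minimum $s$-cuts (e.g.\ a source joined to a sink through many parallel middle vertices), so uniform edge sampling at rate $\tilde\Theta(1/\lambda)$ does \emph{not} preserve the $s$-mincut, and a plain arborescence-layer (Nagamochi--Ibaraki-style) sparsifier fails to capture all small cuts; the construction therefore has to be tailored so that the error analysis only needs to control the critical family. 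Once $H$ is built its $s$-mincut is small, so running Gabow on $H$ cheaply returns a near-maximum packing of $s$-arborescences in $H$.

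The third step turns that packing into maxflow calls. Rescaling the $H$-packing to $G$ and peeling off from $G$ a fractional packing of $s$-arborescences of total weight $\lambda-\tau$ leaves a residual digraph in which the true $s$-mincut, and every critical cut, has value at most $\tau$: each spanning $s$-arborescence crosses every such cut at least once, and the sparsification error stays below $\tau$ by taking $\eps$ a suitable constant. The exact $s$-mincut of $G$ is then recovered from the residual graph and the peeled packing using $\tO(\tau) = \tO(\min\{n/m^{1/3},\sqrt n\})$ calls to an arbitrary maxflow subroutine --- roughly one exact $s$-$t$ computation per candidate sink, the candidates localized by the approximate mincut read off $H$. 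Plugging in the $\tO(m+n^{3/2})$-time maxflow of \cite{BrandLLSSSW21}, and adding the $\tO(m\tau)$ cost of the small-$\lambda$ branch and the $\tO(n\cdot\mathrm{poly}(1/\eps))$ cost of building $H$, the total is $\tO(nm^{2/3}+n^2)$; the only randomness is in the sparsification, whose failure probability is driven to $1/\mathrm{poly}(n)$.

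The main obstacle --- and where I expect the real work --- is the partial sparsification and its interface to the arborescence-packing step: designing a sparsifier that is immune to the pathologies above, that preserves exactly the cuts needed to certify and locate the mincut, and that admits an arborescence packing whose removal leaves residual $s$-mincut only $\tO(\min\{n/m^{1/3},\sqrt n\})$ rather than something like $\tO(\lambda)$ or $\tO(\sqrt{n\lambda})$. The remaining bookkeeping --- choosing $\tau$ and $\eps$ so that the easy branch, the sparsifier, and the maxflow calls all fit within $\tO(nm^{2/3}+n^2)$ --- is routine once that is in place.
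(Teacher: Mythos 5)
Your proposal diverges from the paper in the very structure of the case split, and the divergence opens gaps that are not repaired.

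The paper splits on $|\optSink|$, the \emph{number of vertices} in the sink side of the $s$-mincut, against a threshold $k$ --- not on the mincut \emph{value} $\lambda$ as you do. This is not a cosmetic difference. The paper's partial sparsifier (Lemma~\ref{sparsification}) only preserves $s$-cuts whose sink side has $\le k$ vertices: it succeeds precisely because the family of such \emph{unbalanced} cuts has size $n^{\tO(k)}$, which allows a Chernoff plus union-bound argument, and the added star from $s$ overwhelms all balanced cuts. The balanced case ($|\optSink| > k$) is handled separately by sampling $\tO(n/k)$ terminals $t$ and running $s$-$t$ maxflow, which works because a uniform sample of that size hits $\optSink$ whp. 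Your sparsifier claim --- $\tO(n)$ edges preserving \emph{every} cut within a constant factor of $\lambda$ --- is exactly the property the paper points out does not exist in digraphs (exponentially many near-minimum $s$-cuts). You note the obstruction but assert it can be engineered around by ``only controlling the critical family''; the critical family is the one that is exponentially large, so this does not avoid the issue. Without the unbalancedness restriction there is no union bound to run.

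Even granting the sparsifier, the recovery step is unjustified. You propose ``roughly one exact $s$-$t$ computation per candidate sink, the candidates localized by the approximate mincut read off $H$.'' There is no reason the sink component of an approximate $s$-mincut of $H$ shares any vertex with $\optSink$; in directed graphs, distinct near-minimum cuts can be vertex-disjoint on the sink side, so ``localization'' from $H$ fails. The paper's actual mechanism is quite different and is the main technical contribution: in the sparsified graph pack $\tO(k)$ $s$-arborescences by MWU (using Edmonds/Gabow duality), argue that a random arborescence from the packing $1$-respects the mincut whp, and then run Algorithm~\ref{alg:tree} --- a centroid decomposition of the arborescence combined with $O(\log n)$ amortized maxflows on $G$ --- to find the minimum $1$-respecting cut exactly. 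Nothing in your proposal plays the role of this $1$-respecting-cut algorithm (Theorem~\ref{thm:tree-alg}), yet it is what turns the packing into an output. Finally, the peeling step is also problematic: the sparsifier's arborescences use auxiliary star edges that are not present in $G$, so they cannot be ``rescaled and peeled from $G$,'' and even if they could, the residual graph's $s$-mincut is not the $s$-mincut of $G$, so having residual value $\le \tau$ does not by itself localize $\optSink$.
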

\noindent
%
%
%
In fact, our reduction in general implies 
a running time bound of $\tO(\min\{mk, nk^2\}+\frac{n}{k}\cdot F(m,n))$, where $k$ is a parameter that we can choose and $F(m,n)$ is the time complexity of maxflow (see \Cref{{thm:main-primitive}}).

Our techniques also yield fast approximations for the mincut problem in directed graphs. In particular, for any $\epsilon\in(0, 1)$, we can find a $(1+\epsilon)$-approximate mincut in $\tO(n^2/\epsilon^2)$ time:

\begin{theorem}
\label{thm:edge-apx} 
  For any $\epsilon \in (0,1)$, there is a randomized Monte Carlo algorithm that finds a $(1+\epsilon)$-approximate $s$-mincut (and therefore directed mincut) whp in $\tO(n^2 / \epsilon^2)$ time on an $n$-vertex (edge-weighted) directed graph.
\end{theorem}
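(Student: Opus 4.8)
The plan is to instantiate the two-step template advertised in the abstract: use \emph{partial sparsification} to replace $G$ by a sparse graph that preserves the cut values near the minimum, and then compute an $s$-mincut exactly on that sparse graph with a black-box subroutine. Since edge weights are polynomially bounded integers, the value $\lambda^{\star}$ of the $s$-mincut lies in $[1,\mathrm{poly}(n)]$, so I would first guess $\lambda$ among the $O(\log n)$ powers of two, run the remaining steps for each guess, always measure the weight of any cut it produces in the \emph{original} graph $G$, and finally output the lightest cut found over all guesses (and over $G$ and its reversal, which also yields the global directed mincut). Incorrect guesses can only return genuine cuts, whose weight is at least $\lambda^{\star}$, whereas the guess with $\lambda \le \lambda^{\star} \le 2\lambda$ returns an $\epsmore$-approximate cut, so the overall output is an $\epsmore$-approximate $s$-mincut.

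\emph{Partial sparsification.} Fix a guess $\lambda$. The target of this step is a subgraph $H \subseteq G$ with only $\tO(n/\eps^2)$ edges such that, with high probability, every $s$-cut of $G$ of weight at most $2\lambda$ has its weight preserved within a $\epspm$ factor in $H$, and $H$ contains no $s$-cut of weight below $\epsless\lambda$ whose $G$-weight is not comparably small. I would build $H$ by importance sampling in the spirit of Bencz\'ur--Karger and Fung et al., adapted to digraphs: keep each edge $e$ independently with probability $p_e \propto \eps^{-2}\log n$ divided by a suitable local-connectivity estimate for $e$ (truncated at $1$), and rescale the capacity of each surviving edge by $1/p_e$. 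A directed analogue of the edge-strength summation bound should give $\sum_e p_e = \tO(n/\eps^2)$, hence $|E(H)| = \tO(n/\eps^2)$; a Chernoff bound, combined with a counting bound stating that the number of $s$-cuts of weight at most $\alpha\lambda$ in a digraph is $n^{O(\alpha)}$, then controls the deviation of all relevant cuts simultaneously. Because a directed graph need not admit \emph{any} subquadratic cut sparsifier, it is essential that this step only promises to preserve the $O(\lambda)$-weight cuts; the heavier cuts are permitted to change, and the same concentration argument keeps them heavy, so $H$ has no spurious light cut.

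\emph{Reduction to a black-box mincut call.} On $H$, which now has $\tO(n/\eps^2)$ edges, I would run the Hao--Orlin algorithm (or the exact routine of \Cref{thm:main}) to find an $s$-mincut $S_H$ of $H$ in $\tO(|E(H)|\cdot n) = \tO(n^2/\eps^2)$ time, and output $S_H$ together with its \emph{$G$-weight} $w_G(\partial^+ S_H)$. On the correct guess, the true optimum $S^{\star}$ is a light cut of $G$, so $w_H(\partial^+ S_H) \le w_H(\partial^+ S^{\star}) \le \epsmore\lambda^{\star}$; applying the preservation guarantee in the other direction, $w_G(\partial^+ S_H) \le \tfrac{1}{1-\eps}\, w_H(\partial^+ S_H) \le \tfrac{1+\eps}{1-\eps}\,\lambda^{\star}$, which is a $\epsmore$-approximation after rescaling $\eps$ by a constant. (Rounding the rescaled capacities of $H$ to integers before the mincut call costs only another negligible error absorbed into $\eps$.) Summing the cost over the $O(\log n)$ guesses and the reversed graph gives the claimed $\tO(n^2/\eps^2)$ running time.

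\emph{Main obstacle.} The crux is the partial sparsification lemma, which is harder than the undirected Bencz\'ur--Karger theorem for two reasons. First, one must pin down the right local-connectivity parameter for a directed edge and prove both the $\tO(n/\eps^2)$ size bound and the cut-preservation guarantee while insisting it hold only for the $O(\lambda)$-weight cuts, since global directed cut sparsification is impossible. Second, the union bound requires a polynomial bound on the number of near-minimum \emph{rooted} cuts of a digraph, which is not implied by the usual $s$--$t$ cut-counting results and must be proved separately, e.g.\ by a contraction argument that respects the one-sided nature of directed cuts. Once that lemma is in hand, the guessing, the black-box mincut call, and the approximation accounting are all routine.
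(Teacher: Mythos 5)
Your proposal has a genuine gap, and it lies exactly where you flag the ``main obstacle.'' You need a directed partial cut sparsifier with $\tO(n/\eps^2)$ edges that preserves all $s$-cuts of weight $O(\lambda)$, and you justify the concentration step by the counting claim that there are only $n^{O(\alpha)}$ directed $s$-cuts of weight at most $\alpha\lambda$. This claim is false. Consider $V = \{s,t\}\cup A$ with $|A|=n$, with edges $s\to t$ of weight $n$, $s\to a$ and $a\to t$ of weight $1$ for each $a\in A$, and $t\to a$ of weight $n$ for each $a\in A$. The $s$-mincut has weight $\lambda = n+1$ (sink side $\{a\}$ for any single $a$), yet for every subset $B\subseteq A$ the set $\{t\}\cup B$ is the sink side of an $s$-cut of weight exactly $2n < 2\lambda$. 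So there are at least $2^n$ distinct $2$-approximate $s$-mincuts, whereas your bound promises $n^{O(1)}$. With exponentially many near-minimum cuts, the Chernoff-plus-union-bound argument cannot possibly produce an $\tO(n/\eps^2)$-edge subgraph preserving them all; this is precisely the impossibility the paper's introduction is referring to when it cites \cite{CenCPS21}. Your suggested fix---a contraction argument ``that respects the one-sided nature of directed cuts''---does not help: contracting an edge from $\overline{S}$ to $S$ also destroys the bipartition, and those edges are not bounded by the cut weight, which is the step where Karger's argument dies in the directed setting.

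The paper's partial sparsification therefore works on a different invariant: it only promises to preserve $s$-cuts whose \emph{sink component has at most $k$ vertices}. There are trivially at most $n^{O(k)}$ such sets (regardless of cut weight), so the union bound goes through, and balanced cuts are simply forced to be heavy by overlaying a star at $s$. The price is a sparsifier with $\tO(nk/\eps^2)$ edges, not $\tO(n/\eps^2)$. This size cannot just be handed to Hao--Orlin, which would give $\tO(n^2 k/\eps^2)$, as large as $\tO(n^{2.5}/\eps^2)$ for moderate $k$. Instead the paper reuses the exact algorithm's two-regime structure: when the sink is small, arborescence packing plus $O(\log n)$ maxflow calls of the $1$-respecting routine run fast on the sparse graph; when the sink is large, one samples $\tO(n/k)$ terminals and runs $(s,t)$-maxflow \emph{inside the sparsifier}; balancing over the logarithmically many guesses of $k$ gives $\tO(n^2/\eps^2)$. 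Your proposal has neither the $k$-parameterized case split nor the arborescence-packing or sampling machinery, so even granting a weaker sparsifier, the running time would not follow.
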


Finally, we consider vertex-weighted digraphs. A {\em vertex cut} in a digraph is defined as a tri-partition of vertices into sets $(L,X,R)$ such that there is no edge from $L$ to $R$. \danupon{Should we say $L, R \not= \emptyset$?} (In other words, removing the vertices in $X$ results in a digraph where the directed cut $(L, R)$ is empty.) A {\em minimum} vertex cut (or vertex mincut) is a vertex cut $(L, X, R)$ that minimizes the sum of weights of vertices in $X$. We give an algorithm to find a $(1+\epsilon)$-approximate vertex mincut in $\tO(n^2/\epsilon^2)$ time:

\begin{theorem}
\label{thm:vertex-apx}
  For any $\epsilon \in (0,1)$, there is a randomized Monte Carlo algorithm that finds a $(1+\epsilon)$-approximate minimum vertex $s$-cut and the minimum global vertex cut whp in $\tO(n^2 / \epsilon^2)$ time on an $n$-vertex (vertex-weighted) directed graph.
\end{theorem}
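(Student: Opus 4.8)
The plan is to instantiate the paper's two-step template: first produce a sparse instance preserving all near-minimum cuts, then locate the optimum via a balanced/unbalanced dichotomy. Throughout we pass to the \emph{split graph} $H$ of $G$ --- split each vertex $v$ into $v^{-},v^{+}$ joined by an arc $v^{-}\to v^{+}$ of capacity $w(v)$, and turn each arc $u\to v$ of $G$ into $u^{+}\to v^{-}$ of capacity $+\infty$. By Menger's theorem a vertex cut $(L,X,R)$ of weight $\omega$ is exactly a finite cut of $H$ of value $\omega$ whose arc set is $\{v^{-}\to v^{+}:v\in X\}$, separating $u^{+}$ from $t^{-}$ for all $u\in L,\ t\in R$; the rooted variant just adds the requirement $s\in L$. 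Since weights are polynomially bounded integers, I would guess the optimum value up to a $(1+\eps)$ factor --- there are $\tO(1/\eps)$ geometric guesses $\hat\kappa$ --- and, for a fixed guess, randomly round and rescale the vertex weights (bucketing by weight) so that, when $\hat\kappa$ is the correct guess, the optimal separator has both weight and cardinality $\tO(1/\eps^{2})$, at the cost of only a $(1\pm\eps)$ distortion of cut values.

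\emph{Partial sparsification.} Given the guess, I would build a sparsifier $\tilde H$ with $\tO(n/\eps^{2})$ arcs and maximum degree $\tO(1/\eps^{2})$ that $(1\pm\eps)$-preserves every cut of value $\tO(1/\eps^{2})$ and, crucially, creates no spurious small cuts: overlay a sparse vertex-connectivity certificate of the infinite-arc part of $H$ (keeping each $a$--$b$ vertex connectivity up to value $\tO(1/\eps^{2})$, so that small cuts of $\tilde H$ equal those of $H$) with the rescaled $v^{-}\to v^{+}$ arcs subsampled at rate $\tilde\Theta(\eps^{2}/\hat\kappa)$ (so that small-cut values are preserved by a standard cut-counting argument). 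This is the vertex analogue of the arborescence-packing sparsifier the paper uses for edge cuts; the certificate is computed once in $\tO(m/\eps^{2}+n/\eps^{2})=\tO(n^{2}/\eps^{2})$ time, and the per-guess rescale/subsample is $O(n)$.

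\emph{Dichotomy.} Fix the optimal vertex cut $(L^{*},X^{*},R^{*})$ with $|L^{*}|\le|R^{*}|$ and a balance threshold $\tau=\Theta(\eps n)$. In the \emph{balanced} case $|L^{*}|\ge\tau$ (hence $|R^{*}|\ge\tau$), sampling $\tO(1/\eps)$ vertices uniformly at random hits both $L^{*}$ and $R^{*}$ whp, so it suffices to compute, for each of the $\tO(1/\eps^{2})$ sampled ordered pairs $(u,v)$, the \emph{exact} minimum $u^{+}$--$v^{-}$ cut of $H$ by one black-box maxflow call, costing $\tO(m+n^{3/2})=\tO(n^{2})$ each via \cite{BrandLLSSSW21} (using $m=O(n^{2})$); the straddling pair returns the optimum exactly, so this part needs no guessing and costs $\tO(n^{2}/\eps^{2})$. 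In the \emph{unbalanced} case $|L^{*}|<\tau$, I would run, from every vertex $v$ and for every guess $\hat\kappa$, a \emph{local} vertex-capacitated flow routine on $\tilde H$ that, when $v\in L^{*}$ and $\hat\kappa$ is correct, returns a cut of value $\le(1+\eps)\hat\kappa$ with fewer than $\tau$ vertices on the small side, and that in all cases explores only $\tO(\tau/\eps^{2})=\tO(n/\eps)$ arcs before succeeding or aborting; over $n$ seeds and $\tO(1/\eps)$ guesses this is $\tO(n^{2}/\eps^{2})$. The algorithm outputs the lightest cut found by either case, re-evaluated in the original $G$, so that a mistuned guess can never make us report a value below the true optimum; the global and rooted versions are handled uniformly, the rooted case only discarding cuts that fail to put $s$ on the source side. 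Correctness is whp: a balanced optimum is found exactly, an unbalanced one is found $(1+\eps)$-approximately from its small-side seeds under the correct guess, and re-evaluating in $G$ forbids undershooting; choosing $\tau=\Theta(\eps n)$ is what makes the two regimes each cost $\tO(n^{2}/\eps^{2})$.

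\emph{Main obstacle.} The routine ingredients --- split-graph reduction, sampling, weight rescaling, black-box maxflow --- are not the difficulty. The two substantive pieces are: (i) the partial sparsifier for \emph{vertex} cuts, in particular the sparse vertex-connectivity certificate for the infinite-arc part of $H$ and its interplay with the weight subsampling and weight-bucketing, so that $\tilde H$ has $\tO(n/\eps^{2})$ arcs and bounded degree, $(1\pm\eps)$-preserves all near-minimum vertex cuts, and creates no new small cuts --- realizing this for the infinite-arc part of the split graph is delicate, and it is the vertex counterpart of the edge-case arborescence packing; and (ii) the local vertex-capacitated flow routine, which must be simultaneously \emph{complete} (from any seed on the small side of a near-minimum unbalanced cut it extracts a certifiably good cut) and \emph{local} (it touches only $\tO(\tau/\eps^{2})$ arcs, with an early-termination guarantee when the seed is unhelpful), so that its $n$ invocations per guess fit the budget in the vertex-capacitated setting. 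Getting these two components to work --- and calibrating $\tau$ so the two regimes mesh --- is the heart of the argument.
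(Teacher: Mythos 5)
Your high-level template (partial sparsification, then a balanced/unbalanced dichotomy with maxflow on sampled pairs versus local flow from seeds) is the right skeleton and does match the paper's structure. But both of the components you flag as "the heart of the argument" are asserted with parameters that the paper's machinery does not deliver, and do not appear achievable by the certificate-style construction you sketch.

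\textbf{The sparsifier.} You claim a sparsifier $\tilde H$ with $\tO(n/\eps^2)$ arcs and max degree $\tO(1/\eps^2)$ that $(1\pm\eps)$-preserves \emph{every} cut of value $\tO(1/\eps^2)$ and creates no spurious small cuts, built by overlaying a sparse directed vertex-connectivity certificate with subsampled weight-arcs. This is too strong: the paper explicitly points out (citing~\cite{CenCPS21}) that no directed sparsifier with $\tO(n/\eps^2)$ edges can preserve all small cuts, because directed graphs can have $2^{\Omega(n)}$ approximate minimum cuts and the Chernoff union bound only controls cuts whose sink side has at most $k$ vertices (of which there are $n^{O(k)}$). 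Likewise, sparse vertex-connectivity certificates with $O(kn)$ edges are an \emph{undirected} (Nagamochi--Ibaraki) phenomenon; no analogous directed certificate is cited or known here. What \reflemma{apx-rvc-sparsification} actually produces is a graph with $\tO(nk/\eps^2)$ edges, max in-degree $\tO(k\log n/\eps^2)$, integer vertex weights in $[0, \tO(k\log n/\eps^2)]$, and approximation guarantees \emph{only for $\root$-cuts with sink component of size $\leq k$}, obtained by random rounding of vertex weights, padding via auxiliary $\root$--$a_v$--$v$ paths to kill balanced cuts, and replacing incoming edges of high-in-degree vertices by a single arc from $\root$. Your $\tO(n/\eps^2)$-arc, $k$-free construction is not a vertex analogue of anything the paper does.

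\textbf{The local flow and the threshold.} You set $\tau = \Theta(\eps n)$, run the local routine from all $n$ seeds, and claim each invocation touches only $\tO(\tau/\eps^2) = \tO(n/\eps)$ arcs. This per-invocation bound is unsubstantiated and far stronger than what \reflemma{local-apx-rvc} achieves: the paper's Ford-Fulkerson-based search, even after the refinements exploiting the saturated $(\vout,\rin)$ auxiliary edges, costs $\tO(k^3\log^2 n/\eps^4)$ per query when the small side has $\leq k$ vertices --- roughly a factor $k^2/\eps^2$ larger than your claimed bound (there are $\tO(k\log n/\eps^2)$ augmenting paths and each search can touch $\tO(k^2\log n/\eps^2)$ edges). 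If you plug a realistic per-seed cost of $\tO(\tau^3/\eps^4)$ into your accounting, the unbalanced case costs $n \cdot \tO(\tau^3/\eps^4) \cdot \tO(1/\eps) = \tO(n^4/\eps^2)$ --- not $\tO(n^2/\eps^2)$. The paper instead balances at $k = \eps\sqrt{n}$, samples only $\tO(n/k)$ seeds (rather than all $n$), and gets $\tO(n/k)\cdot\tO(k^3/\eps^4) = \tO(nk^2/\eps^4) = \tO(n^2/\eps^2)$. Finally, your dichotomy parametrizes on $\min\{|L^*|,|R^*|\}$ with a wlog $|L^*|\leq|R^*|$ that does not hold for the rooted problem (where $\root$ is pinned to $L^*$; $|R^*|=|\optSink|$ may be the small side even when $|L^*|$ is large), so "discarding cuts that fail to put $s$ on the source side" is not a valid reduction: the rooted optimum may never be among the cuts your global-style algorithm finds. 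The paper's route is the reverse: solve the rooted problem parametrized by $|\optSink|$, then obtain the global cut by sampling $\tO(W\log n/(W-\vc))$ weighted roots $\root$ and exploiting $\ev{n-\outdegree{\root}} \leq n(1-\vc/W)$ to make the expected total cost telescope.
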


To the best of our knowledge, before this work, the fastest algorithms for $(1+\epsilon)$-approximation of mincuts in edge or vertex weighted directed graphs were the respective exact algorithms themselves, which obtained a running time of $\tO(mn)$~\cite{HaoO92,hrg-00}.
 Our approximation results establish a separation between the best exact and $(1+\epsilon)$-approximation algorithms for both edge and vertex mincut problems in directed graphs.

\smallskip
\noindent
{\em Remark:}
Our results are the first to
break the $O(mn)$ barrier for directed mincut problems in general, weighted digraphs. For all values of $m$ except when $m = n^{1+o(1)}$, this is immediate from the above theorems. If $m = n^{1+o(1)}$, we can also break the $O(mn)$ barrier by employing the recent $\apxO{m^{1.5 - 1/328}}$-time max-flow algorithm of $\cite{glp-21}$
to obtain $\bigO{mn^{1-\Omega(1)}}$-time algorithms for all problems in \Cref{thm:main,thm:edge-apx,thm:vertex-apx}.

\smallskip
\noindent   
{\bf Related Work.}
The directed (edge) mincut problem has been studied over several decades. Early work focused on unweighted graphs~\cite{EvenT75,Schnorr79} eventually resulting in an $O(mn)$-time algorithm due to Mansour and Schieber~\cite{MansourS89}. This was matched (up to log factors) in weighted graphs by Hao and Orlin~\cite{HaoO92}, whose result we improve in this paper. For unweighted graphs (and graphs with small integer weights), the current record is a recent $\tO(n^2)$-time algorithm due to Chekuri and Quanrud~\cite{cq-simple-connectivity}. A similar story has unfolded for directed vertex mincuts. Early work again focused on unweighted graphs~\cite{podderyugin-73,EvenT75,cheriyan-reif,galil-80} until the work of Henzinger, Rao, and Gabow~\cite{hrg-00} who obtained an $\tO(mn)$-time algorithm for weighted graphs. The current best for directed vertex mincut in unweighted graphs is an $\tO(m n^{11/12+o(1)})$-time algorithm due to Li~{\em et al.}~\cite{li+21}. Faster algorithms are known when the mincut size is small and for $(1+\epsilon)$-approximations in unweighted digraphs~\cite{NanongkaiSY19,forster+20,cq-simple-connectivity}. 
%

\subsection{Our Techniques}
Our results are obtained by solving the $s$-mincut problem. Let us consider the edge-weighted case. Gabow~\cite{Gabow1995} obtained a running time of $\tO(m\lambda)$ for this problem (assuming integer weights), where $\lambda$ is the size of an $s$-mincut. He did so via {\em arborescense packing}: Define an $s$-arborescense to be any spanning tree rooted at $s$ with edges pointing toward the leaves.
In $\tO(m\lambda)$ time, Gabow's algorithm computes $\lambda$ $s$-arborescenses such that an edge $e$ of weight $w(e)$ is contained in at most $w(e)$ arborescenses (this is called arborescense packing).\footnote{Gabow actually constructs a directionless spanning tree packing, which is a relaxation of an arborescence packing, but we ignore this technical detail here since it is not relevant to our eventual algorithm.} 
%
%
Gabow's algorithm is at least as fast as that of Hao and Orlin for unweighted simple graphs (since $\lambda \le n-1$), but can be much worse for weighted (or multi) graphs. Nevertheless, Karger~\cite{Karger00} gave an interesting approach to use arborescence packing for the mincut problem even with edge weights, {\em but only in undirected graphs}. Karger's algorithm had three main steps:
\begin{itemize}[noitemsep]
    \item[(a)] sparsify the input graph $G$ to $H$ by random sampling of edges to reduce the mincut value in $H$ to $O(\log n)$ while guaranteeing that the mincut in $G$ is a $(1+\epsilon)$-approximate mincut in $H$,
    \item [(b)] pack $O(\log n)$ $s$-arborescences\footnote{Since Karger's algorithm considered undirected graphs, the arborescences are simply spanning trees.} 
    in the sparsifier $H$, and 
    \item [(c)] find the minimum weight cut among those that have at most two edges in an arboresence using a dynamic program.
\end{itemize}

The last step is sufficient because the duality between cuts and arborescences ensures that an $s$-mincut, which is now a $(1+\epsilon)$-approximate mincut after sparsification, has at most two edges in at least one $s$-arborescence.\footnote{The duality implies that if the undirected mincut is $\lambda$, then we can pack $\lambda$ spanning trees where every edge appears in at most two arborescences.}
%
Karger implements all these steps in $\tO(m)$ time to obtain an $\tO(m)$-time mincut algorithm in undirected graphs.

Unfortunately, steps (a) and (c) in Karger's scheme are not valid in a directed graph. 
To begin with, directed graphs do not admit sparsifiers similar to Karger's sparsifier: while Karger's sparsifier  approximately preserves all cuts in an undirected graph (after some scaling), it is well known that a sparsifier with a similar property does not exist in directed graphs (see, e.g., \cite{CenCPS21}).
This is mainly because we cannot bound the number of mincuts and approximate mincuts in directed graphs while we can do so in undirected graphs. 


\paragraph{Partial sparsification.} 
Since it is impossible for a sparsifier to preserve all cuts in directed graphs, it is natural to try to preserve a {\em partial} subset of cuts. 
%
%
%
%
Suppose we were guaranteed that the $s$-mincut $(S, V\setminus S)$ (recall that $s\in S$) is unbalanced in the sense that $|V\setminus S| \leq k$ for some parameter $k$ that we will fix later. Let us randomly
sample edges to scale down the value of the mincut to $\tO(k)$. In an undirected graph, as long as $k = \Omega(\log n)$, all the cuts would converge to their expected values. This is not true in digraphs, but crucially, {\em all the unbalanced cuts converge to their expected values} since there are only $n^{\tO(k)}$ of them.
However, it is possible that some balanced cut is (misleadingly) the new mincut of the sampled graph, having been scaled down disproportionately by the random sampling. So, we {\em overlay} this sampled graph with a star rooted at $s$, and show that this sufficiently increases the values of all balanced cuts, while only distorting the unbalanced cuts slightly. After this overlay, we can claim that we now have a digraph where $(S, V\setminus S)$ is a $(1+\epsilon)$-approximate mincut. We use one additional idea here. We show that in the sparsifier, every vertex in $V\setminus S$ has only $\apxO{k}$ incoming edges (note that each edge can be weighted)---$\apxO{k}$ edges across the cut from $S$ and $\le k$ edges from within $V\setminus S$. By contracting all vertices with unweighted in-degree $> \apxO{k} $ into $s$, we reduce the number of edges in the digraph to $\apxO{nk}$.

But, what if our premise that the $s$-mincut is unbalanced does not hold? This case is actually simple. We use a uniform random sample of $\tO(\frac{n}{k})$ vertices, and find $\root-\sink$ mincuts for all vertices $\sink$ from the sample, using $\tO(\frac{n}{k})$ maxflow calls. It is easy to see that whp, the sample {\em hits} $V\setminus S$, and hence, the minimum weighted cut among these $s-t$ mincuts will reveal the $s$-mincut of the graph.

Let us, therefore, return to the case where the $s$-mincut is unbalanced. Recall that we have already sparsified the graph to one that has only $\tO(nk)$ edges, and where the mincut has weight $\tO(k)$. The next step is to create a maximum packing of edge-disjoint $s$-arborescences. Because the graph is weighted, instead of using Gabow's algorithm described above, we create a (fractional) packing using a multiplicative weights update procedure (e.g.,~\cite{Young1995}). By duality,\footnote{By duality, we have that if the directed mincut is $\lambda$, then we can pack $\lambda$ arborescenses where each edge appears in at most one arborescense. Note that this is different from the case of undirected graphs where each edge appears in at most two arborescences.} 
these arborescences have the following property: if we sample $O(\log n)$ random $s$-arborescences, then whp there will be at least one arborescence $T$ such that there is exactly one edge in $T$ that goes from $S$ to $V\setminus S$. 
%
In this case, we say that the cut $(S, V\setminus S)$ {\em $1$-respects} the arborescence $T$.

\paragraph{Our 1-respecting cut algorithm.}
Our final task, therefore, is the following: given an arborescence, find the minimum weight cut in the original graph among all those that $1$-respect the arborescence $T$. At first sight, this may look similar to part (c) of Karger's algorithm which can be solved by a dynamic program or other techniques (e.g.~\cite{MukhopadhyayN20,GawrychowskiMW20,GawrychowskiMW21,LopezMN}).
%
%
However, these techniques
relied on the fact that if the $s$-mincut $(S, V\setminus S)$ has a single edge $e$ in $T$, then 
$S$ and $V\setminus S$ would be contiguous in an $s$-arborescence $T$ (i.e. removing $e$ from $T$ gives $S$ and $V\setminus S$ as the two connected components). 
%
%
%
This is not true for a directed graph: While an $s$-arborescence will contain exactly one edge from $S$ to $V\setminus S$, it could contain an arbitrary number of edges in the opposite direction from $V\setminus S$ to $S$, thereby only guaranteeing the contiguity of $V\setminus S$ but not $S$.

One of the main contributions of this paper 
is to provide an algorithm to solve the above problem using $O(\log n)$ maxflow computations. The main idea is to use a centroid-based recursive decomposition of the arborescence, where in each step, we use a set of maxflow calls that can be amortized on the original graph. The minimum cut returned by all these maxflow calls is eventually returned as the $s$-mincut of the graph.

\paragraph{Approximation Algorithms.}
The ideas above also lead to a quadratic time approximation algorithm for edge mincuts. At a high level, if we execute each $(s,t)$-max flow in the sparsifier instead of in the input graph (both in the unbalanced and balanced settings, with care), then we obtain a $\tO(n^2/\eps^2)$ time $\epsmore$-approximation algorithm instead.

A similar approach can be taken for approximate vertex mincuts. Our partial sparsification technique reduces the graph to $\tO(nk/\epsilon^{2})$ edges while maintaining the vertex $s$-mincut (with some additional adjustments for vertex mincuts). For large $k$, we similarly run $(s,t)$-max flow between $\tO(n/k)$ pairs of vertices on the sparsifier. For small $k$, we design a new local cut algorithm from $\tO(n/k)$ seeds each of which takes $\tO(k^{3}/\epsilon^{2})$ time. 
This local algorithm is inspired by local algorithms for unweighted graphs~\cite{forster+20}, and speeds up the running time by a factor of $k$ by leveraging the structure of our sparsifiers (beyond sparsity). We finally obtain $\tO(n^{2}/\epsilon^{2})$ running time by balancing the two cases and calling the max flow algorithm by \cite{BrandLLSSSW21}.

\paragraph{Summary.}
To summarize, our algorithms distinguish between balanced and unbalanced mincuts, solving the former using maxflows on randomly sampled terminals. For unbalanced mincuts, we follow a two-step template. In the first step, we employ partial sparsification to preserve the values of unbalanced cuts approximately, while suppressing balanced cuts using an overlay. In the second step, we design algorithms to find the (edge/vertex) mincut among unbalanced cuts. For edge mincut, the sparsifier allows one to quickly obtain an arborescence that 1-respects the directed mincut. From this arboresence, we obtain the exact minimum cut in $O(\log n)$ max flows via a centroid-based recursive decomposition. For vertex mincut, we develop new local flow algorithms to identify small unbalanced cuts in the sparsified graph.

\eat{
\paragraph{Summary.}
To summarize, our main insights are to use sparsifiers with additional overlays and contractions for the unbalanced case, and to find the mincut that 1-respects an arborescence via maxlow and centroid-based recursive decomposition. 
Overall, we run the algorithms for both the balanced and unbalanced cases, and return the minimum cut returned by either algorithm. For the balanced case, we need $\tO(\frac{n}{k})$ maxflows. For the unbalanced case, we need $\tO(1)$ maxflows in a graph containing $O(nk)$ edges. All other steps run in $\tO(m)$ time. Balancing these gives $k = \max(m^{1/3}, \sqrt{n})$, thereby resulting in \Cref{thm:main}.

\paragraph{Summary (new).}\kent{Here I have tried to extend the summary to capture both the edge and vertex connectivity algorithms.}  At a high level, both the edge connectivity and vertex connectivity algorithms leverage the insight that when the minimum cuts are unbalanced, one can combine random sampling with contractions and other graph modifications to produce a smaller sparsifierq that approximately preserves the minimum cut. The sparsifier for edge connectivity allows one to quickly obtain an arborescence that 1-respects the minimum directed cut, from which the exact minimum cut can be obtained in $O(\log n)$ max flows via the new centroid-based recursive decomposition. Meanwhile, the sparsifier for vertex connectivity permits a substantially faster local algorithm for vertex cuts.  In the opposite regime, where the minimum cut is balanced,  the minimum cut can instead be obtained very quickly by randomly sampling a source and sink from opposite sides of the minimum cut, and running $(s,t)$-max flow.  (The quadratic time approximation algorithms also apply the sparsifiers in the balanced setting before running max flow.) This combination of approaches -- exploiting a tradeoff between better sparsifiers for more unbalanced minimum cuts with fewer source-sink samples for more balanced minimum cuts -- leads to the improved running time for directed minimum cut.

\kent{I suggest dropping the following paragraph, relying on a somewhat similar discussion in Section 2. Somewhere up above I placed a sentence saying: "If instead we execute each $(s,t)$-max flow in appropriately sparsified graphs (both in the unbalanced and balanced settings, with care), then we obtain am $\tO(n^2/\eps^2)$ time algorithm instead."}

\danupon{Did we hide $\log U$ anywhere (except when we use \cite{GaoLP21})?}

}


\eat{

\paragraph{Related Work.}
The mincut problem has been studied in directed graphs over several decades. Early work focused on unweighted graphs, starting with the work of Even and Tarjan in 1975~\cite{EvenT75}, followed by improvements by Schnorr~\cite{Schnorr79}, and eventually resulting in an $O(n\cdot \min(m, \lambda^2 n))$-time algorithm due to Mansour and Schieber~\cite{MansourS89}. This was matched in weighted graphs by Hao and Orlin~\cite{HaoO92}, whose result we improve in this paper. 

Note that the $\tO(mn)$ bound was recently improved in \emph{unweighted} directed graphs for both edge and vertex mincut problems. Chekuri and Quanrud  \cite{cq-simple-connectivity} give a $\tO(n^2)$-time algorithm for finding directed edge mincuts.\footnote{In fact, the algorithm works on graphs with integer capacities between $1$ and $U$ and takes $\tO(n^2 U)$ time} A key idea in this work is a \emph{deterministic} rooted sparsification technique that inspires some of the ideas in this work.  For directed vertex mincuts, there are algorithms with $\tO(m n^{11/12+o(1)})$ time \cite{li+21} and even  faster algorithms when the mincut size is small and/or when $(1+\epsilon)$-approximation is allowed \cite{NanongkaiSY19,forster+20,cq-simple-connectivity}. The techniques from these recent developments seem to heavily rely on the fact that the input graph is unweighted.


}

\eat{

To overcome these challenges, we adopt several ingredients that we outline below:
\begin{itemize}
    \item We consider two possibilities: either the mincut has $\tOmega(k)$ vertices on the smaller side or fewer (let us call these {\em balanced} and {\em unbalanced} cuts respectively). If the mincut is a balanced cut, we use two random samples of $\tO(\frac{n}{k})$ and $\tO(1)$ vertices each, and find $s-t$ mincuts for all pairs of vertices from the two samples. It is easy to see that whp, the two samples would respectively {\em hit} the smaller and larger sides of the mincut, and hence, one of these $s-t$ mincuts will reveal the overall mincut of the graph.
    \item The main task, then, is to find the mincut when it is unbalanced. In this case, we use a sequence of steps. The first step is to use {\em cut sparsification} of the graph by random sampling of edges. This scales down the size of the mincut, but unlike in an undirected graph, all the cuts of a digraph do not necessarily converge to their expected values in the sample. However, crucially, {\em the mincut can be scaled to $\tO(k)$ while ensuring that all the unbalanced cuts converge to their expected values.}
    \item Since only the unbalanced cuts converge to their expected values, it is possible that some balanced cut is the new mincut of the sampled graph, having been scaled down disproportionately by the random sampling. Our next step is to {\em overlay} this sampled graph with an star, 
    which increases all balanced $s$-cuts to be  sufficiently large, while the unbalanced cuts are only distorted by a small multiplicative factor. 
    \item In the sparsifier, all vertices in the sink side of a potential minimum $s$-cut will have unweighted in-degree at most $O(k)$. Therefore, when $m>nk$, we can contract all vertices with large unweighted in-degree to $s$, in order to decrease the edge size to $O(nk)$.
    \item At this point, we have obtained a graph where the original mincut (which was unbalanced) is a near-mincut of the new graph. Next, we create a (fractional) packing of edge-disjoint arborescences\footnote{An {\em arborescence} is a spanning tree in a directed graph where all the edges are directed away from the root.} in this graph using a multiplicative weights update framework (e.g.,~\cite{Young1995}). By duality, these arborescenes have the following property: if we sample $O(\log n)$ random arborescences from this packing, then there will be at least one arborescence whp such that the original mincut $1$-respects the arborescence. (A cut $1$-respects an arborescence if the latter contains just one edge from the cut.) 
    \item Thus, our task reduces to the following: given an arborescence, find the minimum weight cut in the original graph among all those that $1$-respect the arborescence. Our final technical contribution is to give an algorithm that solves this problem using $O(\log n)$ maxflow computations. For this purpose, we use a centroid-based recursive decomposition of the arborescence, where in each step, we use a set of maxflow calls that can be amortized on the original graph. The minimum cut returned by all these maxflow calls is eventually returned as the mincut of the graph.
\end{itemize}

We note that unlike both the Hao-Orlin algorithm and Gabow's algorithm that are both deterministic algorithms, our algorithm is randomized (Monte Carlo) and might yield the wrong answer with a small (inverse polynomial) probability. Derandomizing our algorithm, or matching our running time bound using a different deterministic algorithm, remains an interesting open problem.
}

\eat{

\paragraph{Previous Work.}\alert{more refs from Kent's manuscript}
The mincut problem has been studied in directed graphs over several decades. For unweighted graphs, Even and Tarjan~\cite{EvenT75} gave an algorithm for this problem that runs in $O(mn \cdot\min(\sqrt{m}, n^{2/3}))$ time. This was improved by Schnorr~\cite{Schnorr79} improved by this bound for certain graphs to $O(mn\lambda)$, where $\lambda$ is the value of the directed mincut. This was further improved by Mansour and Schieber~\cite{MansourS89} to $O(n\cdot \min(m, \lambda^2 n))$ after almost a decade of Schnorr's work. Mansour and Schieber's bound of $O(mn)$ was matched up to logarithmic factors for the more general case of weighted digraphs by Hao and Orlin~\cite{HaoO92}. Finally, Gabow~\cite{Gabow1995} gave an algorithm that runs in $\tO(m\lambda)$ which further refines this bound for graphs with small $\lambda$. These remained the fastest directed mincut algorithms for almost 30 years before our work.

\paragraph{Concurrent Work.}
Two recent results on algorithms for finding mincuts in directed graphs were obtained concurrently and independently of our work. First, Chekuri and Quanrud \cite{ChekuriQ21}  showed an exact algorithm with running time $\tO(n^2 U)$ if edge weights are integers between $1$ and $U$.
Second, Quanrud \cite{Quanrud21} has obtained an $(1+\epsilon)$-approximate algorithm that runs in $\tilde{O}(n^2/\epsilon^2)$ time.\footnote{Quanrud can also obtain $o(mn)$-time algorithms using the currently fastest maxflow algorithm on sparse graphs by Gao, Liu, and Peng \cite{GaoLP21}. We can also obtain the same time but for exact mincuts.} 
Both papers also extend their ideas to obtain approximation results for other problems as well, such as the vertex mincut problem.
}

\section{Minimum Cut Algorithms in Edge-weighted Directed Graphs}
\label{sec:exact-edge}

Given a directed graph $G = (V, E)$ with non-negative edge weights $w$ and a fixed root vertex $s$, we consider the problem of finding an $s$-mincut.
An \emph{$s$-arborescence} is a directed spanning tree rooted at $s$ such that all edges are directed away from $s$.
For simplicity, we assume that all edge weights $w$ are integers and are polynomially bounded.  We denote $\overline{U}=V\setminus U$.  Let $\outcut{U}$ be the set of edges from $U$ to $\overline{U}$, $\incut{U}$ be the set of edges from $\overline{U}$ to $U$, and let $\outcutsize(U)$ and $\incutsize(U)$ be the weight of the cut, i.e., $\outcutsize(U)=\sum_{e\in \outcut{U}}w(e), \incutsize{U}=\sum_{e\in \incut{U}}w(e)$.  Our goal is to compute the minimum cut $(\optSource, \optSink)$ where $\root \in \optSource = \arg\min_{s\in U \subset V} \outcutsize{U}$ and $\optSink = \overline{\optSource}$.
Let $F(m,n)$ denote the time complexity of $s$-$t$ maximum flow on a digraph with $n$ vertices and $m$ edges. The current record for this bound is $F(m, n) = \tO(m + n^{3/2})$~\cite{BrandLLSSSW21}. We emphasize that our directed mincut algorithm uses maxflow subroutines in a black box manner and therefore, any maxflow algorithm suffices. Correspondingly, we express our running times in terms of $F(m, n)$.


\begin{theorem}
  \label{thm:main-primitive}
  There is a Monte Carlo algorithm that finds a minimum $s$-cut whp in $\tO(\min\{mk, nk^2\}+F(m,n)\frac{n}{k})$ time, where $k$ is a parameter and $F(m,n)$ is the time complexity of $s$-$t$ maximum flow.
\end{theorem}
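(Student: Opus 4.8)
The plan is to treat the optimal cut $(\optSource,\optSink)$ according to whether it is \emph{balanced} ($|\optSink| > k$) or \emph{unbalanced} ($|\optSink| \le k$), run a dedicated subroutine for each case, and output the cheaper of the two cuts found; since both subroutines succeed whp this yields a Monte Carlo algorithm. For the balanced case, I would sample a uniformly random set $R$ of $\Theta((n/k)\log n)$ vertices, which intersects $\optSink$ whp, and then compute a minimum $\root$-$t$ cut (one maxflow call) for every $t \in R$, returning the smallest. One of these calls separates $\root$ from a vertex of $\optSink$ and so returns a cut no larger than $(\optSource,\optSink)$, while no $\root$-$t$ cut can beat the global $\root$-mincut; this costs $\tO((n/k)F(m,n))$ time. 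The remaining work is therefore the unbalanced case, handled by the two-step template from the introduction.

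\textbf{Step 1 (partial sparsification).} After guessing $\lambda$ up to a factor of two (an $O(\log n)$ overhead), randomly keep each edge with probability $\approx \tO(k)/\lambda$ with compensating reweighting, producing a graph of mincut value $\tO(k)$. An unbalanced cut is determined by its at most $k$ sink vertices, so there are at most $n^{\tO(k)}$ of them; a Chernoff bound plus a union bound then show that every unbalanced cut — in particular $(\optSource,\optSink)$ — retains its value up to a $(1 \pm \eps)$ factor. Balanced cuts need not, so I would \emph{overlay} a star out of $\root$ with suitably chosen edge weights: a balanced cut is crossed by $\tOmega(k)$ star edges and so is inflated past the value of $(\optSource,\optSink)$, whereas $(\optSource,\optSink)$ is crossed by at most $|\optSink| \le k$ star edges and is essentially unchanged. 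In the resulting graph $H$ the cut $(\optSource,\optSink)$ is then a $(1+\eps)$-approximate $\root$-mincut and $\lambda(H) = \tO(k)$; consequently every vertex of $\optSink$ has at most $\tO(k)$ incoming edges in $H$ ($\tO(k)$ from across the cut, plus at most $k$ from within $\optSink$), so contracting into $\root$ every vertex whose \emph{unweighted} in-degree exceeds $\tO(k)$ preserves $(\optSource,\optSink)$ and leaves $H$ with $\min\{m, \tO(nk)\}$ edges (we perform this contraction only when $m > nk$).

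\textbf{Step 2 (arborescence packing and 1-respecting cuts).} Using a multiplicative-weights routine whose oracle computes a minimum-weight $\root$-arborescence, I would build a near-maximum fractional packing of edge-disjoint $\root$-arborescences in $H$, running at most $\tO(k)$ iterations (which suffices since $\lambda(H) = \tO(k)$); each iteration costs $\tO(\min\{m,nk\})$, hence $\tO(\min\{mk, nk^2\})$ in total. By LP duality the packing value is $\approx \lambda(H)$, so the average number of packing-arborescence edges crossing the near-optimal cut $(\optSource,\optSink)$ is $\approx 1$; hence among $O(\log n)$ arborescences sampled from the packing, whp at least one arborescence $T$ is crossed by \emph{exactly one} edge from $\optSource$ to $\optSink$, i.e.\ $(\optSource,\optSink)$ \emph{$1$-respects} $T$. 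For each sampled $T$ I would then compute, \emph{in the original graph $G$}, the minimum-weight $\root$-cut among all $\root$-cuts that $1$-respect $T$, and return the best of these over the $O(\log n)$ trees; since $(\optSource,\optSink)$ is one of them and it is the global $\root$-mincut, the result is a minimum $\root$-cut of $G$ — note that the sparsification only affects which arborescence is produced, so the final answer stays exact and $\eps$ may be a fixed constant. The $1$-respecting search uses a centroid-based recursive decomposition of $T$: at each of the $O(\log n)$ recursion levels a batch of $\root$-$t$ maxflow computations, with appropriate edge contractions, pins down which edge of $T$ the target cut uses, and these batches amortize to $\tO(F(m,n))$ per level, hence $\tO(F(m,n))$ per tree, which is dominated by the balanced-case cost. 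Summing the two cases gives the claimed $\tO(\min\{mk, nk^2\} + F(m,n)\,n/k)$ bound.

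\textbf{Main obstacle.} The hard part is the final $1$-respecting cut computation. In an undirected graph a spanning tree $1$-respected by the mincut separates into exactly the two cut sides once the crossing edge is deleted, which is what makes Karger-style dynamic programming over the tree work; but in a digraph an $\root$-arborescence $1$-respected by $(\optSource,\optSink)$ can still contain arbitrarily many edges directed from $\optSink$ back to $\optSource$, so only $\optSink$ — not $\optSource$ — is guaranteed contiguous and the tree DP breaks down. Designing the centroid decomposition so that the maxflow batches at each level can be amortized on $G$ is the key new ingredient. A secondary technical point is calibrating the star weights and the in-degree threshold in Step 1 so that every balanced cut is provably pushed above $(\optSource,\optSink)$ while $(\optSource,\optSink)$ itself moves by only a $(1 \pm \eps)$ factor.
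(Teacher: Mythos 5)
Your proposal follows the paper's route for \Cref{thm:main-primitive} essentially step for step: the case split on $|\optSink|$ at threshold $k$, the $\tO(n/k)$ sampled $s$-$t$ maxflows for the balanced case, the partial sparsifier (randomized discretization of edge weights to bring the mincut down to $\tO(k)$, a weighted star out of $\root$ to pad the concentration error, and contraction of high-in-degree vertices into $\root$ to reach $\tO(\min\{m,nk\})$ edges), the MWU arborescence packing with a minimum-cost $\root$-arborescence oracle and the Edmonds/Gabow duality to guarantee a 1-respecting arborescence among $O(\log n)$ samples, and the final exact search over 1-respecting cuts. You also correctly make the key observation that the sparsifier is used only to \emph{steer} which arborescence gets packed, so the final cut value is read off with true weights and $\eps$ can stay a fixed constant; the running-time tally $\tO(\min\{mk,nk^2\}+F(m,n)\,n/k)$ matches. (Two very minor deviations: the paper rounds each weight to the nearest multiple of a discretization parameter $\tau$ rather than keeping edges with a probability, and the 1-respecting search runs on the graph after contracting $V\setminus V_0$ into $\root$ rather than literally on $G$, but since $\optSink\subseteq V_0$ and $T$ must span, neither of these changes the argument.)

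Where the proposal stops short of a proof is exactly the 1-respecting cut subroutine, which you yourself flag as the obstacle. ``A batch of $s$-$t$ maxflow computations with appropriate edge contractions ... amortize to $\tO(F(m,n))$ per level'' is a plan, not an argument, and it is not a priori clear that such amortization is possible: paying one maxflow per centroid gives $\Theta(n)$ maxflows at the leaf level, which is far too slow. The paper's resolution (Algorithm~\ref{alg:tree}) is to run a \emph{single} maxflow per centroid-decomposition level $i$ on an auxiliary graph $G_i$ built as follows: keep every intra-subtree edge; replace every cross-subtree edge $(u,v)$ with an edge $(s,v)$ of the same capacity; and attach each centroid $u\in C_i$ to one fresh super-sink $t_i$ by an infinite-capacity edge. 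The rerouting makes the subtrees of $P_i$ flow-disjoint, so one $s$-$t_i$ maxflow simultaneously computes, for each subtree $U$ with centroid $u$, the minimum $\overline{U}$-to-$u$ cut in the original weights, read off as the flow on $(u,t_i)$. Correctness then rests on two structural facts about a 1-respected arborescence: (i) $\overline{S}$ is connected in $T$ (the 1-crossing guarantee forces every lowest common ancestor of two $\overline{S}$-vertices into $\overline{S}$), and hence (ii) at the first level where some centroid falls in $\overline{S}$, all of $\overline{S}$ sits inside a single subtree $U$ and the $\overline{U}$-to-$u$ problem has $(\optSource,\optSink)$ as a feasible cut. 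Without the $G_i$ construction and these lemmas, the $\tO(F(m,n))$ per-tree bound — and therefore the theorem — is not actually established by the proposal.
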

This section is devoted to prove \Cref{thm:main-primitive}.  If we set $k=m^{1/3}+n^{1/2}$ and use the $\tO(m+n^{3/2})$ max-flow algorithm, the time complexity becomes $\tO(nm^{2/3}+n^2)$, which establishes \Cref{thm:main}.  If we assume an hypothetical $\tO(m)$-time max-flow algorithm, then our result becomes $\tO(\min\{nm^{2/3},mn^{1/2}\})$ for $k=\min\{m^{1/3},n^{1/2}\}$.

We obtain \Cref{thm:main-primitive} via a new $s$-mincut algorithm. The algorithm considers the following two cases, computing a $s$-cut for each case and returning the minimum as its final output. The cases are split on $|\optSink|$ by a threshold $k>0$.


\begin{enumerate}
\item The first case aims to compute the correct mincut in the event that $|{\optSink}| > k$. In this case, if we randomly sample $\sink \in V$, then with probability at least $1/k$, $\sink \in \optSink$. Then $\optSink$ can be obtained via the maxflow from $s$ to $t$. Repeating the sampling $O(\frac{n}{k}\log n)$ times, we obtain the minimum $\root$-cut whp. The total running time for this case is $O(F(m,n)\frac{n}{k}\log n)$.
\eat{and is formalized in Lemma \ref{lem:size-balanced} below:
  \begin{lemma}
    \label{lem:size-balanced}
    Let $k \in \mathbb{N}$ be a given parameter.  If $|\optSink| > k$, then whp a mincut can be calculated in time $O(F(m,n) \cdot k \cdot \log n)$.
  \end{lemma}
  \begin{proof}
    Uniformly sample a list of $k = d \cdot (n/r) \cdot \lg n$ vertices $u_1,\ldots,u_k$, where $d$ is a large constant. Wlog, assume $|S^*| \le |\overline{S^*}|$, and let $\eta=\frac{|S^*|}{n}>\frac{r}{n}$.  With probability at least $1-2(1-\eta)^k\ge 1-2e^{-k\eta}\ge 1-2n^{-d}$, the list $u_1,\ldots,u_k$ contains at least one vertex from each of $S^*$ and $\overline{S^*}$. Hence, there exists $i$ such that $u_i$ and $u_{i+1}$ are on different sides of the $(S^*, \overline{S^*})$ cut.  By calculating maxflows for all ($u_i$, $u_{i+1}$) and ($u_{i+1}$, $u_i$) pairs, and reporting the smallest $s-t$ mincut in these calls, we return a global mincut whp.
  \end{proof}
  }
\item The second case is for the the event that $|{\optSink}| \le k$.
  Let $\lambda$ denote the value of the minimum rooted cut. By enumerating $O(\log n)$ powers of $2$, we can obtain an estimate $\varec$ such that $\ec \leq \varec \leq 2\ec$. For each value of $\varec$,  we apply \Cref{sparsification} to sparsify the graph in the following manner. First, \Cref{sparsification} returns a set of vertices $\sparseV \subseteq V$ such that $\root \in \sparseV$ and $\optSink \subseteq \sparseV$ whp. In particular, one can safely contract any vertex $v \in V \setminus \sparseV$ into $\root$ without affecting the minimum $\root$-cut. We contract $G$ accordingly and, overloading notation, let $G$ denote the contracted graph with vertex set $V_0$ henceforth. Second, \Cref{sparsification} returns a graph $\sparseG = (\sparseV, \sparseE)$ in which $\optSink$ still induces an $\epsmore$-approximate $s$-mincut, but the weight of the cut is now reduced to $\bigO{k \log{n}}$. We note that $\sparseG$ is not necessarily a subgraph of $G$. We then invoke  \Cref{thm:tree-packing} from \Cref{sec:arborescence} to fractionally pack an approximately maximum amount of $O(k \log n)$ $s$-arboresences in $\sparseG$ in $\apxO{m+\min{mk, nk^2}}$ time. In a random sample of $\bigO{\log n}$ $s$-arboresences from this packing, one of them will $1$-respect the $s$-mincut in $G$ (for appropriate $\varec$) whp:
  \begin{definition}
  \label{def:arborescence}
    A directed $s$-cut $(S, V \setminus S)$ \emph{$k$-respects} an $s$-arborescence if there are at most $k$ edges in the arborescence from $S$ to $V \setminus S$.
  \end{definition}
  Finally, for each of the $O(\log n)$ $s$-arborescences, the algorithm computes the minimum $\root$-cut that $1$-respects each arborescence. This algorithm is described in Algorithm~\ref{alg:tree} and proved in \Cref{thm:tree-alg} from \Cref{sec:maxflows}. It runs in $O((F(m,n) + m)\cdot \log n)$ time for each of the $O(\log n)$ arborescences.
\end{enumerate}
Combining both cases, the total running time becomes $\tilde O(\min\{mk, nk^2\} + F(m,n)\frac{n}{k})$, which establishes \Cref{thm:main-primitive}.

\begin{krq}
  \paragraph{Fast approximations.}
  The exact algorithm described above can be modified to produce a
  randomized $\apxO{n^2/\eps^2}$-time approximation algorithm that
  computes a $\epsmore$-approximate minimum $\root$-cut (hence also the
  global cut).  With logarithmic overhead, we can obtain a parameter
  $k$ such that $k / 2 \leq \sizeof{\optSink} \leq k$, where
  $\incut{\optSink}$ is the minimum  $\root$-cut. We then follow the
  same steps as in the exact algorithm, except whenever we compute the
  max-flow, we compute it in the sparsifier produced by
  \Cref{sparsification} instead. Since the sparsifier has at
  most $\apxO{nk / \eps^2}$ edges, we obtain a running time of the
  form
  \begin{math}
    \apxO{\min{nk^2 / \eps^2} + F(nk / \eps^2,n) (n/k)}.
  \end{math}
  For $F(m,n) = \apxO{m + n^{1.5}}$, this gives a running time of $\apxO{n^2 / \eps^2}$.

  \begin{theorem}
    For $\eps \in (0,1)$, an $\epsmore$-approximate minimum $s$-cut  (hence global minimum cut) can be computed in $\tO(n^2 / \eps^2)$ time whp.
  \end{theorem}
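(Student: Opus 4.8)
The plan is to run the algorithm of \Cref{thm:main-primitive} essentially unchanged, but to perform every maximum-flow computation on the partial sparsifier produced by \Cref{sparsification} rather than on the input graph; since that sparsifier has $\apxO{nk/\eps^2}$ edges instead of $m$, the $(\root,\sink)$-flows become cheap enough. As in the exact case it suffices to compute a $\epsmore$-approximate minimum $\root$-cut, because the global minimum cut is the smaller of the minimum $\root$-cuts of $G$ and of its edge-reversal. Write $(\optSource,\optSink)$ for a minimum $\root$-cut and $\ec=\incutsize{\optSink}$ for its value. First I would remove the need to know $\sizeof{\optSink}$: run the whole procedure once for each $k\in\setof{1,2,4,\dots,n}$ and each dyadic guess $\varec$ of $\ec$, always recording the true $G$-weight of the vertex set that a subroutine identifies, and return the smallest recorded value. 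Every recorded value is the $G$-weight of a genuine $\root$-cut (the cut certified by an $(\root,\sink)$-flow, or the cut found by a $1$-respecting-cut routine), hence at least $\ec$; and for the ``correct'' pair we have $k/2\le\sizeof{\optSink}\le k$ and $\varec\in[\ec,2\ec]$, so it is enough that that one pair produce a value at most $\epsmore\ec$. This costs only an $\bigO{\log^{2} n}$ overhead.

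Fix $k$ with $k/2\le\sizeof{\optSink}\le k$. By \Cref{sparsification}, whp we obtain a vertex set $\sparseV\supseteq\optSink\cup\setof{\root}$ --- so that contracting $V\setminus\sparseV$ into $\root$ leaves the minimum $\root$-cut unchanged --- together with a graph $\sparseG=(\sparseV,\sparseE)$ with $\sizeof{\sparseE}=\apxO{nk/\eps^2}$ in which $\optSink$ still induces a $\epsmore$-approximate minimum $\root$-cut of $\sparseG$, of weight $\apxO{k}$; moreover every ``unbalanced'' $\root$-cut (sink side of size $\apxO{k}$) has its rescaled $\sparseG$-weight within a $\epspm$ factor of its true $G$-weight, while every ``balanced'' $\root$-cut has $\sparseG$-weight far above $\apxO{k}$ because of the overlaid star. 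This last point is exactly what legitimises running the subroutines on $\sparseG$: whenever a maximum-flow or $1$-respecting-cut routine returns a cut $U$ from $\sparseG$, $U$ minimises $\sparseG$-weight over its family; since $\optSink$ lies in that family and certifies $\sparseG$-weight $\apxO{k}$, the returned $U$ has small $\sparseG$-weight and so is unbalanced, whence its true $G$-weight is at most $\epsmore$ times its rescaled $\sparseG$-weight, i.e. at most $\epsmore^{2}\ec$. Rescaling $\eps$ turns $U$ into a $\epsmore$-approximate $\root$-cut of $G$.

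It remains to produce, inside $\sparseG$, a cut in the same family as $\optSink$ whose $\sparseG$-weight is no larger, and here I would split on $k$. If $k\le\sqrt n$, mimic the unbalanced branch of \Cref{thm:main-primitive}: use \Cref{thm:tree-packing} to fractionally pack $\bigO{k\log n}$ $\root$-arborescences of $\sparseG$ in $\apxO{\min{\sizeof{\sparseE}k,\,nk^{2}}}=\apxO{nk^{2}}$ time, sample $\bigO{\log n}$ of them so that whp one is $1$-respected by $\optSink$ \emph{in $\sparseG$}, and for each run Algorithm~\ref{alg:tree} (\Cref{thm:tree-alg}) on $\sparseG$ to find the minimum $\root$-cut of $\sparseG$ that $1$-respects it, spending $\bigO{\log n}$ maximum flows in $\sparseG$ per arborescence, i.e.\ $\apxO{F(nk/\eps^2,n)}$ in all. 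If $k>\sqrt n$, mimic the balanced branch: sample $\bigO{(n/k)\log n}$ vertices $\sink$, so that whp some $\sink\in\optSink$ (using $\sizeof{\optSink}\ge k/2$), and compute each $(\root,\sink)$ maximum flow in $\sparseG$; since $\optSource$ is itself a valid $(\root,\sink)$-cut of $\sparseG$-weight $\apxO{k}$, the flow returns a small, hence unbalanced, cut. With $F(m,n)=\apxO{m+n^{3/2}}$ the first regime costs $\apxO{nk^{2}+nk/\eps^2+n^{3/2}}=\apxO{n^{2}/\eps^2}$ and the second $\apxO{(n/k)\parof{nk/\eps^2+n^{3/2}}}=\apxO{n^{2}/\eps^2+n^{5/2}/k}=\apxO{n^{2}/\eps^2}$; adding the $\apxO{m}$ sparsification cost and the $\bigO{\log^{2} n}$ guesses leaves the total at $\apxO{n^{2}/\eps^2}$, and a union bound over all randomized steps gives correctness whp.

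The step I expect to be the real obstacle is not the running-time accounting but justifying the replacement of $G$ by $\sparseG$ inside the subroutines: one has to check that the arborescence-packing duality of \Cref{thm:tree-packing} and the $1$-respecting-cut guarantee of \Cref{thm:tree-alg} are stated for an arbitrary digraph, so that they apply verbatim to $\sparseG$; that the sampled arborescence is $1$-respected by $\optSink$ with respect to $\sparseG$, not $G$; and, most delicately, that every cut a subroutine returns from $\sparseG$ is itself unbalanced, so that the \emph{one-sided} sparsifier guarantee (which only relates unbalanced cuts of $\sparseG$ to those of $G$) is enough to carry its weight back to $G$ within a $\epspm$ factor. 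Once that is settled, handling the contraction of $V\setminus\sparseV$ into $\root$ and the edge-reversal that turns the rooted algorithm into a global one is routine.
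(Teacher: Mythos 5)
Your approach matches the paper's (quite terse) proof: guess $k\in[\sizeof{\optSink},2\sizeof{\optSink}]$ with logarithmic overhead, then run the exact pipeline of \Cref{thm:main-primitive} but perform every maxflow on the partial sparsifier from \Cref{sparsification} rather than on $G$; the $\apxO{nk/\eps^2}$ edge bound together with $F(m,n)=\apxO{m+n^{3/2}}$ then gives $\apxO{n^2/\eps^2}$ after balancing the two regimes at $k\approx\sqrt n$. So the route is essentially the same, and you fill in more detail than the paper does.

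Two small remarks on the places you flag as delicate. First, the ``one-sided'' worry is not actually needed: property~(c) of \reflemma{apx-rec-skeleton} (equivalently property~(e) of \Cref{sparsification}) already states that \emph{every} $\alpha$-approximate minimum $\root$-cut in the sparsifier induces a $(1+O(\eps))\alpha$-approximate minimum $\root$-cut in $G$, with no balance hypothesis. This is because the star overlay forces $w_3(\Sink)\ge\epsless w_1(\Sink)$ for \emph{all} $\Sink$ — the only one-sidedness in the sparsifier is in the \emph{other} direction (upper bounds on $w_3$ degrade additively with $\sizeof{\Sink}$, which is why you only control it for unbalanced cuts). So your detour through ``the returned cut must be unbalanced'' is correct but unnecessary; you can apply the lemma directly to whatever cut the subroutine returns. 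Second, you are right that \Cref{thm:tree-alg} is stated for $S$ an exact $s$-mincut, but the proof of that theorem in fact only uses $(S,\Sbar)$ to upper-bound the candidate value at the relevant centroid; running it on $\sparseG$ with $S$ merely a $\epsmore$-approximate $\sparseG$-mincut therefore returns a cut of $\sparseG$-weight at most $\outcutsize{S}{\sparseG}$, which is what you need — worth saying explicitly, but it goes through.
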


  We remark that $\apxO{n^2 / \eps^2}$ can also be obtained by using a local connectivity algorithm similar to the approach for vertex mincuts in \refsection{apx-vc}, instead of via an arboresence packing. See \cite{Quanrud21} for details.
\end{krq}

\paragraph{Organization of the rest of this section.}
The following subsections present each step of the algorithm described above. First, we establish the partial sparsification subroutine in \Cref{sec:sparsification}. Next, in \Cref{sec:arborescence}, we obtain an arborescence packing from which sampling yields an arborescence that is $1$-respected by the mincut. Finally, in \Cref{sec:maxflows}, we describe the algorithm to retrieve the mincut among those that $1$-respect a given arborescence.

\subsection{Partial Sparsification}
\label{sec:sparsification}


\labelsection{rec-sparsification}

\renewcommand{\refstep}[1]{\ref{step:#1}}%

This section aims to reduce mincut value to $\tO(k)$ and
edge size to $\min\{m,O(nk\log{n}/\epsilon^2\})$ while keeping $\outcut{\optSource}$ a
$(1+\epsilon)$-approximate $s$-mincut for a constant $\epsilon > 0$ that
we will fix later.  Our algorithm in this stage has three
steps. First, we use random sampling to discretize and scale down the
expected value of all cuts such that the expected value of the mincut
$\outcutsize(\optSource)$ becomes $\tO(k)$. We also claim that $\outcut(\optSource)$
remains an approximate mincut {\em among all unbalanced cuts} by using
standard concentration inequalities. However, since the number of
balanced cuts far exceeds that of unbalanced cuts, it might be the
case that some balanced cut has now become much smaller in weight than
all the unbalanced cuts. This would violate the requirement that
$\outcut(\optSource)$ should be an approximate mincut in this new
graph. This is where we need our second step, where we overlay a star
on the sampled graph to raise the values of all balanced $s$-cuts
above the expected value of $\outcut(\optSource)$ while only increasing the
value of $\outcut(\optSource)$ by a small factor. The third step leverages
the fact that, after scaling, the minimum edge weight is $1$, and the
minimum cut is $\bigO{k \log{n} / \eps^2}$. It follows that any vertex
with in-degree at least a constant factor greater than
$k \log{n} / \eps^2$ cannot be in the sink component, and can be
safely contracted into the root without affecting the $s$-mincut.

The first two steps described above are implemented in the next lemma,
whose proof appears in \Cref{sec:proof skeleton}.


\begin{lemma}
  \labellemma{apx-rec-skeleton} Let $G = (V,E)$ be a directed graph
  with positive edge weights. Let $\root \in V$ be a fixed root
  vertex. Let $\eps \in (0,1)$, let $\ec > 0$, and let
  $k \in \naturalnumbers$ be given parameters. Suppose there is an
   $\root$-mincut of the form $\incut{\optSink}$, where $\root \not\in T^*$,
  $\ec/2 \leq \incutsize{\optSink}{G} \leq 2 \ec$ and
  $\sizeof{\optSink} \leq k$.  In randomized nearly linear time, one can
  compute a randomized directed and reweighted subgraph
  $\sparseG = (V, \sparseE)$, where $\sparseV \subseteq V$ and
  $\root \in \sparseV$ with the following properties.
  \begin{properties}
  \item \label{rec-skeleton-small-cut} $\sparseG$ has integral edge weights and the minimum $\root$-cut has weight at most
    $\bigO{k \log{n} / \eps^2}$.
  \item \label{rec-skeleton-min-cut} $\incut{\optSink}{G_0}$ is a
    $\epsmore$-approximate minimum $\root$-cut in $G_0$.
  \item \label{rec-skeleton-apx-cuts} Every $\alpha$-approximate
    minimum $\root$-cut in $G_0$ induces an $\epsmore \alpha$-approximate
    minimum $\root$-cut in $G$.
  \end{properties}
\end{lemma}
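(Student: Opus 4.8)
The plan is to obtain $\sparseG$ from $G$ by two randomized operations and then to read off properties (i)--(iii) from a single concentration statement. Choose a sampling rate $p \in (0,1]$ so that the expected post-sampling weight of the mincut, $p\,\incutsize{\optSink}{G}$, equals $\bigTheta{k\log n/\eps^2}$; if this would force $p \geq 1$ then the minimum $\root$-cut of $G$ is already $\bigO{k\log n/\eps^2}$ and we may simply take $\sparseG = G$. Otherwise, form $G_1$ by replacing each edge $e$ of weight $w(e)$ by an edge of weight $\mathrm{Bin}(w(e),p)$; this keeps weights integral and multiplies every cut's expectation by $p$. Then overlay a \emph{star}: add an edge $(\root,v)$ of integral weight $W = \bigTheta{\log n/\eps^2}$ for every $v \neq \root$, and let $\sparseG$ be the result. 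Since $\root \notin U$ for every $\root$-cut, the star contributes exactly $W\sizeof{U}$ to $\incutsize{U}{\sparseG}$, so $\incutsize{U}{\sparseG} = \incutsize{U}{G_1} + W\sizeof{U}$. Both operations run in nearly linear time.

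The heart of the argument is the claim that, with high probability, \emph{every} $\root$-cut $\incut{U}$ satisfies $\incutsize{U}{\sparseG} \geq (1-\eps)\,p\,\incutsize{U}{G}$; that, moreover, $\incutsize{U}{\sparseG} \leq (1+\eps)\,p\,\incutsize{U}{G} + W\sizeof{U}$ whenever $\sizeof{U} \leq b$ for a balance threshold $b = \bigTheta{k/\eps}$; and that $\incutsize{U}{\sparseG} > Wb$ whenever $\sizeof{U} > b$. For the lower bound I would distinguish cuts that are \emph{sparse relative to their size}, meaning $\incutsize{U}{G} < c\,\sizeof{U}\log n/p$ for a suitable constant $c = \bigTheta{1/\eps^2}$. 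A non-sparse cut has $p\,\incutsize{U}{G} \geq c\,\sizeof{U}\log n$, so a Chernoff bound gives $\incutsize{U}{G_1} \geq (1-\eps)\,p\,\incutsize{U}{G}$ except with probability $n^{-2\sizeof{U}}$ (taking $c$ large), which survives a union bound over the $\leq n^{\sizeof{U}}$ cuts of each sink size, summed over all sizes; a sparse cut instead gets $\incutsize{U}{\sparseG} \geq W\sizeof{U} \geq (1-\eps)\,p\,\incutsize{U}{G}$ directly from the star, since $W \geq c\log n$. The upper bound for cuts with $\sizeof{U}\le b$ is the matching Chernoff (or, for sparse cuts, crude) estimate on $\incutsize{U}{G_1}$ plus the $W\sizeof{U}$ star term, and the bound $>Wb$ for $\sizeof{U}>b$ is immediate from the star alone. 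The delicate point---and the main obstacle---is that these requirements be simultaneously satisfiable: the star must be heavy enough that $Wb$ exceeds the sampled mincut $\bigTheta{p\,\incutsize{\optSink}{G}}$, yet light enough that $W\sizeof{\optSink}\le Wk$ is only an $\bigO{\eps}$ fraction of $p\,\incutsize{\optSink}{G}$, which works out precisely because $b \geq k/\eps$; this three-way balance is what pins down the $\bigTheta{k\log n/\eps^2}$ target for the mincut value. (A single sampling round as above loses an extra factor $1/\eps$; to hit the stated bound cleanly one performs the sparsification recursively in $\bigO{\log\ec}$ rounds, each halving the current mincut and overlaying a star tuned to that scale with a per-round error budget $\bigO{\eps/\log\ec}$, so that the accumulated distortion is $1+\eps$ and each round stays nearly linear.)

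Granting the claim, the three properties follow quickly. For (i): $\sparseG$ has integral weights by construction, and its minimum $\root$-cut weight is at most $\incutsize{\optSink}{\sparseG} \leq (1+\eps)\,p\,\incutsize{\optSink}{G} + Wk = \bigO{k\log n/\eps^2}$. For (ii): every cut with $\sizeof{U}\le b$ has $\incutsize{U}{\sparseG} \geq (1-\eps)\,p\,\incutsize{U}{G} \geq (1-\eps)\,p\,\incutsize{\optSink}{G}$ since $\incut{\optSink}$ is a mincut of $G$, and every cut with $\sizeof{U}>b$ has $\incutsize{U}{\sparseG} > Wb \geq (1-\eps)\,p\,\incutsize{\optSink}{G}$ by the parameter choice; combined with $\incutsize{\optSink}{\sparseG} \leq (1+\bigO{\eps})\,p\,\incutsize{\optSink}{G}$ this makes $\incut{\optSink}$ a $(1+\bigO{\eps})$-approximate mincut of $\sparseG$, and we rescale $\eps$ by a constant at the outset. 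For (iii): if $\incut{U}$ is an $\alpha$-approximate mincut of $\sparseG$, then using the universal lower bound $\incutsize{U}{\sparseG} \geq (1-\eps)\,p\,\incutsize{U}{G}$ and $\mathrm{mincut}(\sparseG) \leq \incutsize{\optSink}{\sparseG} \leq (1+\bigO{\eps})\,p\,\incutsize{\optSink}{G}$ we get $p\,\incutsize{U}{G} \leq \incutsize{U}{\sparseG}/(1-\eps) \leq \alpha\cdot\mathrm{mincut}(\sparseG)/(1-\eps) \leq (1+\bigO{\eps})\,\alpha\,p\,\incutsize{\optSink}{G}$; dividing by $p$ shows $\incut{U}$ is a $(1+\bigO{\eps})\alpha$-approximate mincut of $G$, and once more we rescale $\eps$.
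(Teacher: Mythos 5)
Your high-level plan---randomized downscaling of edge weights followed by a star overlay rooted at $\root$ to pad the balanced cuts---is exactly the paper's strategy, and your derivation of properties (i)--(iii) from the claimed concentration statement is sound. However, the concentration argument has an internal parameter conflict that you yourself notice (``loses an extra factor $1/\eps$'') and do not resolve. The issue is this: you split cuts into sparse and non-sparse by the threshold $\incutsize{U}{G} < c\,\sizeof{U}\log n / p$ with $c = \bigTheta{1/\eps^2}$, because pure multiplicative Chernoff needs $\eps^2 \mu = \Omega(\sizeof{U}\log n)$ to survive the $n^{\sizeof{U}}$-way union bound. To handle the sparse side via the star alone you then need $W \geq (1-\eps)c\log n = \bigTheta{\log n/\eps^2}$. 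But your ``three-way balance'' correctly forces $W\sizeof{\optSink} \leq Wk$ to be an $\bigO{\eps}$-fraction of the sampled mincut $p\,\incutsize{\optSink}{G} = \bigTheta{k\log n/\eps^2}$, which requires $W \leq \bigO{\log n/\eps}$. These two requirements on $W$ are off by a factor of $1/\eps$ and cannot both hold; as written, $Wk$ is the \emph{same} order as the sampled mincut, so the inequality $\incutsize{\optSink}{\sparseG} \leq (1+\bigO{\eps})\,p\,\incutsize{\optSink}{G}$ in your proof of (ii) does not follow, and $\optSink$ is only a $(1+\bigTheta{1})$-approximate mincut in $\sparseG$.

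The paper sidesteps the dichotomy entirely by using a mixed multiplicative--additive tail (stated in a footnote): for independent terms in $[0,\tau]$ with expected sum $\mu$, one has $\probof{\text{sum} \leq (1-\eps)\mu - \gamma} \leq e^{-\eps\gamma/\tau}$, which follows from the standard Chernoff lower tail $e^{-t^2/2\mu}$ together with $(\eps\mu + \gamma)^2 \geq 2\eps\mu\gamma$. Rounding each weight to the nearest multiple of $\tau$ (rather than Binomial sampling) keeps the per-term range at $\tau$ uniformly, so with $\tau = \bigTheta{\eps^2\ec/(k\log n)}$ and $\gamma = \eps\ec\sizeof{T}/(2k)$ the failure probability is $n^{-\Omega(\sizeof{T})}$ for \emph{every} cut simultaneously, with no sparse/non-sparse case split. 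The star weight is then chosen to equal $\gamma/\sizeof{T} = \eps\ec/(2k)$ (which is $\bigTheta{\log n/\eps}$ after dividing by $\tau$), so it exactly cancels the additive Chernoff slack in the lower bound while only contributing $\eps\ec\sizeof{\optSink}/k \leq \eps\ec \leq 2\eps\,\incutsize{\optSink}{G}$ to the mincut's upper bound. Your proposed recursive repair (halving the mincut over $\bigO{\log\ec}$ rounds with per-round error budget $\eps/\log\ec$) is not developed and would, by the same threshold arithmetic, still need $W = \bigTheta{\log n (\log\ec)^2/\eps^2}$ per round, so it is unclear that the recursion converges to the stated $\bigO{k\log n/\eps^2}$ mincut rather than accumulating the lost factor. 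Switching the concentration inequality to the mixed form (and, to keep per-term ranges bounded, switching from $\mathrm{Bin}(w,p)$ sampling to rounding to multiples of $\tau$) gives the lemma as stated in a single round.
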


The preceding lemma importantly allows us to reduce the weight of the minimum cut to roughly $k$, where $k$ is the number of the vertices in the sink component. However it has not actually reduced the size in the graph, in terms of the number of edges. This is accomplished by our third step, which we formalize in  the following lemma that reduces the number of edges to $\tO(nk)$. 

\begin{lemma}
  \label{thm:sparsification}    %
  \label{sparsification}        %
  \labellemma{apx-rec-sparsification} Let $G = (V,E)$ be a directed
  graph with positive edge weights.  Let $\root \in V$ be a fixed root
  vertex. Let $\eps \in (0,1)$, $\ec > 0$, and
  $k \in \naturalnumbers$ be given parameters.  Suppose there is a
  minimum $\root$-cut is of the form $\incut{\optSink}$, where
  $\ec/2 \leq \incutsize{\optSink}{G} \leq 2 \ec$ and $\sizeof{\optSink} \leq k$.

  In randomized nearly linear time, one can compute a randomized directed and edge-weighted graph $\sparseG = (\sparseV, \sparseE)$, where $\sparseV \subseteq V$ and $\root \in \sparseV$.
  \begin{properties}
  \item $\sparseG$ has integral edge weights and the $s$-mincut in $\sparseG$ has weight at most $\bigO{k \log{n} / \eps^2}$.
  \item $\sparseG$ has at most
    $\sizeof{\sparseE} = \min{m, \bigO{nk \log{n} / \eps^2}}$ edges.
  \item $\sparseG$ is a subgraph of the graph obtained by contracting
    $V \setminus V_0$ into $\root$ in $G$.
  \item We have $\optSink \subseteq V_0$, and $\incut{\optSink}{G_0}$
    is an $\epsmore$-approximate minimum $\root$-cut in $G_0$.
  \item Every $\epsmore$-approximate minimum $\root$-cut in $G_0$ induces
    an $\epsmore^2$-approximate minimum $\root$-cut in $G$.
  \end{properties}
\end{lemma}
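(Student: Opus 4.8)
The plan is to derive \reflemma{apx-rec-sparsification} from \reflemma{apx-rec-skeleton} by appending the third, edge-count-reducing step. First I would run \reflemma{apx-rec-skeleton} with the same parameters to obtain a graph $H$ on vertex set $V$ with $\root \in V$ satisfying its three conclusions; after merging parallel edges by summing weights, $H$ is integer-weighted, its minimum $\root$-cut has value $\Lambda = \bigO{k \log{n}/\eps^2}$, the cut $\incut{\optSink}$ is an $\epsmore$-approximate minimum $\root$-cut in $H$, and every $\alpha$-approximate minimum $\root$-cut in $H$ induces an $\epsmore\alpha$-approximate minimum $\root$-cut in $G$. It then remains to shrink $\sizeof{E(H)}$ to $\bigO{nk\log{n}/\eps^2}$.

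The heart of the argument is the claim that any vertex $v$ on the sink side of an $\root$-cut of $H$ of value $\bigO{\Lambda}$ has at most $\bigO{k\log{n}/\eps^2}$ distinct in-neighbours in $H$. To prove it, split the in-edges of $v$ into (a) those leaving the source side, which all cross the cut and hence---every edge weight being an integer $\geq 1$---number at most the cut value, i.e.\ $\bigO{\Lambda}$; and (b) those leaving the sink side, which number at most $\sizeof{\text{sink side}}-1$. For the specific cut $\incut{\optSink}$ the hypothesis $\sizeof{\optSink} \leq k$ bounds term (b) by $k-1$; for a minimum $\root$-cut with sink side $\overline{U}$, the star overlay installed by \reflemma{apx-rec-skeleton}---an edge of weight $\bigTheta{\log{n}/\eps^2}$ from $\root$ to each vertex---forces $\Lambda \geq \incutsize{\overline{U}}{H} = \bigOmega{(\log{n}/\eps^2)\,\sizeof{\overline{U}}}$, so $\sizeof{\overline{U}} = \bigO{k}$. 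Either way term (b) is $\bigO{k}$, proving the claim.

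Given the claim I would fix a constant $c$ dominating the hidden constants, set $\sparseV := \{\root\} \cup \{v \neq \root : \deg^-_H(v) \leq ck\log{n}/\eps^2\}$, contract $V \setminus \sparseV$ into $\root$ in $H$, and discard the resulting self-loops and all edges entering $\root$ (neither changes the value of any $\root$-cut); call the result $\sparseG$. By the claim $\optSink \subseteq \sparseV$ and the sink side of every minimum $\root$-cut of $H$ lies in $\sparseV$, so each such cut survives the contraction unchanged in value, while conversely every $\root$-cut of $\sparseG$ lifts to an $\root$-cut of $H$ of equal value; hence $\sparseG$ and $H$ have the same minimum-$\root$-cut value $\bigO{k\log{n}/\eps^2}$ with integral weights (property~(i)). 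Every surviving edge enters a vertex of $\sparseV \setminus \{\root\}$, whose in-degree cannot increase under the contraction, so $\sizeof{\sparseE} \leq (n-1)\,ck\log{n}/\eps^2 = \bigO{nk\log{n}/\eps^2}$; as $\sparseG$ is also an edge-subgraph of $H$, which has at most $m+n-1$ edges, and $m \geq n-1$ because the minimum $\root$-cut is positive, this gives $\sizeof{\sparseE} = \min{m, \bigO{nk\log{n}/\eps^2}}$ (property~(ii)). Property~(iii) holds because we only contract or delete edges, never add them, and $H$ is---up to reweighting and the overlay, whose edges all leave $\root$---a subgraph of $G$. For property~(iv), $\optSink \subseteq \sparseV$ is shown above and $\incutsize{\optSink}{\sparseG} = \incutsize{\optSink}{H} \leq \epsmore \Lambda$, which is $\epsmore$ times the minimum-$\root$-cut value of $\sparseG$, by conclusion (ii) of \reflemma{apx-rec-skeleton}. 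For property~(v), an $\epsmore$-approximate minimum $\root$-cut of $\sparseG$ lifts to an $\root$-cut of $H$ of equal value, which---$\sparseG$ and $H$ having the same minimum value---is $\epsmore$-approximate in $H$, hence induces an $\epsmore^2$-approximate minimum $\root$-cut of $G$ by conclusion (iii) of \reflemma{apx-rec-skeleton}. All steps run in nearly linear time: \reflemma{apx-rec-skeleton} does, and computing in-degrees, contracting, and deleting self-loops and parallel edges costs $\bigO{m+n}$.

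The crux I expect to be the in-degree claim together with the approximation-factor bookkeeping: making property~(iv) come out at exactly $\epsmore$ and property~(v) at exactly $\epsmore^2$---rather than accumulating a spurious extra $\epsmore$---relies on calibrating the overlay weight in \reflemma{apx-rec-skeleton} against the threshold $c$ so that no $\root$-cut of value within even an $O(1)$ factor of the minimum can place a contracted vertex on its sink side, which is exactly what keeps the minimum-cut value invariant under the contraction. A secondary nuisance is reconciling property~(iii) with the overlay, handled by noting that all overlay edges leave $\root$ and therefore become self-loops or survive as at most $n-1$ edges out of $\root$ after the contraction.
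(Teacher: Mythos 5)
Your proof follows the same approach as the paper's: apply \reflemma{apx-rec-skeleton}, bound the unweighted in-degree of every $v \in \optSink$ by $\bigO{k\log n/\eps^2}$ (at most $\bigO{k\log n/\eps^2}$ cross-cut edges, since the cut value is bounded and every weight is an integer $\geq 1$, plus at most $k-1$ edges from within $\optSink$), and contract vertices of larger in-degree into $\root$. Both proofs then finish by observing that such contractions cannot decrease $\root$-mincuts and that $\incut{\optSink}$ is preserved verbatim.

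The one place you diverge is that you also argue the $\root$-mincut value of $G_0$ \emph{equals} that of $G_1$, by claiming the sink side of every minimum $\root$-cut of $G_1$ already avoids the contracted vertices. This extra step is not in the paper's proof and is not necessary: as the paper hand-waves, working only with ``contractions do not decrease the mincut'' already gives properties (iv) and (v) up to a constant rescaling of $\eps$, which the paper freely allows. Your stronger statement also leaves the black-box interface of \reflemma{apx-rec-skeleton}: it relies on the internal fact that the skeleton construction includes a star overlay from $\root$, which is not part of that lemma's stated conclusions. Finally, there is a small arithmetic slip in this extra step: after the rescaling by $\tau = \Theta(\eps^2\lambda/(k\log n))$, each overlay edge has weight $\Theta(\log n/\eps)$ (not $\Theta(\log n/\eps^2)$), so the bound you obtain is $\sizeof{\overline{U}} = \bigO{k/\eps}$ rather than $\bigO{k}$. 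This does not hurt the argument — $\bigO{k/\eps}$ still is dominated by the $\bigO{k\log n/\eps^2}$ cross-cut term when bounding in-degrees — but the claim as you wrote it isn't quite what the construction gives you.
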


\begin{proof}
  Consider the reweighted subgraph produced by
  \reflemma{apx-rec-skeleton}, which we denote by $G_1 = (V, E_1)$. We
  claim that every vertex $v \in \optSink$ has \emph{unweighted}
  in-degree at most $\bigO{k \log{n} / \eps^2}$. Indeed, at most $k-1$
  of these edges are from other vertices in $T$, and the remaining
  edges must be in $\incut{\optSink}{G_1}$. But
  $\incut{\optSink}{G_1}$ has at most
  $\incutsize{\optSink}{G_1} = \bigO{k \log{n} / \eps^2}$ edges by
  properties \ref{rec-skeleton-small-cut} and
  \ref{rec-skeleton-min-cut} of \reflemma{apx-rec-skeleton}.

  Let $G_0 = (V_0, E_0)$ be the graph obtained from $G_1$ by
  contracting all vertices with unweighted in-degree
  $\geq c k \log{n} / \eps^2$ for a sufficiently large constant $c$
  that excludes all vertices in $\optSink$.  It is easy to see that
  $G_0$ satisfies the claimed properties, particularly as contractions
  into $\root$ do not decrease the $s$-mincuts, and
  $\incut{\optSink}{G_0}$ is preserved exactly as in $G_1$.
\end{proof}

\subsection{Finding a 1-respecting Arborescence}
\label{sec:arborescence}
In this section, we assume that there is an unbalanced $\root$-mincut and show how to obtain an $\root$-arborescence that 1-respects the $\root$-mincut.  More formally, we prove the following:

\begin{lemma}
  \label{thm:tree-packing}
  Given weighted digraph $G$ and a fixed root vertex $\root$, suppose the sink side of an $\root$-mincut $\optSink$ has at most $k$ vertices. In $O(m \log n +\min\{mk\log^2 n,nk^2\log^3 n\})$ time, we can find $O(\log n)$ $\root$-arborescences on vertex set $V_0\supset \optSink$, such that whp an $s$-mincut 1-respects at least one of them.
\end{lemma}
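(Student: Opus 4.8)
The plan is to instantiate, for this lemma, the two-step template of \Cref{sec:exact-edge}: first pass to a sparsifier via \Cref{thm:sparsification}, then fractionally pack $\root$-arborescences in the sparsifier and sample $O(\log n)$ of them. Fix a small constant $\eps = 1/12$, and assume we are given an estimate $\varec$ with $\ec \le \varec \le 2\ec$ of the $\root$-mincut value (we are: the overall algorithm enumerates the $O(\log n)$ powers of two and applies this lemma for each). Invoking \Cref{thm:sparsification} with $\eps$, $\varec$, $k$ yields, in randomized near-linear time, a reweighted digraph $\sparseG = (\sparseV,\sparseE)$ with integral weights $w_0$, with $\sizeof{\sparseE} = \min\{m, \bigO{nk\log n}\}$, with $\root \in \sparseV$ and $\optSink \subseteq \sparseV$, with $\root$-mincut value $\ec_0 \le \bigO{k\log n}$, and such that $\incut{\optSink}{\sparseG}$ is an $\epsmore$-approximate $\root$-mincut of $\sparseG$; in particular $\incutsize{\optSink}{\sparseG} \le \epsmore\, \ec_0$.

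Next I would pack arborescences in $\sparseG$. By the weighted (fractional) form of Edmonds' arborescence packing theorem, the maximum total value of a family of $\root$-arborescences in which each edge $e$ carries total value at most $w_0(e)$ equals $\min\{\incutsize{U}{\sparseG} : \emptyset \neq U \subseteq \sparseV\setminus\{\root\}\} = \ec_0$. I would compute such a family up to a $(1-\eps)$ factor using the width-based multiplicative-weights packing framework (see, e.g.,~\cite{Young1995}), whose oracle is a minimum-cost $\root$-arborescence computation, solvable in near-linear time. Since $w_0 \geq 1$ on every edge and an arborescence uses each edge at most once, the width is $O(1)$, so $\tO(\ec_0) = \tO(k)$ oracle calls suffice; with $\sizeof{\sparseE} = \min\{m, \tO(nk)\}$ and the near-linear sparsification, the running time is $O(m\log n + \min\{mk\log^2 n, nk^2\log^3 n\})$. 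The output is a list $T_1,\dots,T_N$ of $\root$-arborescences with $N = \bigO{k\log n}$ together with weights $y_1,\dots,y_N \ge 0$ satisfying $\sum_i y_i \ge (1-\eps)\ec_0$ and $\sum_{i\,:\,e\in T_i} y_i \le w_0(e)$ for every $e \in \sparseE$; in particular $\sum_i y_i \le \ec_0$.

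The decisive step is then an averaging argument over the cut $\optSink$. Let $C \subseteq \sparseE$ be the set of edges of $\sparseG$ with head in $\optSink$, so that $\sum_{e\in C} w_0(e) = \incutsize{\optSink}{\sparseG}$. Since $\optSink$ is nonempty and $\root \notin \optSink$, and each $T_i$ spans $\sparseV \supseteq \optSink$, the $\root$-to-$v$ path in $T_i$ for any $v\in\optSink$ must use an edge of $C$; let $x_i \geq 1$ be the number of edges of $T_i$ in $C$. Summing the capacity constraint over $C$,
\[
  \sum_{i=1}^{N} y_i\, x_i \;=\; \sum_{e\in C}\,\sum_{i\,:\,e\in T_i} y_i \;\le\; \sum_{e\in C} w_0(e) \;=\; \incutsize{\optSink}{\sparseG} \;\le\; \epsmore\,\ec_0 .
\]
Subtracting $\sum_i y_i \ge (1-\eps)\ec_0$ and using $x_i - 1 \ge 0$, with $x_i - 1 \ge 1$ when $x_i \ge 2$, gives $\sum_{i\,:\,x_i\ge 2} y_i \le \sum_i y_i(x_i-1) \le 2\eps\,\ec_0$, so $\sum_{i\,:\,x_i=1} y_i \ge (1-3\eps)\ec_0$. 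As $\sum_i y_i \le \ec_0$, sampling an index $i$ with probability $y_i/\sum_j y_j$ produces, with probability at least $1-3\eps \ge 3/4$, an arborescence $T_i$ with exactly one edge into $\optSink$ --- equivalently, one that is $1$-respected by the $\root$-cut with sink side $\optSink$. (Being $1$-respected is a property of the edge set of $T_i$ --- all of which lies in $\sparseG$, a subgraph of the contraction of $G$ into $\root$ --- and of the bipartition at $\optSink$, so the same $T_i$ is $1$-respected by the $\root$-mincut $\incut{\optSink}$ of $G$.) Drawing $O(\log n)$ indices independently boosts the success probability to $1 - n^{-\Omega(1)}$; the corresponding $O(\log n)$ arborescences, all on $\sparseV \supseteq \optSink$, are the output, and they are fed to the $1$-respecting-cut algorithm of \Cref{sec:maxflows} (\Cref{thm:tree-alg}).

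I expect the packing step to be the main obstacle. It relies on: (i) the right duality --- the weighted Edmonds/Fulkerson arborescence packing theorem --- so that a packing of value $\approx \ec_0$ exists at all; (ii) a fast minimum-cost-arborescence oracle together with a width/iteration-count bound giving only $\tO(k)$ oracle calls on a graph of $\min\{m,\tO(nk)\}$ edges, which is precisely where the $\min\{mk, nk^2\}$ term (and its logarithmic factors) originates; and (iii) checking that the packing is returned as polynomially many, $\bigO{k\log n}$, distinct arborescences, so that the final sampling is efficient. The sparsification is imported verbatim from \Cref{thm:sparsification} and the averaging argument is elementary, so (i)--(iii) are the only substantive points.
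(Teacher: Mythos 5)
Your proposal matches the paper's proof essentially step for step: sparsify via \Cref{thm:sparsification}, use Gabow's/Edmonds' arborescence-packing--mincut duality, fractionally pack $\root$-arborescences with the Young MWU framework using a near-linear minimum-cost-arborescence oracle (paper cites~\cite{Gabow1986}), and sample $O(\log n)$ arborescences from the resulting distribution. The only cosmetic difference is in the last step: you derive the $\ge 1-3\eps$ success probability by subtracting $\sum_i y_i \ge (1-\eps)\ec_0$ from $\sum_i y_i x_i \le (1+\eps)\ec_0$, whereas the paper bounds the expected number of crossing edges by $(1+\eps)^2$ and applies Markov's inequality to $x_i - 1 \ge 0$; these are the same averaging argument in two guises. (One trivial bookkeeping slip: with $\lambda_0 = \Theta(k\log n)$ the MWU iteration count is $O(\lambda_0 \log m) = O(k\log^2 n)$, not $O(k\log n)$ as you write, but this is absorbed into the claimed running time, which you state correctly.)
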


The idea of this lemma is as follows. First, we apply \Cref{sparsification} to our graph $G$ and obtain the graph $G_0$. Whp, a minimum $s$-cut $\incut{\optSink}$ in $G$ corresponds to a $(1+\epsilon)$-approximate minimum $\root$-cut in $G_0$. It remains to find an arborescence in $G_0$ that 1-respects $\incut{\optSink}$. To do this, we employ a multiplicative weight update (MWU) framework. The algorithm begins by setting all edge weights to be uniform (say, weight $1$). Then, we repeat the following for $O(k\log(n)/\epsilon^{2})$ rounds: in each round, we find a minimum weight arborescence in $\tO(m)$ time and multiplicatively increase the weight of every edge in the arborescence.
Using the fact that there is no duality gap between arborescence packing and mincut \cite{Edmonds73,Gabow1995} (see \Cref{lem:duality}), a standard MWU analysis implies that these arborescences that we found, after scaling, form a $(1+\epsilon)$-approximately optimal fractional arborescence packing. So our arborescence crosses $\optSink$ at most $(1+O(\epsilon))<2$ times on average. Thus, if we sample $O(\log n)$ arborescences from this set, one of them will 1-respect $\optSink$ whp.\footnote{This should be compared with Karger's mincut algorithm in the undirected case, where there is a factor $2$ gap, and hence Karger can only guarantee a $2$-respecting tree in the undirected case.} 
To obtain the running time bound, we note that each iteration of the MWU framework requires us to find the minimum cost $s$-arborescence, for which an $\tO(m)$-time algorithm is known~\cite{Gabow1986}.

Since the argument above is a standard application of the MWU framework, we give the detailed proof in \Cref{sec:packarb_details}.

\subsection{Mincut Given 1-respecting Arborescence}
\label{sec:maxflows}

We propose an algorithm (Algorithm~\ref{alg:tree}) that uses $O(\log n)$ maxflow subroutines to find the minimum $s$-cut  that $1$-respects a given $s$-arborescence.
The result is formally stated in Theorem \ref{thm:tree-alg}.

\begin{theorem} \label{thm:tree-alg}
Consider a directed graph $G = (V, E)$ with polynomially bounded edge weights $w_e > 0$.
Let $s\in V$ be a fixed root vertex and $S\ni s$ be the source side of a fixed $s$-mincut.
Given an $s$-arborescence $T$ with $|T \cap \outcut{S}| = 1$, Algorithm~\ref{alg:tree} outputs a $s$-mincut of $G$ in time $O((F(m,n) + m)\cdot \log n)$.
\end{theorem}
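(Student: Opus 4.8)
The plan is to reduce, in two stages, the problem of minimizing over \emph{all} $s$-cuts that $1$-respect $T$ to $O(\log n)$ maxflow computations on graphs with $O(m+n)$ edges.

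\emph{Stage 1: an exact reduction when the crossing edge is fixed.} Fix a tree edge $e=(\mathrm{parent}_T(v),v)$ and let $T_v$ be the subtree of $T$ hanging below $v$. I would first establish the structural claim from the introduction: if an $s$-cut $(\bar R,R)$ has $e$ as its \emph{unique} tree edge directed from $\bar R$ into $R$, then $R$ is exactly a connected subtree of $T$ rooted at $v$, i.e.\ $v\in R\subseteq T_v$ and $R$ is closed under taking $T$-parents inside $T_v$. (Every vertex of $R$ is reached from $s$ through $e$, so $R\subseteq T_v$; and if $R$ omitted the $T$-parent of some $x\in R\setminus\{v\}$, the path from $v$ to $x$ would contain a second crossing tree edge.) Conversely, any such $R$ gives a cut whose only crossing tree edge is $e$. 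Hence the minimum-weight $1$-respecting cut with crossing edge $e$ equals the minimum $\sigma$-to-$v$ cut in the graph $G^{(e)}$ obtained from $G$ by contracting $V\setminus T_v$ into a single source vertex $\sigma$ and raising the capacity of every tree edge lying inside $T_v$ to $+\infty$: the $\infty$-edges enforce precisely the ``closed under $T$-parents'' condition, since the sink side of any finite $\sigma$-$v$ cut is upward-closed toward $v$, hence a connected subtree rooted at $v$. Since the given $s$-mincut $(S,V\setminus S)$ with $|T\cap\partial^+(S)|=1$ is one such cut (and a cut $1$-respecting $T$ with nonempty sink side has exactly one crossing tree edge), the minimum of the values over all $n-1$ choices of $e$ recovers the $s$-mincut exactly.

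\emph{Stage 2: batching the $n-1$ flows into $O(\log n)$ rounds.} I would organize the $n-1$ calls via a centroid decomposition of $T$. A subproblem is a pair $(T_a,\sigma_a)$: a subtree of $T$ rooted at some $a$ promised to contain the head $v$ of the sought crossing edge, together with the graph in which $V\setminus T_a$ has been contracted to $\sigma_a$ and parallel edges merged (so it has $O(|E(G[T_a])|+|T_a|)$ edges). Given $(T_a,\sigma_a)$, pick a centroid $c$ of $T_a$, run \emph{one} maxflow to compute the best $1$-respecting cut whose crossing edge enters $c$ (Stage 1 with $v=c$, inside this smaller graph), and recurse on the components of $T_a\setminus\{c\}$ — the ``upper'' piece containing $a$ and the subtrees below each child of $c$ — each of size at most $|T_a|/2$. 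The recursion has depth $O(\log n)$; at each fixed depth the active subtrees are vertex-disjoint, so the maxflow instances there have $O(m+n)$ edges and $O(n)$ vertices in total, and since maxflow running time is superadditive in the relevant regime ($\sum_i F(m_i,n_i)\le F(\sum_i m_i,\sum_i n_i)\le F(O(m+n),O(n))$ for the bounds we use), each depth costs $O(F(m,n)+m)$ including the $O(m)$ bookkeeping to build the contracted graphs. Over $O(\log n)$ depths this is $O((F(m,n)+m)\log n)$, and we return the minimum-weight cut encountered.

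\emph{Main obstacle.} The delicate point is the ``upper piece'' case: when the head $v$ is a proper ancestor of the centroid $c$, the sink side $R$ — although rooted inside the upper piece $U=T_a\setminus T_c$ — may itself descend into $T_c$, so contracting all of $T_c$ into the source when recursing on $U$ could destroy the optimal cut, and binary-searching along the $a$-to-$c$ tree path instead would cost an extra logarithmic factor. I expect to resolve this by observing that such an $R$ decomposes as $(R\cap U)\sqcup(R\cap T_c)$ where $R\cap T_c$ is a connected subtree rooted at $c$, and by keeping $c$ as a ``soft'' sink when recursing on $U$ — contracting $T_c\setminus\{c\}$ in a way that correctly charges the contribution of edges incident to $R\cap T_c$ through $c$ — so the two pieces decouple into separate min-cut instances handled by the same recursion. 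The remaining ingredients (the structural claim in Stage 1, the centroid depth bound, and the amortization arithmetic) are routine.
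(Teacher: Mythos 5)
Your Stage~1 reduction is correct, but your Stage~2 has a genuine gap that you yourself flag in the ``Main obstacle'' paragraph, and the proposed ``soft sink'' fix does not work as described. The difficulty is inherent in parameterizing the search by the \emph{crossing edge} $e=(p,v)$: after contracting $T_c\setminus\{c\}$ into a soft sink $c$, the contribution of the cut from inside $T_c$ is not a fixed quantity chargeable through $c$ but an optimization over $R\cap T_c$, and the cross-edges between $U=T_a\setminus T_c$ and $T_c\setminus\{c\}$ couple the choice of $R\cap T_c$ with the choice of $R\cap U$, so the two pieces do not decouple into separate min-cut instances. A separate but related problem is that when $T_a$ is itself an upper piece from an earlier level, the rooted subtree $T_c$ in the original arborescence may extend outside $T_a$, so the flow instance for $v=c$ cannot even be formed from the contracted subproblem graph $(T_a,\sigma_a)$, and the per-level amortization $\sum_i (m_i+n_i)=O(m+n)$ that you invoke does not hold as stated.

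The paper avoids all of this by parameterizing the search differently: not by the crossing edge but by a vertex that falls on the sink side. It builds the centroid decomposition top-down starting from $s$, and for each centroid $u$ of a piece $U$ at level $i$ it computes the minimum cut from $\overline{U}$ to $u$ in $G$ (with $\overline{U}$ contracted into $s$; no infinite-capacity tree edges, and all centroids at one level handled by a single flow to a super-sink $t_i$). The undirected connectivity of $\overline{S}$ in $T$ --- the very fact underlying your Stage~1 structural claim --- is used instead to show (Lemma~\ref{lem:tree-barS-inone}) that at the first level where a centroid lands in $\overline{S}$, \emph{all} of $\overline{S}$ lies in that centroid's piece, so $(S,\overline{S})$ is a feasible candidate for that single flow. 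The centroid is never required to be the apex of a subtree that is supposed to contain the sink side, so there is no upper-piece case, and the pieces at each level are vertex-disjoint with $O(m)$ total edges, so the amortization is immediate. Your Stage~1 lemma is true and would suffice if you ran $n-1$ flows; the idea you are missing is that the recursion should track ``which piece contains $\overline{S}$'' (detected by watching when a centroid first lands in $\overline{S}$), not ``which piece contains the head of the crossing edge.''
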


We first give some intuition for Algorithm~\ref{alg:tree}.
Because $s \in S$, if we could find a vertex $t \in \overline{S}$, then computing the $s$-$t$ mincut using one maxflow call would yield a global mincut of $G$.  However, we cannot afford to run one maxflow between $s$ and every other vertex in $G$.
Instead, we carefully partition the vertices into $\ell = O(\log n)$ sets $(C_i)_{i=1}^{\ell}$.
We show that for each $C_i$, we can modify the graph appropriately so that it allows us to (roughly speaking) compute the maximum flow between $s$ and every vertex $c \in C_i$ using one maxflow call.

More specifically, Algorithm~\ref{alg:tree} has two stages. In the first stage, we compute a \emph{centroid decomposition} of $T$. Recall that a centroid of $T$ is a vertex whose removal disconnects $T$ into subtrees with at most $n/2$ vertices. This process is done recursively, starting with the root $s$ of $T$. We let $P_1$ denote the subtrees resulting from the removal of $s$ from $T$. In each subsequent step $i$, we compute the set $C_i$ of the centroids of the subtrees in $P_i$. We then remove the centroids and add the resulting subtrees to $P_{i+1}$. This process continues until no vertices remain.

In the second stage, for each layer $i$, we construct a directed graph $G_i$ and perform one maxflow computation on $G_i$. The maxflow computation on $G_i$ would yield candidate cuts for every vertex in $C_i$, and after computing the appropriate maximum flow across every layer, we output the minimum candidate cut as the minimum cut of $G$. The details are presented in Algorithm~\ref{alg:tree}.

\begin{algorithm}[t]
\caption{Finding an $s$-mincut.}
\label{alg:tree}
\SetKwInOut{Input}{Input}
\SetKwInOut{Output}{Output}
\setcounter{AlgoLine}{0}
\Input{An arborescence $T$ rooted at $s\in S$ such that $S$ 1-respects $T$.}
\smallskip
// Stage I: Build centroid decomposition. \\
\smallskip
Let $C_0 = \{s\}$, $P_1 =$ the set of subtrees obtained by removing $s$ from $T$, and $i = 1$. \\
\While{$P_i\ne\varnothing$}{
Initialize $C_i$ (the centroids of $P_i$) and $P_{i+1}$ as empty sets. \\
\For{each subtree $U\in P_i$}{
Compute the centroid $u$ of $U$ and add it to $C_i$. \\
    Add all subtrees generated by removing $u$ from $U$ to $P_{i+1}$.
    }
Set $\ell = i$ and iterate $i = i+1$.
}
\smallskip
// Stage II: Calculate integrated maximum flow for each layer. \\
\smallskip
\For{$i = 1$ {\bf to} $\ell$}{
Construct a digraph $G_i = (V \cup \{t_i\}, E_1 \cup E_2 \cup E_3)$ as follows (see \Cref{fig:my_label}): \label{alg:step11}\\
    \quad 1) Add edges $E_1 = E \cap \cup_{U \in P_i} (U\times U)$ with capacity equal to their original weight.\label{alg:step12}\\
    \quad 2) Add edges $E_2 = \{(s,v):(u,v)\in E\setminus E_1\}$ with capacity of $(s,v)$ equal to the original weight of $(u,v)$. \\ 
    \quad 3) Add edges $E_3 = \{(u,t_i):u\in C_i\}$ with infinite capacity.\label{alg:step14}\\
Compute the maximum $s$-$t_i$ flow $f^*_i$ in $G_i$. \label{alg:step15} \\
For each component $U\in P_i$ with centroid $u$, the value of $f^*_i$ on edge $(u,t_i)$ is a candidate cut value, and the nodes in $U$ that can reach $u$ in the residue graph is a candidate for $\overline{S}$.
}
Return the smallest candidate cut value and the corresponding $(S, \overline{S})$ as an $s$-mincut.
\end{algorithm}

\begin{figure}
    \centering
    \includegraphics[width=0.75\linewidth]{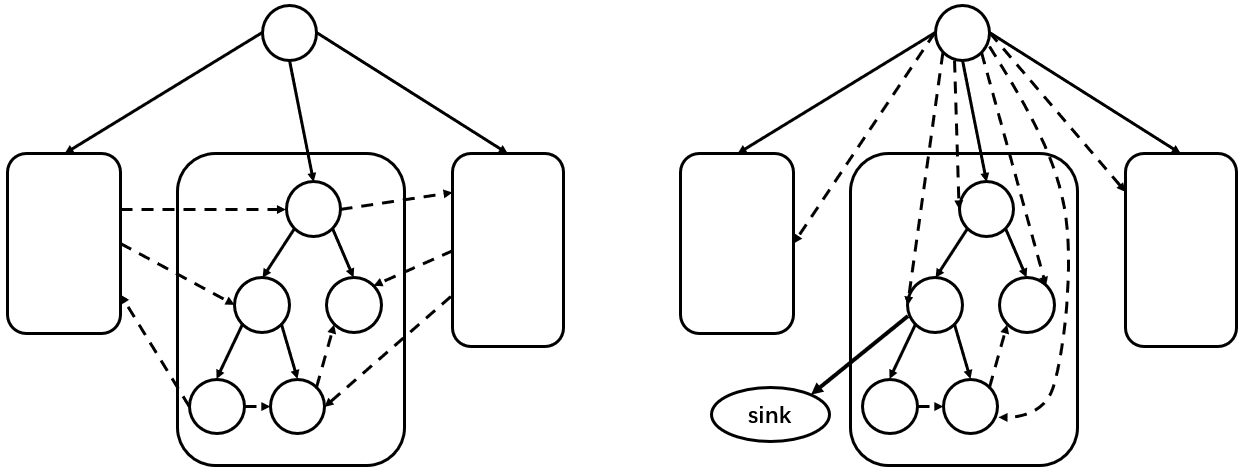}
    \caption{
    Construction of auxiliary graph $G_i$ in Algorithm \ref{alg:tree}.
    Solid lines represent the arborescence $T$.
    Dashed lines are other edges in the graph.
    Rectangles are sets formed by the first level of centroid decomposition.
    Left: The original graph.
    Right: The part of $G_1$ solving the case that the mincut separates root and the centroid of the middle subtree.
    }
    \label{fig:my_label}
\end{figure}

We first state two technical lemmas that we will use to prove Theorem~\ref{thm:tree-alg}.


\begin{lemma} \label{lem:tree-barS-inone}
Recall that $P_i$ is the set of subtrees in layer $i$ and $C_i$ contains the centroid of each subtree in $P_i$.
If $C_{j} \subseteq S$ for every $0 \le j < i$, then $\overline{S}$ is contained in exactly one subtree in $P_i$, and consequently, at most one vertex $u \in C_i$ can be in $\overline{S}$.
\end{lemma}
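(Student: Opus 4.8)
The plan is to induct on the layer index $i$, tracking the subtree of $T$ that contains the sink side $\overline{S}$. First observe that since $S$ $1$-respects $T$, there is exactly one edge of $T$ crossing from $S$ to $\overline{S}$; however there may be many $T$-edges from $\overline{S}$ back to $S$. The crucial structural fact I would isolate is that $\overline{S}$, viewed as a subset of the vertices of $T$, is \emph{connected in the underlying undirected tree}. This follows because $T$ is an arborescence rooted at $s \in S$: every vertex $v \in \overline{S}$ has its tree-path back to $s$, and the first time that path re-enters $S$ it uses the unique crossing edge; hence the portion of every such path lying in $\overline{S}$ is a subtree hanging off the head of the crossing edge. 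Consequently $\overline{S}$ spans a single subtree of $T$ rooted at that head vertex.

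Next I would set up the induction. The base case $i=1$: $C_0 = \{s\}$ and by hypothesis $s \in S$ (it is the root of the $s$-mincut), so removing $C_0 = \{s\}$ from $T$ yields the components $P_1$, and since $\overline{S}$ is a connected subtree not containing $s$, it lies entirely within one component of $T - s$, i.e.\ within exactly one subtree in $P_1$. For the inductive step, suppose $C_j \subseteq S$ for all $0 \le j < i$ and, by induction, $\overline{S}$ is contained in a single subtree $U \in P_{i-1}$. The centroid $u$ of $U$ was added to $C_{i-1}$, which by hypothesis is $\subseteq S$, so $u \in S$. Removing $u$ from $U$ splits it into the subtrees that were added to $P_i$; since $\overline{S} \subseteq U \setminus \{u\}$ and $\overline{S}$ is connected in $T$ (hence connected in $U$), it must fall entirely within one of these pieces, i.e.\ within exactly one subtree in $P_i$. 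This completes the induction, and the final clause follows immediately: each subtree in $P_i$ contributes exactly one centroid to $C_i$, so at most one vertex of $C_i$ — the centroid of the unique subtree containing $\overline{S}$ — can lie in $\overline{S}$.

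The main obstacle is establishing the structural claim that $\overline{S}$ induces a connected subtree of $T$; everything after that is a routine induction. The subtlety is that "$1$-respects" only controls edges directed \emph{out} of $S$, so one must argue carefully using the rooted-arborescence structure (each vertex has a unique parent, and following parent pointers from any $v \in \overline{S}$ stays in $\overline{S}$ until it crosses the unique cut edge into $S$) rather than treating $T$ as an undirected tree where a single cut edge would trivially give two components. I would phrase this as: the head $r$ of the unique crossing edge is an ancestor (in $T$) of every vertex of $\overline{S}$, and the set of $T$-descendants of $r$ that lie in $\overline{S}$ is downward-closed within $\overline{S}$, hence forms a subtree.
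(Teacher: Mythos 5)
Your proof is correct and takes essentially the same approach as the paper: the key structural fact in both is that $\overline{S}$ is connected in $T$, which the paper isolates as Lemma~\ref{lem:tree-barS-path} (proved via the LCA of $x,y$ being in $\overline{S}$, a close cousin of your parent-pointer argument that everything above any $v\in\overline{S}$ up to the head of the unique crossing edge stays in $\overline{S}$). The only difference is presentational: where you carry the invariant forward by induction on the layer $i$, the paper argues by contradiction that a path joining two pieces of $\overline{S}$ would have to pass through a removed centroid that, by connectivity, would lie in $\overline{S}$, contradicting $C_j\subseteq S$.
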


\begin{lemma} \label{lem:tree-fi}
Let $G_i$ be the graph constructed in Step~\ref{alg:step11} of Algorithm~\ref{alg:tree}.
Let $f^*_i$ be a maximum $s$-$t_i$ flow on $G_i$ as in Step~\ref{alg:step15}.
For any $U \in P_i$ with centroid $u$, the amount of flow $f^*_i$ puts on edge $(u, t_i)$ is equal to the value of the minimum cut from $\overline{U}$ to $u$.
\end{lemma}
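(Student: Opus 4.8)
The plan is to exploit the fact that the subtrees in $P_i$ are vertex-disjoint so that the flow network $G_i$ splits into independent pieces — one per subtree — and then to match each piece's maximum-flow value with the claimed minimum cut via an explicit, value-preserving correspondence of cuts.

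\emph{Decomposition.} First I would observe that all $s$-$t_i$ flow that enters a subtree $U \in P_i$ must leave $G_i$ through the single edge $(u,t_i)$ at its centroid $u$: the only $G_i$-edges entering a vertex $v \in U$ from outside $U$ are $E_2$-edges out of $s$; from $v \in U$ the only outgoing edges are $E_1$-edges, which stay inside $U$, together with $(u,t_i)$ in the case $v = u$; and every vertex lying in no $P_i$-subtree other than $s$ itself, i.e.\ every vertex of $C_1 \cup \dots \cup C_{i-1}$, has no outgoing $G_i$-edge and hence carries no flow. Writing $G_i^U$ for the subnetwork on $\{s\} \cup U \cup \{t_i\}$ consisting of the $E_1$-edges inside $U$, the $E_2$-edges whose head lies in $U$, and the edge $(u,t_i)$, it follows that any $s$-$t_i$ flow is a sum of $s$-$u$ flows across the various $G_i^U$, using pairwise-disjoint capacity (the $E_2$-edges serving distinct subtrees have distinct heads). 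Therefore a flow is a maximum $s$-$t_i$ flow in $G_i$ if and only if it is simultaneously a maximum $s$-$u$ flow in each $G_i^U$; in particular, for \emph{every} maximum flow $f^*_i$, the amount it places on $(u,t_i)$ equals the maximum $s$-$u$ flow value in $G_i^U$, which by the max-flow min-cut theorem equals the minimum $s$-$u$ cut value in $G_i^U$.

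\emph{Cut correspondence.} It then remains to identify the minimum $s$-$u$ cut in $G_i^U$ with the minimum cut from $\overline{U}$ to $u$ in $G$. I would set up a value-preserving bijection between the two cut families, both parametrized by a set $B \subseteq U \setminus \{u\}$. On the $G$ side, the cut from $\overline{U}$ to $u$ with source side $\overline{U} \cup B$ has outgoing boundary equal exactly to the edges from $\overline{U}$ to $U \setminus B$ together with the edges from $B$ to $U \setminus B$ (every other edge incident to $\overline{U} \cup B$ has both endpoints on the source side). On the $G_i^U$ side, the $s$-$u$ cut with source side $\{s\} \cup B$ crosses exactly the $E_2$-edges into $U \setminus B$ and the $E_1$-edges from $B$ to $U \setminus B$. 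By the construction of $E_2$, an edge $(a,v)$ of $G$ with $v \in U$ lies in $E \setminus E_1$ precisely when $a \in \overline{U}$, so the total $E_2$-capacity entering $U \setminus B$ is the weight of the $G$-edges from $\overline{U}$ to $U \setminus B$; and the $E_1$-edges from $B$ to $U \setminus B$ are exactly the $G$-edges between those sets, carrying their original weights. Hence the two cut values coincide for each $B$, and taking minima yields the claim.

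\emph{Main obstacle.} The only delicate point is the decomposition step — specifically, ruling out a maximum $s$-$t_i$ flow that diverts capacity away from one centroid to gain at another — and this is exactly where vertex-disjointness of the $P_i$-subtrees and the fact that each $E_2$-edge is ``dedicated'' to a single subtree (its head determines the subtree) get used. Once the decomposition is established, the cut correspondence is a routine accounting of which edges cross.
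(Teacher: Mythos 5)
Your proof is correct and takes essentially the same approach as the paper: decompose the max $s$-$t_i$ flow into independent per-subtree pieces, then identify each piece's subnetwork $G_i^U$ with the graph obtained by contracting $\overline{U}$ into $s$ in $G$, and apply max-flow min-cut. The paper states the contraction step informally while you spell it out as an explicit value-preserving cut bijection parametrized by $B \subseteq U \setminus \{u\}$, but the argument is the same.
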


We defer the proofs of Lemmas~\ref{lem:tree-barS-inone}~and~\ref{lem:tree-fi}, and first use them to prove Theorem~\ref{thm:tree-alg}.

\begin{proof}[Proof of Theorem~\ref{thm:tree-alg}]
We first prove the correctness of Algorithm~\ref{alg:tree}.

Because $C_0 = \{s\}$ and $s \in S$, and the $C_i$'s form a disjoint partition of $V$, there must be a layer $i$ such that for the first time, we have a centroid $u \in C_i$ that belongs to $\overline{S}$.
By Lemma~\ref{lem:tree-barS-inone}, we know that $\overline{S}$ must be contained in exactly one subtree $U \in P_i$, and hence $u$ must be the centroid of $U$.
In summary, we have $u \in \overline{S}$ and $\overline{S} \subseteq U$.

Consider the graph $G_i$ constructed for layer $i$.
By Lemma~\ref{lem:tree-fi}, based on the flow $f^*_i$ puts on the edge $(u, t_i)$, we can recover the value of the minimum cut from $\overline{U}$ to $u$.
Because $\overline{S} \subseteq U$ (or equivalently $\overline{U} \subseteq S$) and $u \in \overline{S}$, the cut $(S, \overline{S})$ is one possible cut that separates $\overline{U}$ and $u$. Therefore, the flow that $f^*_i$ puts on the edge $(u, t_i)$ is equal to the $s$-mincut value in $G$.

In addition, the candidate cut value for any other centroid $u'$ of a subtree $U' \in P_i$ must be at least the mincut value between $s$ and $u'$.
This is because the additional restriction that the cut has to separate $\overline{U'}$ from $u'$ can only make the mincut value larger, and the value of this cut in $G_i$ is equal to the value of the same cut in $G$.
Therefore, the minimum candidate cut value in all $\ell$ layers must be equal to the $s$-mincut value of $G$.

Now we analyze the running time of Algorithm~\ref{alg:tree}. We can find the centroid of an $n$-node tree in time $O(n)$ (see e.g., \cite{Megiddo1981}).
The total number of layers $\ell = O(\log n)$ because removing the centroids reduces the size of the subtrees by at least a factor of $2$. Thus, the running time of Stage I of Algorithm~\ref{alg:tree} is $O(n \log n)$. In Stage II, we can construct each $G_i$ in $O(m)$ time and every $G_i$ has $O(m)$ edges.
Since there are $O(\log n)$ layers and the maximum flow computations take a total of $O(MF(m,n)\cdot \log n)$ time,
the overall runtime is $O(n \log n + (MF(m, n) + m) \log n) = O((MF(m, n) + m) \log n)$.
\end{proof}

Before proving Lemmas~\ref{lem:tree-barS-inone}~and~\ref{lem:tree-fi} we first prove the following lemma.

\begin{lemma} \label{lem:tree-barS-path}
If $x$ and $y$ are vertices in $\overline{S}$, then every vertex on the (undirected) path from $x$ to $y$ in the arborescence $T$ also belongs to $\overline{S}$.
\end{lemma}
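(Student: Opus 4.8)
The plan is to exploit the single hypothesis available here, namely that $T$ contains exactly one edge crossing from $S$ to $\overline{S}$; call it $e^* = (a,b)$, so that $a \in S$ and $b \in \overline{S}$. Let $T_b$ denote the subtree of $T$ rooted at $b$ (that is, $b$ together with all of its $T$-descendants). The whole argument will hinge on two observations: that $\overline{S}$ lies entirely inside $V(T_b)$, and that the lone ``exit'' edge $e^*$ lives strictly above $b$ and hence is not an edge of $T_b$.

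First I would establish $\overline{S} \subseteq V(T_b)$. For any $v \in \overline{S}$, the unique directed $s$-to-$v$ path in the arborescence $T$ starts at $s \in S$ and ends in $\overline{S}$, so it must contain an edge leaving $S$; being a path in $T$, that edge lies in $T \cap \outcut{S} = \{e^*\}$, hence it \emph{is} $e^*$. Consequently the path passes through $b$, i.e. $v$ is a $T$-descendant of $b$ and $v \in V(T_b)$.

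Next, fix $x, y \in \overline{S}$. By the previous step both are descendants of $b$, so their least common ancestor $p$ in $T$ is also in $T_b$, and the entire (undirected) $x$--$y$ path in $T$ lies inside $T_b$: it is the concatenation of the directed $p$-to-$x$ path and the directed $p$-to-$y$ path, each of which consists only of descendants of $p$, hence of $b$. Now let $z$ be any vertex on this path. Without loss of generality $z$ lies on the directed $p$-to-$x$ portion, so $z$ is an ancestor of $x$ inside $T_b$, and the directed $T$-path from $z$ to $x$ is contained in $T_b$. If $z$ were in $S$, that directed path would run from $S$ into $\overline{S}$ (it ends at $x \in \overline{S}$), so it would have to use the unique $S$-to-$\overline{S}$ edge of $T$, namely $e^*=(a,b)$; but $a$ is the parent of $b$, so $e^* \notin E(T_b)$, a contradiction. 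Hence $z \in \overline{S}$, which is the claim.

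The one subtlety I would be careful about is exactly what the hypothesis $|T \cap \outcut{S}| = 1$ does and does not give: it constrains only edges oriented from $S$ into $\overline{S}$, while $T$ may cross back from $\overline{S}$ to $S$ arbitrarily often, so in general $T_b$ is a strict superset of $\overline{S}$ and a naive ``first crossing along the undirected path'' argument fails. Splitting the $x$--$y$ path at the least common ancestor is precisely what converts every potential escape from $\overline{S}$ into a \emph{downward} directed $T$-path that is forced to reuse $e^*$; making that reduction clean is the main (and essentially the only) obstacle. The earlier Lemmas~\ref{lem:tree-barS-inone} and~\ref{lem:tree-fi} are not needed for this statement, which is why it is proved first.
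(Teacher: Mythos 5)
Your proof is correct and uses essentially the same LCA-based mechanism as the paper: both pivot on the least common ancestor of $x$ and $y$ and exploit the uniqueness of the $S\to\overline S$ edge in $T$ to force every downward subpath to stay inside $\overline S$. The only expositional difference is that you first introduce the subtree $T_b$ rooted at the head of the crossing edge and show $\overline S\subseteq V(T_b)$ with $e^*\notin E(T_b)$, whereas the paper argues directly that the LCA must lie in $\overline S$ (otherwise the two edge-disjoint downward paths would yield two crossings) and that the $s$-to-LCA path already consumes the unique crossing; the underlying contradiction is identical, and your ``subtlety'' about backward $\overline S\to S$ edges is exactly what the LCA split is there to neutralize in both versions.
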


\begin{proof}
Consider the lowest common ancestor $z$ of $x$ and $y$.
Because there is a directed path from $z$ to $x$ and a directed path from $z$ to $y$, we must have $z \in \overline{S}$.
Otherwise, there are at least two edges in $T$ that go from $S$ to $\overline{S}$.

Because $s \in S$ and $z \in \overline{S}$, there is already an edge in $T$ (on the path from $s$ to $z$) that goes from $S$ to $\overline{S}$.
Consequently, all other edges in $T$ cannot go from $S$ to $\overline{S}$, which means the entire path from $z$ to $x$ (and similarly $z$ to $y$) must be in $\overline{S}$.
\end{proof}

Recall that Lemma~\ref{lem:tree-barS-inone} states that if all the centroids in previous layers are in $S$, then $\overline{S}$ is contained in exactly one subtree $U$ in the current layer $i$.

\begin{proof}[Proof of Lemma~\ref{lem:tree-barS-inone}]
For contradiction, suppose that there exist distinct subtrees $U_1$ and $U_2$ in $P_i$ and vertices $x,y \in \overline{S}$ such that $x \in U_1$ and $y \in U_2$.

By Lemma~\ref{lem:tree-barS-path}, any vertex on the (undirected) path from $x$ to $y$ also belongs to $\overline{S}$.
Consider the first time that $x$ and $y$ are separated into different subtrees.
This must have happened because some vertex on the path from $x$ to $y$ is removed. However, the set of vertices removed at this point of the algorithm is precisely $\bigcup_{0 \le j < i} C_j$, but our hypothesis assumes that none of them are in $\overline{S}$.
This leads to a contradiction and therefore $\overline{S}$ is contained in exactly one subtree of $P_i$.

It follows immediately that at most one centroid $u \in C_i$ can be in $\overline{S}$.
\end{proof}

Next we prove Lemma~\ref{lem:tree-fi}, which states that the maximum flow between $s$ and $t_i$ in the modified graph $G_i$ allows one to simultaneously compute a candidate mincut value for each vertex $u \in C_i$.

\begin{proof}[Proof of Lemma~\ref{lem:tree-fi}]
First observe that the maxflow computation from $s$ to $t_i$ in $G_i$ can be viewed as multiple independent maxflow computations.
The reason is that, for any two subtrees $U_1, U_2 \in P_i$, there are only edges that go from $s$ into $U_1$ and from $U_1$ to $t_i$ in $G_i$ (similarly for $U_2$), but there are no edges that go between $U_1$ and $U_2$.

The above observation allows us to focus on one subtree $U \in P_i$.
Consider the procedure that we produce $G_i$ from $G$ in Steps~\ref{alg:step12}~to~\ref{alg:step14} of Algorithm~\ref{alg:tree}.
The edges with both ends in $U$ are intact (the edge set $E_1$).
If we contract all vertices out of $U$ into $s$, then all edges that enter $U$ would start from $s$, which is precisely the effect of removing cross-subtree edges and adding the edges in $E_2$.
One final infinity-capacity edge $(u, t_i) \in E_3$ connects the centroid of $U$ to the super sink $t_i$.

Therefore, the maximum $s$-$t_i$ flow $f^*_i$ computes the maximum flow between $\overline{U}$ and $u \in U$ simultaneously for all $U \in P_i$, whose value is reflected on the edge $(u, t_i)$.
It follows from the maxflow mincut theorem that the flow on edge $(u, t_i)$ is equal to the mincut value between $\overline{U}$ and $u$ in $G$ (i.e., the minimum value $w(A, \overline{A})$ among all $A \subset V$ with $\overline{U} \subseteq A$ and $u \in \overline{A}$).
\end{proof}


\begin{krq}

\section{Minimum Cut Algorithms in Vertex-weighted Directed Graphs}

\labelsection{apx-vc}

In this section we present the approximation algorithm for the minimum
rooted and global vertex cut. Similar to \Cref{sec:exact-edge}, the
main focus is on rooted cuts, and the algorithm is presented in three
main parts. All three parts are parameterized by values $\vc > 0$ and
$k \in \naturalnumbers$ that, in principle, are meant to be constant
factor estimates for the weight and the number of
vertices in the sink component of the minimum rooted vertex cut. The
first part, in \Cref{rvc-sparsification}, presents the
sparsification lemma that reduces the number of edges to roughly $nk$
and the rooted mincut to roughly $k$ in a graph with integer
weights. This sparsifier is used in the remaining two parts. The
second part, in \Cref{local-rvc}, gives a roughly $nk^2$ time
approximation algorithm for the minimum rooted cut via a new local
flow algorithm. The third part, in \Cref{sample-rvc}, gives a roughly
$n^2 + n^{2.5} / k$ time approximation algorithm via sampling and
$(\root,\sink)$-flow (as with minimum edge cuts before). Finally, in
\Cref{rvc-all-together}, we balance terms to obtain the claimed
running time for rooted cut. The rooted vertex mincut
algorithm then leads to a global vertex mincut algorithm via an
argument due to \cite{hrg-00} (with some modifications).

\subsection{Partial Sparsification}\label{rvc-sparsification}
The first part is a sparsification lemma that preserves rooted vertex
cuts where the number of vertices in the sink component is below some
given parameter.  It is similar in spirit to
\Cref{thm:sparsification}, but with some necessary changes as we are
now preserving the vertex mincut rather than edge
mincut.   We give a brief overview of the algorithm, highlighting in particular the differences from the partial edge cut sparsifier. The proof and algorithmic details are deferred to \Cref{proof-apx-rvc-sparsification}.

At a high level, the following sparsifier for vertex cuts randomly samples the \emph{vertex} weights so that the weights are integral, and the weight of the minimum vertex cut becomes $\bigO{k \log{n} / \eps^2}$. Similar to the partial edge sparsifier, this rounding is calibrated to preserve $\root$-cuts with (roughly) $k$ or fewer vertices in the sink components. To pad the weight of vertex $\root$-cuts with large sink components, we add an weighted auxiliary vertex on a short directed path between $\root$ and each vertex (as opposed to just adding an edge from $\root$, as we did for edge cuts). If a sampled weight of a vertex $v$ is $0$, we cannot simply drop the vertex from the graph (in the way we can drop weight $0$ edges) since the vertex may be in the sink component of the min $r$-cut. Instead we remove all outgoing edges from $v$.
Also, when we detect that a vertex $v$ cannot be in the sink component (by a similar counting argument as before), rather than contract $v$ into $\root$ (which may effect the min vertex $\root$-cut), we replace all of the incoming edges to $v$ with a single edge from $\root$. The culmination of these modifications is a similar net effect as for edge cuts: a graph with $\bigO{n k \log{n} / \eps^2}$ that preserves the sink component of the minimum vertex $\root$-cut.
That said, the following bounds are more detailed than the bounds for preserving the edge cut in \Cref{thm:sparsification}. These additional properties play a critical role in the customized local flow algorithms presented later.

In the following, let $\outneighbors{v}{G}$ denote
the set of out-neighbors of $v$ in the graph $G$. We omit $G$ and
simply write $\outneighbors{v}$ when $G$ can be inferred from the
context.
\begin{lemma}
  \labellemma{apx-rvc-sparsification} Let $G = (V,E)$ be a directed
  graph with positive vertex weights.  Let $\root \in V$ be a fixed
  vertex.  Let $k, \vc > 0$ be given parameters.  Let
  $V' = V \setminus \parof{\setof{\root} \cup \outneighbors{\root}}$.
  In randomized linear time, one can compute a randomized directed and
  vertex-weighted graph $\sparseG = (\sparseV, \sparseE)$, and a
  scaling factor $\tau > 0$, with the following properties.
  \begin{properties}
  \item \label{rvcs-root} $\root \in \sparseV$.
  \item \label{rvcs-sinks} Let
    $\sparseV' = \sparseV \setminus \parof{\setof{\root} \cup
      \outneighbors{\root}{\sparseG}}$. We have $\sparseV' = V'$.
  \item \label{rvcs-weights} $\sparseG$ has integer vertex weights
    between $0$ and
    $\bigO{k \log{n} / \eps^2}$.
  \item \label{rvcs-degree} Every vertex $v \in \sparseV$ has at most
    $\bigO{k \log{n} / \eps^2}$ incoming edges.
  \item \label{rvcs-zero} Every vertex $v$ with weight $0$ has no
    outgoing edges.
  \item \label{rvcs-cut-weights} With high probability, for all
    $S \subseteq V'$, the weight of the vertex in-cut induced by $S$
    in $\sparseG$ (up to scaling by $\tau$) is at least the minimum of
    the $\epsless$ times the weight of the induced vertex in-cut in
    $G$ or $c \vc$ (for any desired constant $c > 1$), and at most
    $\epsmore$ times its weight in $G$ plus $\eps \vc \sizeof{S} / k$.
  \item \label{rvcs-small-cut-sets} With high probability, for all
    $S \subseteq V'$ such that $\sizeof{S} \leq k$ and the weight of
    the induced vertex in-cut is $\leq \bigO{\vc}$, we have
    $S \subseteq \sparseV'$. (That is, $S$ is still the sink component
    of an $\root$-cut in $\sparseG$.)
  \end{properties}
  In particular, if the minimum vertex $\root$-cut has weight
  $\bigTheta{\vc}$, and the sink component of a minimum vertex $\root$-cut
  has at most $k$ vertices, then with high probability $\sparseG$
  preserves the minimum vertex $\root$-cut up to a
  $\apxmore$-multiplicative factor.
\end{lemma}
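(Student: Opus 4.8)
The plan is to follow the same three-stage template as the edge-cut sparsifiers \reflemma{apx-rec-skeleton} and \Cref{thm:sparsification}, systematically replacing ``weight of edges crossing a cut'' by ``weight of the separator vertices $\inneighbors{S}\setminus S$ of the $\root$-cut induced by $S$.'' The structural fact I will lean on throughout is that for every $S\subseteq V'$ the set $\inneighbors{S}\setminus S$ is itself a feasible $\root$-separator, so its weight is at least the global minimum vertex-cut weight $\bigTheta{\vc}$. \emph{Stage 1 (discretize and scale):} pick $\tau=\bigTheta{k\log n/(\eps^2\vc)}$, replace each $w(v)$ by an independent integer $w_0(v)\sim\operatorname{Bin}(w(v),\tau)$ (with the obvious adjustment when $\tau>1$), and cap $w_0(v)$ at $\bigTheta{k\log n/\eps^2}$ — a vertex whose rounded weight exceeds the cap can never sit in a separator of total weight $\bigO{\vc}$, so capping does not affect the minimum $\root$-cut, and after it all positive weights lie in $[1,\bigO{k\log n/\eps^2}]$ (property \ref{rvcs-weights}). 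For a fixed $S\subseteq V'$ with $\sizeof{S}\le k$, the rounded separator weight $\sum_{u\in \inneighbors{S}\setminus S}w_0(u)$ is a sum of independent bounded variables with mean $\tau\cdot w(\inneighbors{S}\setminus S)=\bigOmega{k\log n/\eps^2}$, so a Chernoff bound gives a $\epspm$ deviation with failure probability $n^{-\bigOmega{k}}$; a union bound over the $\binom{n}{\le k}=n^{\bigO{k}}$ such $S$ makes all of them concentrate simultaneously whp. Whenever $w_0(v)=0$ I would delete all out-edges of $v$: a weight-$0$ vertex can be added to any separator for free, so no separator weight of a subset of $V'$ changes, which is exactly property \ref{rvcs-zero}.

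\emph{Stage 2 (overlay to pad large cuts).} For each $v\in V'$ I would add a fresh vertex $a_v$ of carefully chosen weight together with edges $\root\to a_v$ and $a_v\to v$. Since $a_v$ is the only internal vertex of the path $\root\to a_v\to v$, every $\root$-cut whose sink component is $S$ must place $a_v$ in its separator for each $v\in S$; hence the overlay raises the separator weight of $S$ by exactly $\sum_{v\in S}w_0(a_v)$, an amount linear in $\sizeof{S}$. The weight of $a_v$ is calibrated so that (a) for $\sizeof{S}\le k$ the added term is within the advertised additive slack $\eps\vc\sizeof{S}/k$ (relative to the original scale, after absorbing constants into $c$), while (b) for $\sizeof{S}>k$ the added term by itself already pushes the separator weight above the preserved near-minimum cut, supplying the ``$\ge c\vc$'' branch of property \ref{rvcs-cut-weights}. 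Combining (a) and (b) with the Stage-1 concentration for small $S$ yields property \ref{rvcs-cut-weights} in full; applying it to the sink component $\optSink$ of the true minimum cut (which has $\le k$ vertices, so the small-$S$ estimates apply) and to every competing set then gives the concluding ``$\sparseG$ preserves the minimum vertex $\root$-cut up to an $\apxmore$ factor'' statement, and property \ref{rvcs-small-cut-sets} shows $\optSink$ is still the sink component of an $\root$-cut in $\sparseG$.

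\emph{Stage 3 (bound in-degrees).} Because every positive weight is now an integer in $[1,\bigO{k\log n/\eps^2}]$, a separator of weight $\le\bigO{\vc}$ (scaled) spans at most $\bigO{k\log n/\eps^2}$ vertices; hence a counting argument shows that a vertex $v$ with more than $ck\log n/\eps^2+k$ in-edges (for a large constant $c$) cannot belong to any sink component $S$ with $\sizeof{S}\le k$ and separator weight $\le\bigO{\vc}$: at most $k-1$ of its in-edges originate in $S$, so more than $ck\log n/\eps^2$ originate at distinct separator vertices, exceeding the budget. For each such $v$ I would delete all of its in-edges and reconnect it to the root through a fresh vertex $b_v$ of weight $\bigTheta{k\log n/\eps^2}$ with edges $\root\to b_v\to v$: this keeps $v$ out of every relevant sink component (separating $v$ now costs more than $c\vc$), leaves the minimum $\root$-cut value unchanged (its sink component already avoided $v$), and — crucially for property \ref{rvcs-sinks} — keeps $v\notin\outneighbors{\root}$, since only the new auxiliary vertices $a_v,b_v$ become out-neighbors of $\root$. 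The surviving graph has $\bigO{nk\log n/\eps^2}$ edges, every vertex has $\bigO{k\log n/\eps^2}$ in-edges (property \ref{rvcs-degree}), and properties \ref{rvcs-root}, \ref{rvcs-sinks}, \ref{rvcs-weights}, \ref{rvcs-small-cut-sets} follow from the construction, the last because the Stage-1/Stage-3 counting leaves every such small $S$ entirely inside $\sparseV'$.

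\emph{Main obstacle.} The heart of the argument is the joint calibration in Stage 2: one must choose the sampling scale $\tau$ and the overlay weights $w_0(a_v)$ \emph{simultaneously} so that small cuts ($\sizeof{S}\le k$) incur only the additive $\bigO{\eps\vc\sizeof{S}/k}$ distortion named in property \ref{rvcs-cut-weights} while every larger cut is forced strictly above the preserved minimum, and one must verify that the two graph surgeries — dropping out-edges of weight-$0$ vertices, and rerouting high-in-degree vertices through $b_v$ rather than contracting them into $\root$ — neither manufacture a spurious small $\root$-cut nor perturb the sink component of the genuine minimum cut. Getting properties \ref{rvcs-cut-weights}, \ref{rvcs-degree}, and \ref{rvcs-small-cut-sets} to hold at once with a single choice of parameters is the real content; everything else is a direct transcription of the edge-cut proof with separator vertices playing the role of cut edges.
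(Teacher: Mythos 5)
Your high-level plan matches the paper's: discretize vertex weights by a factor $\tau \approx \eps^2\vc/(k\log n)$, overlay an auxiliary path $\root\to a_v\to v$ of weight $\Theta(\eps\vc/k)$ per vertex, drop out-edges of weight-$0$ vertices, and reroute high in-degree vertices. But there is a genuine gap in how you cover the lower-bound half of property~\ref{rvcs-cut-weights} for sets $S$ of intermediate size.

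Specifically, you establish Chernoff concentration of the rounded separator weight $w_2(S)$ \emph{only} for $\sizeof{S}\le k$, and then appeal to the overlay to dispatch all $\sizeof{S}>k$: ``the added term by itself already pushes the separator weight above the preserved near-minimum cut.'' That overlay term is $\sizeof{S}\cdot w_0(a_v)$ with $w_0(a_v)\approx\eps\vc/(2k)$ (it cannot be larger, or else the upper-bound clause fails for $\optSink$), which for $\sizeof{S}=\Theta(k)$ is only $\Theta(\eps\vc)$ — an $\eps$-fraction of the preserved near-minimum cut, not above it. Since $w_2(S)\ge 0$ is the only deterministic floor you have without concentration, sets with $k<\sizeof{S}\lesssim k/\eps$ could in principle have rounded weight far below $\epsless w_1(S)$ and also far below $c\vc$. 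The paper avoids this by proving the concentration bound \emph{for all $S$}, not just small ones: for each $S$, Chernoff with additive slack $\gamma=\eps\vc\sizeof{S}/2k$ gives failure probability $n^{-\bigOmega{\sizeof{S}}}$, which is small enough to union-bound over all $\binom{n}{\sizeof{S}}\le n^{\sizeof{S}}$ sets of every size simultaneously, and the overlay term $\eps\vc\sizeof{S}/2k$ then exactly cancels that additive error, giving $w_3(S)\ge\epsless w_1(S)$ for every $S\subseteq V'$. The ``$\ge c\vc$'' branch in property~\ref{rvcs-cut-weights} is not there to handle large $S$ at this stage at all — it enters only because of the later \emph{truncation} of vertex weights to $\bigO{k\log n/\eps^2}$, which can reduce some in-cuts but never below the $\bigO{\vc}$ truncation threshold. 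Once you restate your concentration claim for all $S$ with the size-dependent additive slack, this part of your argument becomes correct and matches the paper.

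On a secondary point: your Stage 3 diverges from the paper's construction. The paper, for each high in-degree vertex $v$, replaces all incoming edges with a single edge $\root\to v$, which places $v\in\outneighbors{\root}{\sparseG}$ and thus removes $v$ from $\sparseV'$. You instead route through a fresh heavy vertex $b_v$, keeping $v\notin\outneighbors{\root}{\sparseG}$; this preserves $\sparseV'=V'$ literally (property~\ref{rvcs-sinks}) at the cost of adding more auxiliary vertices. Both surgeries serve the same purpose — preventing $v$ from being in the sink component of any cheap $\root$-cut — and your variant is a valid alternative provided the weight of $b_v$ fits under the cap in property~\ref{rvcs-weights} and is large enough that any sink component containing $v$ pays $\ge c\vc$; your choice of weight $\Theta(k\log n/\eps^2)$, which rescales to $\Theta(\vc)$, satisfies both. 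So that piece is fine, just different. The one real correction you need is the union bound over all $S$ in Stage 1.
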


As stated above, the proof is deferred to \Cref{proof-apx-rvc-sparsification}.



\subsection{Rooted vertex mincut for small sink components}

\label{local-rvc}

This section presents an approximation algorithm for rooted vertex
mincut for the particular setting where the sink component is
small. In particular, we are given an upper bound $k$ on the number of
vertices in the sink component, and want to obtain running times of
the form $n \poly{k}$.  When a similar situation arose previously for
small integer capacities in \cite{cq-simple-connectivity},
\cite{cq-simple-connectivity} modified a local algorithm from
\cite{forster+20} which works well for unweighted graphs. Here,
while \reflemma{apx-rvc-sparsification} produces relatively sparse
graphs with integral vertex capacities, the vertex capacities imply
imply that the
algorithm from \cite{cq-simple-connectivity,forster+20} would take
roughly $nk^3 / \eps^5$ time.  This section develops an alternative
algorithm that is inspired by these local algorithms for (global and
rooted) vertex cuts, but reduces the dependency on $k$ to $k^2$.
Compared to \cite{forster+20,cq-simple-connectivity}, the algorithm
here is designed to take full advantage of the properties of the
graph produced by \reflemma{apx-rvc-sparsification}.  These
modifications have some tangible benefits.  First, it improves the
dependency on $k$ and $\epsilon$.  Second, the local subroutine here
is deterministic whereas before they were randomized.  Third and last,
as suggested by the better running time and the determinism, the
version presented here is arguably simpler and more direct than the
previous algorithms (for this setting).


\begin{lemma}
  \labellemma{local-apx-rvc} Let $G = (V,E)$ be a directed graph with
  positive vertex weights.  Let $r \in V$ be a fixed root vertex.  Let
  $\eps \in (0,1)$, $\vc > 0$ and $k \in \naturalnumbers$ be given
  parameters.  There is a randomized linear time Monte Carlo algorithm
  that, with high probability, produces a deterministic data structure
  that supports the following query.

  For
  $t \in V' \defeq V \setminus \parof{\setof{\root} \cup
    \outneighbors{\root}}$, let $\vc_{t,k}$ denote the weight of the
  minimum $(\root,t)$-vertex cut such that the sink component has at most
  $k$ vertices. Given $t \in V'$, deterministically in
  $\bigO{k^3 \log{n}^2 / \eps^4}$ time, the data structure either (a)
  returns the sink component of a minimum $(\root,t)$-vertex cut of weight
  at most $(1 + \eps) \vc_{t,k}$, or (b) declares that
  $\vc_{t,k} > \vc$.
\end{lemma}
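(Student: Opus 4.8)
The plan is to reduce each query to a bounded \emph{local} max-flow computation inside a vertex-split copy of the sparsifier from \reflemma{apx-rvc-sparsification}, using a weighted adaptation of the local cut algorithm of \cite{forster+20}. The key point is that the sparsifier's in-degree bound makes the ``volume'' the local algorithm must explore to discover a sink side of at most $k$ vertices only $\bigO{k^2\log n/\eps^2}$ ($k$ vertices, each with $\bigO{k\log n/\eps^2}$ incoming edges), which together with a target cut value $\bigO{k\log n/\eps^2}$ yields the claimed $\bigO{k^3\log^2 n/\eps^4}$ per-query time.

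\emph{Preprocessing (the only randomized step).} First I would run the sparsification of \reflemma{apx-rvc-sparsification} with the given $\vc,k,\eps$ once, in randomized linear time, obtaining $\sparseG=(\sparseV,\sparseE)$ and a scaling factor $\tau$; with high probability $\sparseG$ then has integer vertex weights bounded by $\vc'\defeq\bigO{k\log n/\eps^2}$, every vertex has at most $\bigO{k\log n/\eps^2}$ incoming edges (property~\ref{rvcs-degree}), every weight-$0$ vertex has no outgoing edges (property~\ref{rvcs-zero}), $\sparseV'=V'$ (property~\ref{rvcs-sinks}), and $\sparseG$ preserves up to a $\apxpm$ factor every $\root$-cut whose sink component has at most $k$ vertices and weight $\bigO{\vc}$ (properties~\ref{rvcs-cut-weights} and~\ref{rvcs-small-cut-sets}). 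Everything after this is deterministic, and the data structure is just $\sparseG$ together with its vertex-split graph $\splitG$ (split each $v$ into $v^- \to v^+$ of capacity $w(v)$, and replace each edge $(u,v)$ by $u^+ \to v^-$ of capacity $+\infty$). In $\splitG$, $(\root,t)$-vertex cuts of $\sparseG$ correspond to $(\root^+,t^-)$-edge cuts; and because a weight-$0$ vertex may be assumed to lie outside the sink side (it has no outgoing edges) and the remaining cut vertices have weight $\ge 1$, a sink side whose projection to $V$ has at most $k$ vertices is reachable backwards from $t^-$ through only $\bigO{k^2\log n/\eps^2}$ edges, by property~\ref{rvcs-degree}.

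\emph{The query.} Given $t\in V'=\sparseV'$, run a local augmenting-path flow from $t^-$ in the reverse of $\splitG$, in the style of LocalVC: a persistent depth-first search in the residual graph, pushing one unit at a time, with counters for the number of distinct split-vertices and of edge traversals. Abort and answer (b) as soon as more than $\bigTheta{\vc'}$ units have been pushed, or more than $\bigTheta{k\log n/\eps^2}$ split-vertices have been explored, or more than $\bigTheta{k^3\log^2 n/\eps^4}$ edge traversals have been made; otherwise halt when no augmenting path remains and return the projection to $V$ of the residual-reachable set from $t^-$ as $S$. Correctness of (a): a maximum flow has been computed, so $S\subseteq\sparseV'=V'$ with $t\in S$ and $\root\notin S$, hence $N^-_G(S)\setminus S$ is automatically a $(\root,t)$-vertex cut of $G$; by the lower-bound clause of property~\ref{rvcs-cut-weights} (valid since this cut's $\sparseG$-weight is below $c\vc$) its $G$-weight is at most $\epsless^{-1}$ times its $\sparseG$-weight, which equals the value of the minimum $(\root,t)$-vertex cut of $\sparseG$ and hence, by the upper-bound clause of property~\ref{rvcs-cut-weights}, is at most $\apxmore\,\vc_{t,k}$ in the regime $\vc=\bigTheta{\vc_{t,k}}$ in which the data structure is used; rescaling $\eps$ by a constant gives weight $\le\epsmore\vc_{t,k}$. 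Correctness of (b): if $\vc_{t,k}\le\vc$ then $G$ has a $(\root,t)$-vertex cut with at most $k$ sink vertices and weight $\le\vc$, which by properties~\ref{rvcs-cut-weights} and~\ref{rvcs-small-cut-sets} survives in $\sparseG$ as a cut of value $\le\vc'$ whose sink side has in-edge volume $\bigO{k^2\log n/\eps^2}$ in $\splitG$, so the LocalVC analysis guarantees the search pushes $\le\vc'$ units and makes $\bigO{\vc'\cdot k^2\log n/\eps^2}=\bigO{k^3\log^2 n/\eps^4}$ edge traversals before getting stuck \emph{inside} this small side --- it therefore never triggers an abort, and consequently outputs (b) only when $\vc_{t,k}>\vc$. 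The running time is $\bigO{k^3\log^2 n/\eps^4}$ by the abort conditions.

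\emph{The main obstacle.} The crux is the local-flow step, and I expect two points to need real care. First, one must lift the LocalVC algorithm of \cite{forster+20} --- designed for unit vertex weights --- and its $\bigO{(\text{cut value})\cdot(\text{volume})}$ work bound to the integer vertex weights produced by \reflemma{apx-rvc-sparsification}, pushing flow directly through each capacity-$w(v)$ split gadget rather than unfolding it into $w(v)$ unit gadgets; avoiding the unfolding (which would inflate the volume by a further $\bigO{k}$ factor and give an $nk^3$-type bound, as in the black-box reduction of \cite{forster+20,cq-simple-connectivity}) is exactly what keeps the per-query count at $\bigO{k^3\log^2 n/\eps^4}$, and it is where the structure of \reflemma{apx-rvc-sparsification} (not merely its sparsity) is exploited. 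Second, one must verify that running the persistent DFS to a stuck state certifies an \emph{approximately minimum} cut of $\sparseG$, not merely some cut of value $\le\vc'$, so that its $\sparseG$-weight may legitimately be compared against $\vc_{t,k}$ through \reflemma{apx-rvc-sparsification}; this comparison --- and in particular the $\eps\vc\sizeof{S}/k$ additive term in property~\ref{rvcs-cut-weights} --- is benign only when $\vc=\bigTheta{\vc_{t,k}}$, which is indeed the regime in which the surrounding rooted-mincut algorithm invokes this data structure (it supplies $\vc$ as one of $\bigO{\log n}$ geometric guesses of the minimum cut value). The remaining ingredients --- building $\splitG$, projecting the residual-reachable set back to a sink component and checking it induces a valid cut, and tracking the constants relating $\vc$, $c\vc$, $\vc'$, $\tau$ and the $\eps$-factors --- are routine.
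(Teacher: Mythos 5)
The high-level architecture matches the paper's: sparsify via \reflemma{apx-rvc-sparsification}, build a split graph, and run a local augmenting-path search from $t$ with early-abort thresholds tied to the degree and cut-value bounds of the sparsifier. But the heart of the lemma is the per-query running-time bound, and here your proposal and the paper diverge sharply. You appeal to ``the LocalVC analysis'' of \cite{forster+20} and assert that avoiding the unfolding of capacitated split edges yields a $\bigO{(\text{cut value})\cdot(\text{volume})}=\bigO{k^3\log^2 n/\eps^4}$ bound, citing the volume $\bigO{k^2\log n/\eps^2}$ of a $k$-vertex sink side as the budget. This is exactly the claim the paper denies: the preamble to \reflemma{local-apx-rvc} states that the black-box adaptation of \cite{cq-simple-connectivity,forster+20} to the integer vertex capacities of the sparsifier costs roughly $nk^3/\eps^5$ total (i.e.\ a factor $k/\eps$ worse per query), precisely because after flow is pushed the DFS in the residual split graph is not confined to the sink side; a $\bigO{k^2\log n/\eps^2}$ exploration budget is therefore not a priori enough for an iteration to either find an augmenting path or get stuck. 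You flag this as the point that ``needs real care,'' but you then assert the bound rather than prove it, so the crux of the argument is exactly the part left unfilled.

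The paper's proof closes this gap with a specific structural device that does not appear in your proposal. It first replaces each auxiliary two-hop path $(\vout, a_v^-, a_v^+, \rin)$ in the split graph by a single edge $(\vout,\rin)$ of capacity $\bigTheta{\log n/\eps}$ (after rescaling by $\tau$). It then defines a split-vertex $\vout$ to be \emph{saturated} when $(\vout,\rin)$ is saturated (and $\vin$ to be saturated when $(\vin,\vout)$ is saturated but $\vout$ is not), and modifies the search to terminate the moment it reaches any \emph{unsaturated} split-vertex, auto-completing an augmenting path to $\rin$ via the auxiliary edge. The running time is then controlled by three short claims: at most $\bigO{k/\eps}$ out-vertices are ever saturated (each absorbs $\bigTheta{\log n/\eps}$ flow, and total flow is $\bigO{k\log n/\eps^2}$); the residual out-degree of each saturated $\vin$ sums to at most the flow pushed so far; and therefore each modified search traverses at most $\bigO{k^2\log n/\eps^2}$ edges, independent of the true sink volume, since the only vertices it can dwell on are the saturated ones. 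This is what buys the extra factor of $k/\eps$ over the black-box LocalVC lift, and it is also what makes each query deterministic (the lemma promises a deterministic data structure). Your proposal captures neither the auxiliary-edge contraction, nor the saturated/unsaturated dichotomy, nor the early-termination-at-unsaturated rule that the per-iteration bound hinges on, so as written it does not establish the claimed $\bigO{k^3\log^2 n/\eps^4}$ query time.
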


\begin{proof}
   Given $\root$, $\vc$, $k$, and $\eps$, let $\eps' = c \eps$ for a sufficiently small constant $c > 0$. We first apply \reflemma{apx-rvc-sparsification} to $G$ with root $\root$ and parameters $\vc$, $k$, and $\eps'$.  This produces a vertex capacitated graph  $\sparseG= (V_0,E_0)$ with $V \subset V_0$.  We highlight the
  features that we leverage.  All new vertices (in $V_0 \setminus V$)
  are in $\outneighbors{\root}{\sparseG}$; that is, $V'$ equals
  \begin{math}
    \sparseV' \defeq \sparseV \setminus \parof{\setof{\root} \cup
      \outneighbors{\root}{\sparseG}}.
  \end{math}
  Put alternatively, none of the new vertices is in the sink
  component of any $\root$-cut.  The vertex weights are integers between
  $0$ and $\bigO{k \log{n} / \eps^2}$.  Every vertex has unweighted
  in-degree at most $\bigO{k \log{n} / \eps^2}$.  Every vertex with
  weight $0$ has no outgoing edges.

  With high probability, we have the following guarantees on the
  vertex $\root$-cuts of $\sparseG$.  The vertex weights in $\sparseG$ are
  scaled so that a weight of $\vc$ in $G$ corresponds to weight
  $\bigO{k \log{n}/ \eps^2}$ in $\sparseG$.  Modulo scaling, every
  vertex $\root$-cut in $\sparseG$ has weight no less than the minimum of
  its weight in $G$ and $2 \vc$.  Additionally, modulo scaling, for
  every vertex $\root$-cut in $G$ with capacity at most $\vc$ and at most
  $k$ vertices in the sink component, the corresponding vertex cut in
  $\sparseG$ has weight at most a $c_0 \eps \vc$ additive factor
  larger than in $G$, for any desired constant $c_0 > 0$.  We consider
  the algorithm to fail if the cuts are not preserved in the sense
  described above.

  Given $t \in V$, the data structure will search for a small
  $(\root,t)$-cut in $\sparseG$ via a customized, edge-capacitated flow
  algorithm. This algorithm may or may not return the sink component
  of $(\root,t)$-cut. If the search does return a sink component, and the
  corresponding vertex in-cut in $\sparseG$ has weight that, upon
  rescaling back to the scale of the input graph $G$, is at most
  $(1 + \eps / 2) \vc$, the data structure returns it. Otherwise the
  data structure indicates that $\vc_{t,k} > \vc$.

  Proceeding with the flow algorithm, let $\revG$ be the reverse of
  $\sparseG$, and let $\splitG$ be the standard ``split-graph'' of
  $\revG$ modeling vertex capacities with edge capacities.  We recall
  that the split graph splits each vertex $v$ into an auxiliary
  ``in-vertex'' $\vin$ and an auxiliary ``out-vertex'' $\vout$.  For
  each $v$ there is a new edge $(\vin,\vout)$ with capacity equal to
  the vertex capacity of $v$.  Each edge $(u,v)$ is replaced with an
  edge $(\uout,\vin)$ with capacity\footnote{Usually, this edge is set
    to capacity $\infty$, but either the weight of $u$ or the weight
    of $v$ are also valid.}  equal to the vertex capacity of $u$.
  Every $(\root,t)$-vertex cut in $\sparseG$ maps to a
  $(\tout,\rin)$-edge cut in $\revG$ with the same capacity.  Any
  $(\tout,\rin)$-edge capacitated cut maps to a $(\root,t)$-vertex cut
  in $\sparseG$ (with negligible overhead in the running time). Now,
  recall that for each $v \in V'$, the sparsification procedures
  introduces an auxiliary path $(\root,a_v,\rin)$ where $a_v$ is was
  given weight $\bigTheta{\eps \vc / k}$. It is convenient to replace
  the corresponding auxiliary path $(\vout, a_v^-, a_v^+, \rin)$ in
  $\revG$ with a single edge $(\vout,\rin)$ with capacity equal to the
  weight of $a_v$.  This does not effective the weight of the minimum
  $(\tout,\rin)$-edge cut for any $t \in V'$.  This adjustment can be
  easily made within the allotted preprocessing time.

  In this graph, given $t \in V'$, we run a specialization of the
  Ford-Fulkerson algorithm \cite{ford-fulkerson} that either computes
  a minimum $(\tout,\rin)$-cut or concludes that the minimum
  $(\tout,\rin)$-cut is at least $\bigO{k \log{n} / \eps^2}$ (which
  corresponds to $\bigO{\vc}$ in $G$) after
  $\bigO{k \log{n} / \eps^2}$ iterations.  To briefly review, each
  iteration in the Ford-Fulkerson algorithm searches for a path from
  $t$ to $\root$ in the residual graph. If such a path is found, then it
  routes one unit of flow along this path, and updates the residual
  graph by reversing (one unit capacity) of each edge along the path.
  After $\ell$ successful iterations we have a flow of size $\ell$.
  If, after $\ell$ iterations, there is no path in the residual graph
  from $\tout$ to $\rin$, then the set of vertices reachable from $t$
  gives a minimum $(\tout,\rin)$-cut of size $\ell$.  Observe that
  updating the residual graph along a $(\tout,\rin)$-path preserves
  the weighted in-degree and out-degree of every vertex except $\tout$
  and $\rin$. The weighted out-degree of $\tout$ decreases by 1 and
  the weighted in-degree of $\rin$ changes by $1$. Moreover, updating
  the residual graph along a path increases the unweighted out-degree
  of any vertex by at most one, since a path contains at most one edge
  going into any single vertex.  Since every vertex initially has
  unweighted out-degree at most $\bigO{k \log{n} / \eps^2}$ in $\revG$
  (reversing the upper bound on the unweighted in-degrees in
  $\sparseG$), and the flow algorithm updates the residual graph along
  at most $\bigO{k \log{n} / \eps^2}$ paths before terminating, the
  maximum unweighted out-degree over all vertices never exceeds
  $\bigO{k \log{n} / \eps^2}$.

  Within the Ford-Fulkerson framework, we give a refined analysis that
  takes advantages of the auxiliary $(\vout,\rin)$ edges.  Call an
  out-vertex $\vout$ \defterm{saturated} if the auxiliary edge
  $(\vout,\rin)$ is saturated; that is, if $(\vout,\rin)$ is not in
  the residual graph.  Call an in-vertex $\vin$ \defterm{saturated} if
  the edge $(\vin,\vout)$ is saturated and $\vout$ is not saturated.
  (A vertex $\vout$ or $\vin$ is called \defterm{unsaturated} if it is
  not saturated.)  We modify the search for an augmenting path to
  effectively end when we first visit an unsaturated vertex $\vout$ or
  an unsaturated $\vin$.  If we visit an unsaturated $\vin$, then we
  automatically complete a path to $\rin$ via $\vout$.  If we find an
  unsaturated $\vout$, then we automatically complete a path to $\rin$
  via the edge $(\vout,\rin)$. It remains to bound the running time of
  this search. We first bound the number of saturated $\vout$'s.

  \begin{claims}
  \item \labelclaim{saturated-vouts} There are at most
    $\bigO{k / \eps}$ saturated $\vout$'s.
  \end{claims}
  Indeed, each saturated $\vout$ implies $\bigO{\log{n} / \eps}$ units
  of flow along $(\vout,\rin)$, and the flow is bounded above
  $\bigO{k \log{n} / \eps^2}$.

  Note that \refclaim{saturated-vouts} also implies there are at most
  $\bigO{k / \eps}$ $\vin$'s such that $\vout$ is saturated. The next
  claim bounds the total out-degree of saturated $\vin$'s.

  \begin{claims}
  \item \labelclaim{saturated-vin-out-degrees} The sum of out-degrees
    of saturated $\vin$'s is at most the amount of flow that has been
    routed to $\rin$.
  \end{claims}

  Indeed, the out-degree of a $\vin$ in the residual graph is bounded
  above by the amount of flow through $(\vin,\vout)$, since initially
  $(\vin,\vout)$ is the only outgoing edge from $\vin$.  Recall that
  if $\vin$ is saturated, then by definition $\vout$ is unsaturated.
  As long as $\vout$ is unsaturated, each unit of flow through
  $(\vin,\vout)$ goes directly to $\rin$ via the edge $(\vout,\rin)$,
  and can be charged to the total flow.

  We now apply the above two claims to bound the total running time
  for each search, as follows.
  \begin{claims}
  \item \labelclaim{local-rvc-search-length} Every (modified) search
    for an augmenting path traverses at most
    $\bigO{k^2 \log{n} / \eps^2}$ edges.
  \end{claims}
  We first observe that every vertex visited in the search, except the
  unsaturated vertex terminating the search, is either (a) a saturated
  $\vin$, (b) a saturated $\vout$, or (c) an unsaturated $\vin$ such
  that $\vout$ is saturated.  We will upper bound the number of edges
  traversed in each iteration based on the type of vertex at the
  initial point of that edge.  First, the amount of time spent
  exploring edges leaving (a) a saturated $\vin$ is, by
  \refclaim{saturated-vin-out-degrees}, at most the size of the flow
  at that point, which is at most $\bigO{k \log{n} / \eps^2}$.
  Second, consider the time spent traversing edges leaving either (b)
  a saturated $\vout$ or (c) an unsaturated $\vin$ such that $\vout$
  is saturated. By \refclaim{saturated-vouts}, there are at most
  $\bigO{k / \eps}$ such vertices, and each has out-degree at most
  $\bigO{k \log{n} / \eps^2}$. Thus we spend
  $\bigO{k^2 \log{n} / \eps^2}$ time traversing such edges. All
  together, we obtain an upper bound of $\bigO{k^2 \log{n} / \eps^2}$
  total edges per search.

  \refclaim{local-rvc-search-length} also bounds the running time for
  each iteration. The algorithm runs for at most
  $\bigO{k \log{n} / \eps^2}$ iterations before either finding an
  $(\tout,\rin)$-cut or concluding that the weight of the minimum
  $(\tout,\rin)$-cut, rescaled to the input scale of $G$, is at least
  a constant factor greater than $\vc$.  The total running time
  follows.
\end{proof}

We now present the overall algorithm for finding vertex $\root$-cuts with
small sink components. The algorithm combines \reflemma{local-apx-rvc}
with randomly sampling for a vertex $t$ in the sink component of an
approximately minimum $\root$-cut. In the following, we let $\outdegree{\root}$ denote the \emph{unweighted} out-degree pf $\root$ in $G$. 

\begin{lemma}
  \labellemma{apx-rvc-small-sink} Let $G = (V,E)$ be a directed graph
  with positive vertex weights. Let $\root \in V$ be a fixed root vertex.
  Let $\eps \in (0,1)$, $\vc > 0$ and $k \in \naturalnumbers$ be given
  parameters.  There is a randomized algorithm that runs in
  $\bigO{m + (n - \outdegree{\root}) k^2 \log{n}^3 / \eps^4}$ time and has
  the following guarantee.  If there is a vertex $\root$-cut of capacity
  at most $\vc$ and where the sink component has at most $k$ vertices,
  then with high probability, the algorithm returns a vertex
  $(\root,t)$-cut of capacity at most $\epsmore \vc$.
\end{lemma}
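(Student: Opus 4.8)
The plan is to combine the per-sink local query provided by \reflemma{local-apx-rvc} with a random choice of a sink vertex, using two elementary facts. Fix a vertex $\root$-cut witnessing the hypothesis: capacity at most $\vc$, and with sink component $\optSink$ of size at most $k$. First, $\optSink \subseteq V' \defeq V \setminus \parof{\setof{\root} \cup \outneighbors{\root}}$: for any rooted vertex cut, no out-neighbor $v$ of $\root$ can lie in the sink component, since either $v$ is one of the deleted vertices or else the edge $(\root,v)$ survives and keeps $v$ reachable from $\root$; hence it suffices to probe only the $\sizeof{V'}=n-1-\outdegree{\root}$ vertices of $V'$. Second, for every $\sink\in\optSink$ the fixed cut is itself an $(\root,\sink)$-vertex cut whose sink component has at most $k$ vertices, so in the notation of \reflemma{local-apx-rvc} we have $\vc_{\sink,k}\le\vc$; consequently the query of \reflemma{local-apx-rvc} at such a $\sink$ cannot correctly declare $\vc_{\sink,k}>\vc$, and therefore returns the sink component of an $(\root,\sink)$-vertex cut of weight at most $\epsmore\vc_{\sink,k}\le\epsmore\vc$.

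Concretely, I would run the randomized linear-time preprocessing of \reflemma{local-apx-rvc} with root $\root$, sink-size bound $k$, and accuracy $\eps$ (shrunk by a small constant, and applied with threshold a $\apxmore$ factor above $\vc$, to absorb the additive slack of the sparsifier). Then I would draw $\Theta\parof{\tfrac{n-\outdegree{\root}}{k}\log n}$ vertices of $V'$ independently and uniformly --- or simply take all of $V'$ if that is fewer --- query the data structure at each, discard every query that declares $\vc_{\sink,k}>\vc$, and output the lightest $(\root,\sink)$-vertex cut among those returned (declaring failure only if every query declares $\vc_{\sink,k}>\vc$). Any cut returned by a query is a genuine vertex $\root$-cut of $G$, so the output is always a valid vertex cut; and if some sampled $\sink$ lands in $\optSink$, then the second fact above makes the output have weight at most $\epsmore\vc$. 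In the regime $\sizeof{\optSink}=\Theta(k)$, the probability that no sample lands in $\optSink$ is $\bigl(1-\Omega(k/\sizeof{V'})\bigr)^{\Theta(\sizeof{V'}\log n/k)}=n^{-\Omega(1)}$, so the desired cut is found whp. The cost is $\bigO{m}$ for preprocessing plus $\Theta\parof{\tfrac{n-\outdegree{\root}}{k}\log n}$ queries at $\bigO{k^3\log^2 n/\eps^4}$ each by \reflemma{local-apx-rvc}, i.e.\ $\bigO{m+(n-\outdegree{\root})\,k^2\log^3 n/\eps^4}$.

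To cover the whole hypothesis $\sizeof{\optSink}\le k$ (rather than only $\sizeof{\optSink}=\Theta(k)$), I would in addition loop over geometric guesses $k'\in\setof{1,2,4,\dots,\bigO{k}}$ of $\sizeof{\optSink}$: for each $k'$, run \reflemma{local-apx-rvc} with sink-size bound $k'$ and sample $\Theta\parof{\tfrac{n-\outdegree{\root}}{k'}\log n}$ vertices of $V'$ (again capped at $\sizeof{V'}$). The guess $k'$ with $\sizeof{\optSink}\le k'\le 2\sizeof{\optSink}$ hits $\optSink$ whp by the same estimate, and its work is $\bigO{(n-\outdegree{\root})\,k'^2\log^3 n/\eps^4}$ (using that the query cost scales as $k'^3$, and that when $k'\le\log n$ one queries all of $V'$ anyway and absorbs one factor $k'$ into a $\log n$). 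Since $\sum_{k'}k'^2=\bigO{k^2}$, the grand total stays $\bigO{m+(n-\outdegree{\root})\,k^2\log^3 n/\eps^4}$; the $\bigO{\log n}$ repetitions of preprocessing are absorbed into $\bigO{m}$ by building the sink-size-$k$ sparsifier once and re-sparsifying \emph{that} graph (rather than $G$) for each smaller guess.

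I expect the crux to be balancing these two parameters against one another: the query cost grows like $k'^3$, while the number of samples needed to hit a sink component of size $\ell$ grows like $\sizeof{V'}/\ell$, so the guessed $k'$ must track the true $\sizeof{\optSink}$ to pin the product at $\bigO{(n-\outdegree{\root})k^2}$ rather than $\bigO{(n-\outdegree{\root})k^3}$ or $\bigO{n\poly{k}}$; getting the $\eps$-dependence and the additive sparsifier slack to line up so that the returned cut still rescales to $\epsmore\vc$ requires the usual constant-shrinking of $\eps$. The only other point needing care --- but one already packaged inside \reflemma{local-apx-rvc} through \reflemma{apx-rvc-sparsification} --- is that each query actually runs inside the vertex sparsifier $\sparseG$, not $G$, so the sink component it returns must be transferred back to a vertex $\root$-cut of $G$ of essentially the same weight; properties \ref{rvcs-sinks}, \ref{rvcs-cut-weights}, and \ref{rvcs-small-cut-sets} of \reflemma{apx-rvc-sparsification} --- namely $\sparseV'=V'$ and the $\apxmore$-preservation of $\root$-cuts with few sink vertices --- are exactly what validate this transfer, so no additional argument is needed there.
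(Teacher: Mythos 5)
Your proposal takes essentially the same route as the paper's proof: build the data structure from \reflemma{local-apx-rvc}, sample $\Theta\!\left((n-\outdegree{\root})\log n/\ell\right)$ vertices of $V'$, query each, and loop over geometric guesses $\ell$ of $\sizeof{\optSink}$, with $\sum_{\ell} \ell^2 = \bigO{k^2}$ giving the claimed bound. The one place you go slightly beyond the paper is in noting that $\bigO{\log k}$ rounds of linear-time preprocessing would naively give $\bigO{m\log k}$ and sketching a fix; the paper quietly elides this, but it is a minor overhead and your observation is correct, not a deviation in approach.
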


\begin{proof}
  Let $\optSink$ be the sink component of the minimum vertex $\root$-cut
  subject to $\sizeof{\optSink} \leq k$. Assume the capacity of the
  vertex in-cut of $\optSink$ is at most $\vc$ (since otherwise the
  algorithm makes no guarantees). Let
  $V' = V \setminus (\setof{\root} \cup \outneighbors{\root})$ and note that
  $\sizeof{V'} = n - 1 - \outdegree{\root}$.

  Suppose we had a factor-2 overestimate
  $\ell \in \bracketsof{\sizeof{\optSink}, 2 \sizeof{\optSink}}$ of
  the number of vertices in $\optSink$. We apply
  \reflemma{local-apx-rvc} with upper bounds $\vc$ on the size of the
  cut and $\ell$ on the number of vertices in the sink component,
  which returns a data structure that, with high probability, is
  correct for all queries. Let us assume the data structure is correct
  (and otherwise the algorithm fails). We randomly sample
  $\bigO{(n - \outdegree{\root}) \log{n} / \ell}$ vertices from $V'$. For
  each sampled vertex $t$, we query the data structure from
  \reflemma{local-apx-rvc}. Observe that if $t \in \optSink$, then the
  query for $t$ returns an $\root$-cut with capacity at most
  $\epsmore \vc$. With high probability we sample at least one vertex
  from $\optSink$, which produces the desired $\root$-cut. By
  \reflemma{local-apx-rvc}, the total running time to serve all
  queries is
  $\bigO{m + (n - \outdegree{\root}) \ell^2 \log{n}^3 / \eps^4}$.

  A factor-2 overestimate $\ell$ can be obtained by enumerating all
  powers of $2$ between $1$ and $2k$. One of these choices of $\ell$
  will be accurate and produce the minimum $\root$-cut with high
  probability. Note that the sum of
  $\bigO{(n - \outdegree{\root}) \ell^2 \log{n}^3 / \eps^4}$ over this
  range of $\ell$ is dominated by the maximum $\ell$. The claimed
  running time follows.
\end{proof}


\subsection{Rooted vertex mincut for large sink components}

\label{sample-rvc}

The third and final part (before the overall algorithm) is an
approximation for the rooted vertex cut that is well-suited for large
sink components.
\begin{lemma}
  \labellemma{apx-rvc-big-sink}
  Let $G = (V,E)$ be a directed graph with positive vertex
  weights. Let $\root \in V$ be a fixed root vertex. Let $\eps \in (0,1)$,
  $\vc > 0$, and $k \in \naturalnumbers$ be given parameters. There is
  a randomized algorithm that runs in
  \begin{math}
    \apxO{m + (n - \outdegree{\root}) \parof{n / \eps^2 + n^{1.5} / k}}
  \end{math}
  time and has the following guarantee. If there is a vertex $\root$-cut
  of capacity at most $\vc$ and where the sink component has at most
  $k$ vertices, then with high probability, the algorithm returns a
  vertex $(\root,t)$-cut of capacity at most $\epsmore \vc$.
\end{lemma}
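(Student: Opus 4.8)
The plan is to transplant the ``large sink component'' case of the edge-mincut algorithm (Case~1 in the proof of \Cref{thm:main-primitive}) to the vertex setting: first compress the instance with the vertex sparsifier \reflemma{apx-rvc-sparsification}, then locate the sink component of the minimum rooted vertex cut by uniform sampling together with one $(\root,t)$-maximum flow per sampled vertex, run on the sparsifier rather than on $G$. Concretely, I would set $\eps' = \eps/c$ for a suitable constant $c>1$ and apply \reflemma{apx-rvc-sparsification} to $G$ with root $\root$ and parameters $\vc, k, \eps'$, choosing the constants in that lemma large enough, obtaining a vertex-weighted graph $\sparseG=(\sparseV,\sparseE)$ with $\bigO{n}$ vertices, $\sizeof{\sparseE} = \min{m, \bigO{nk\log{n}/\eps^2}}$ edges, integer vertex weights in $\bigO{k\log{n}/\eps^2}$, and a scale factor $\tau$. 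Writing $V' \defeq V\setminus(\setof{\root}\cup\outneighbors{\root})$ and $\sparseV' \defeq \sparseV\setminus(\setof{\root}\cup\outneighbors{\root}{\sparseG})$, property~\ref{rvcs-sinks} gives $\sparseV' = V'$, and $\sizeof{V'} = n-1-\outdegree{\root}$. I would then build the standard vertex-split graph $\splitG$ of $\sparseG$ (as in the proof of \reflemma{local-apx-rvc}), in which the minimum vertex $(\root,t)$-cut of $\sparseG$ is obtained from a single maxflow between the out-copy of $\root$ and the in-copy of $t$; sample $\bigO{\sizeof{V'}\log{n}/k}$ vertices of $V'$ uniformly at random; compute for each sampled $t$ a maximum $(\root,t)$-flow in $\splitG$ using the $\apxO{m+n^{3/2}}$ algorithm of \cite{BrandLLSSSW21}; and return the lightest cut found, translated back to $G$ and rescaled by $1/\tau$.

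For correctness I would condition on the high-probability guarantees of \reflemma{apx-rvc-sparsification} and let $\optSink$ be the sink component of a vertex $\root$-cut of $G$ of weight at most $\vc$ with $\sizeof{\optSink}\le k$; in the regime in which this lemma is invoked one also has $\sizeof{\optSink}=\Omega(k)$ (the complementary case being handled by \reflemma{apx-rvc-small-sink}, with the right $k$ located by enumerating powers of $2$ in the overall algorithm). Since $\optSink\subseteq V'$ and $\sizeof{\optSink}=\Omega(k)$, the sample meets $\optSink$ with high probability; for a sampled $t\in\optSink$, property~\ref{rvcs-small-cut-sets} gives $\optSink\subseteq\sparseV'$, so $\optSink$ still induces a vertex $\root$-cut of $\sparseG$ separating $\root$ from $t$, whence the computed minimum vertex $(\root,t)$-cut of $\sparseG$ has weight at most that of the vertex in-cut induced by $\optSink$, which by property~\ref{rvcs-cut-weights} is at most $(1+\eps')\vc + \eps'\vc\,\sizeof{\optSink}/k \le (1+2\eps')\vc$ (using that this in-cut has weight $\le\vc$ in $G$ and $\sizeof{\optSink}\le k$). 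If $T'$ is the sink component of the cut returned, its $\sparseG$-weight is at most $(1+2\eps')\vc$, which is below the $c\vc$ term in the lower bound of property~\ref{rvcs-cut-weights} (that constant may be taken $>2$); hence that lower bound holds through its multiplicative branch, so $T'$ has weight at most $(1+2\eps')\vc/(1-\eps') \le \epsmore\vc$ in $G$. A union bound over the sparsifier's events and the sampling finishes the argument.

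For the running time, sparsification and the construction of $\splitG$ cost $\bigO{m}$, and each maxflow call runs on a graph with $\bigO{n}$ vertices and $\min{m,\bigO{nk\log{n}/\eps^2}}$ edges, hence costs $\apxO{\min{m,nk/\eps^2}+n^{3/2}}=\apxO{nk/\eps^2+n^{3/2}}$; there are $\bigO{\sizeof{V'}\log{n}/k}=\apxO{(n-\outdegree{\root})/k}$ of them, so the total is $\apxO{m + (n-\outdegree{\root})(n/\eps^2 + n^{3/2}/k)}$, and the factor $n-\outdegree{\root}$ (rather than $n$) is exactly the gain from sampling only inside $V'$. I do not expect a deep obstacle in the argument itself; the only delicate bookkeeping is composing the two distortions of \reflemma{apx-rvc-sparsification} --- the $(1\pm\eps')$ multiplicative factor and the additive $\bigO{\eps'\vc\,\sizeof{S}/k}$ term --- which forces $\eps'=\Theta(\eps)$ and uses $\sizeof{\optSink}\le k$ to keep the additive term $\bigO{\eps'\vc}$. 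The real limitation is one of scope rather than of the proof: uniform sampling at rate $\Theta(\log{n}/k)$ discovers $\optSink$ only when $\sizeof{\optSink}=\Omega(k)$, which is precisely why this lemma must be paired with \reflemma{apx-rvc-small-sink} and why the final $\apxO{n^2/\eps^2}$ bound comes from balancing the two regimes near $k\approx\sqrt{n}$.
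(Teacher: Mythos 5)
Your proof takes essentially the same route as the paper's: apply \reflemma{apx-rvc-sparsification}, uniformly sample roughly $(n-\outdegree{\root})\log n/k$ vertices of $V'$, run one vertex-capacitated $(\root,t)$-maxflow per sample in the sparsifier using the $\apxO{m+n^{1.5}}$ algorithm of \cite{brand+20}, and return the lightest cut, with the running-time bookkeeping identical. If anything, you are slightly more careful than the paper on two points: you make explicit that the lower-bound branch of property~\ref{rvcs-cut-weights} must fire through its multiplicative clause (which requires choosing the constant $c>2$ there), and you flag that uniform sampling at rate $\Theta(\log n/k)$ finds $\optSink$ with high probability only when $\sizeof{\optSink}=\Omega(k)$ --- a precondition the lemma statement does not spell out but which the paper also implicitly relies on, since in \reftheorem{apx-rvc} the parameter $k$ is enumerated so as to be a factor-2 estimate of $\sizeof{\optSink}$. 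No gap; this is the paper's argument.
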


\begin{proof}
  Let $\optSink$ be the sink component of the minimum $\root$-cut subject
  to $\sizeof{\optSink} \leq k$. We assume the capacity of the $\root$-cut
  induced by $\optSink$ is at most $\vc$. (Otherwise the output is not
  well-defined.) Let
  $V' = V \setminus \parof{\setof{\root} \cup \outneighbors{\root}}$ and note
  that $\sizeof{V'} < n - \outdegree{\root}$.

  We apply \reflemma{apx-rvc-sparsification} to produce the graph
  $\sparseG$. \reflemma{apx-rvc-sparsification} succeeds with high
  probability and for the rest of the proof we assume it was
  successful. (Otherwise the algorithm fails.)  We sample
  $\bigO{ \parof{n - \outdegree{\root}} \log{n} / k}$ vertices
  $t \in V'$. For each sampled $t$, we compute the minimum
  $(\root,t)$-vertex cut in $\sparseG$. With high probability, some
  $t$ will be drawn from the sink component of the true minimum
  $\root$-cut, in which case the minimum $(\root,t)$-cut in $\sparseG$
  gives an $\epsmore$-approximate $\root$-cut in $G$ (by
  \reflemma{apx-rvc-sparsification}).  We use the $\apxO{m + n^{1.5}}$
  time vertex-capacitated flow algorithm \cite{brand+20}. By
  \reflemma{apx-rvc-sparsification}, we have
  $m = \bigO{n k \log{n} / \eps^2}$. This gives the total running
  time.
\end{proof}

\subsection{Approximating the rooted and global vertex mincut}

\label{rvc-all-together}

We now combine the two parameterized approximation algorithms for
rooted vertex mincut to give the following overall algorithm for
rooted vertex mincut. This establishes one part of
\ref{thm:vertex-apx} concerning approximate rooted vertex cuts.

\begin{restatable}{theorem}{ApxRVC}
  \labeltheorem{apx-rvc} Let $\eps \in (0,1)$, let $G = (V,E)$ be a
  directed graph with polynomially bounded vertex weights, and let
  $\root \in V$ be a fixed root. A $\epsmore$-approximate minimum vertex
  $\root$-cut can be computed with high probability in
  $\apxO{m + n (n - \outdegree{\root}) / \eps^2}$ randomized time.
\end{restatable}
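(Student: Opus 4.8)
The plan is to obtain \reftheorem{apx-rvc} by combining the two parameterized subroutines already in hand --- \reflemma{apx-rvc-small-sink}, which is cheap when the sink component of the minimum rooted vertex cut is small, and \reflemma{apx-rvc-big-sink}, which is cheap when it is large --- after guessing the two parameters they take and balancing a single threshold. Write $\vc^{*}$ for the weight of the minimum vertex $\root$-cut, let $\optSink$ be its sink component, and recall $\optSink \subseteq V' \defeq V \setminus \parof{\setof{\root} \cup \outneighbors{\root}}$, so $\sizeof{\optSink} \le \sizeof{V'} = n - 1 - \outdegree{\root}$. Since the weights are polynomially bounded, there are only $\bigO{\log n}$ powers of $2$ among which $\vc^{*}$ lies within a factor of $2$; crucially the running times of \reflemma{apx-rvc-small-sink} and \reflemma{apx-rvc-big-sink} do not depend on $\vc$, so we may afford to try all of them. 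Fix the threshold $k^{*} = \roundup{\eps \sqrt{n}}$ (assume for now $\eps \sqrt n \ge 1$). For each power-of-$2$ guess $\vc$: run \reflemma{apx-rvc-small-sink} with parameters $(\vc, k^{*})$, and, for each power of $2$ $k$ with $k^{*} < k \le n$, run \reflemma{apx-rvc-big-sink} with parameters $(\vc, k)$. Report the minimum-weight vertex $\root$-cut found over all these invocations. (When $\eps \sqrt n < 1$, skip the small-sink calls and instead run \reflemma{apx-rvc-big-sink} with $(\vc,k)$ for \emph{all} powers of $2$ $k \in [1,n]$.)

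For correctness, first note every cut any of these invocations returns is a genuine vertex $\root$-cut, so the reported weight is at least $\vc^{*}$. For the matching upper bound, take $\vc$ to be the smallest power of $2$ with $\vc \ge \vc^{*}$. If $\sizeof{\optSink} \le k^{*}$, consider the \reflemma{apx-rvc-small-sink} call with parameter $k^{*}$; otherwise consider the \reflemma{apx-rvc-big-sink} call with $k$ the smallest power of $2$ at least $\sizeof{\optSink}$, which then lies in $(k^{*},n]$ and within a factor $2$ of $\sizeof{\optSink}$. In either case the hypotheses of the applicable lemma hold, and with high probability its internal uniform sampling of $V'$ draws some $t \in \optSink$. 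For such $t$, the minimum $(\root,t)$-vertex cut subject to the sink component having at most (the relevant) $k$ vertices, which I will call $\vc_{t,k}$, equals $\vc^{*}$ exactly: it is at most $\vc^{*}$ because the cut inducing $\optSink$ is such a $(\root,t)$-cut of weight $\vc^{*}$ (here $\sizeof{\optSink} \le k$), and at least $\vc^{*}$ because any $(\root,t)$-cut is a rooted cut. The subroutine therefore returns a cut of weight $\epsmore \vc_{t,k} = \epsmore \vc^{*}$: for \reflemma{apx-rvc-small-sink} this is the case-(a) guarantee of the underlying query from \reflemma{local-apx-rvc}, and for \reflemma{apx-rvc-big-sink} it is because \reflemma{apx-rvc-sparsification} maps the minimum $(\root,t)$-cut of the sparsifier back to a cut of $G$ of weight $\epsmore \vc^{*}$ (the $\bigO{\eps \vc \sizeof{S}/k}$ additive sparsification error is $\bigO{\eps \vc^{*}}$ since $\vc = \bigTheta{\vc^{*}}$ and $\sizeof{S}\le k$, and the inflation floor $c\vc$ exceeds $\epsmore \vc^{*}$ for a suitable constant $c$). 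A union bound over the $\bigO{\log^{2}n}$ invocations makes all of them behave as analysed; rescaling $\eps$ by a constant absorbs the accumulated constants, and the reported cut has weight in $[\vc^{*}, \epsmore \vc^{*}]$ as required.

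For the running time, reading $G$ and the randomized-linear-time preprocessing inside each lemma costs $\bigO{m}$ per call, hence $\apxO{m}$ over the $\bigO{\log^{2}n}$ calls. The small-sink calls run \reflemma{apx-rvc-small-sink} with $k=k^{*}$ for each of $\bigO{\log n}$ guesses of $\vc$, costing $\apxO{m + \parof{n-\outdegree{\root}}(k^{*})^{2}/\eps^{4}}$; with $k^{*} = \bigTheta{\eps\sqrt n}$ this is $\apxO{m + n\parof{n-\outdegree{\root}}/\eps^{2}}$. The big-sink calls sum $\apxO{m + \parof{n-\outdegree{\root}}\parof{n/\eps^{2} + n^{1.5}/k}}$ over $\bigO{\log n}$ guesses of $\vc$ and the powers of $2$ $k \in (k^{*},n]$; the $n/\eps^{2}$ part contributes $\apxO{\parof{n-\outdegree{\root}}n/\eps^{2}}$, and the $n^{1.5}/k$ part, being a geometric series in $k$, contributes $\apxO{\parof{n-\outdegree{\root}}n^{1.5}/k^{*}} = \apxO{\parof{n-\outdegree{\root}}n/\eps}$, which is again $\apxO{n\parof{n-\outdegree{\root}}/\eps^{2}}$. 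In the degenerate regime $\eps < 1/\sqrt n$, running only the big-sink calls over all powers of $2$ $k\ge 1$ gives an $n^{1.5}/k$-sum bounded by $\apxO{\parof{n-\outdegree{\root}}n^{1.5}}$, which is $\apxO{\parof{n-\outdegree{\root}}n/\eps^{2}}$ there since $\sqrt n \le 1/\eps^{2}$. In all cases the total is $\apxO{m + n\parof{n-\outdegree{\root}}/\eps^{2}}$.

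Essentially all the difficulty has already been discharged into \reflemma{apx-rvc-sparsification}, \reflemma{apx-rvc-small-sink} and \reflemma{apx-rvc-big-sink}; the only real content here is the balancing, and the two things to get right are (i) picking $k^{*} = \bigTheta{\eps\sqrt n}$ so that the $(k^{*})^{2}/\eps^{4}$ cost of the small-sink branch and the $n^{1.5}/k^{*}$ cost of the large-sink branch both land at $n/\eps^{2}$, and (ii) noticing that for $t$ in the true sink component the subroutines approximate $\vc_{t,k}=\vc^{*}$ rather than the guess $\vc$, so that a constant-factor overestimate of $\vc^{*}$ --- hence only $\bigO{\log n}$ power-of-$2$ guesses --- suffices, which is exactly what keeps the $\eps$-dependence at $\eps^{-2}$ and not $\eps^{-3}$.
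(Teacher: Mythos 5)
Your proof is correct and follows essentially the same approach as the paper's own (terse) proof: balance \reflemma{apx-rvc-small-sink} against \reflemma{apx-rvc-big-sink} at the threshold $k^{*} = \Theta(\eps\sqrt{n})$, enumerating power-of-$2$ guesses for the parameters. You've worked out the details more carefully than the paper does --- in particular, the observation that the subroutines actually deliver a $\epsmore\vc^{*}$ (rather than $\epsmore\vc$) bound for $t\in\optSink$, so a constant-factor overestimate of $\vc^{*}$ suffices and only $\bigO{\log n}$ guesses of $\vc$ are needed, is a real point that the paper's one-paragraph proof glosses over with ``we can assume access to values $\vc$ and $k$ that are within a factor 2.'' Your enumeration strategy (fix $k^{*}$ for the small-sink branch, sweep $k$ geometrically for the big-sink branch) differs cosmetically from the paper's (enumerate $(k,\vc)$ pairs and run whichever lemma is faster), but both reduce to the same balancing and the same $\apxO{m + n(n-\outdegree{\root})/\eps^{2}}$ bound.
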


\begin{proof}
  The high-level approach is similar to \Cref{thm:edge-apx} for edge
  mincut -- we are balancing two algorithms for rooted vertex
  mincut, where one is better suited for small sink components,
  the second is better suited for large sink components. Both leverage
  the randomized sparsification lemma.  As before, with
  polylogarithmic overhead, we can assume access to values $\vc$ and
  $k$ that are within a factor 2 of the weight of the minimum $\root$-cut
  and the number of vertices in the sink component of the minimum
  $\root$-cut, respectively.  For a fixed choice of $k$ and $\vc$, we
  run the faster of two randomized algorithms, both of which would
  succeed with high probability when $k$ and $\vc$ are
  (approximately) correct.  The first option, given by
  \reflemma{apx-rvc-small-sink}, runs in
  $\apxO{(n - \outdegree{\root}) k^2 / \eps^3}$.  The second option, given
  by \reflemma{apx-rvc-big-sink}, runs in
  $\apxO{n(n-\outdegree{\root}) / \eps^2 + n^{1.5} (n - \outdegree{\root}) /
    k}$.  The overall running time is obtained by choosing $k$ to
  balance the running times.  For $k = \eps \sqrt{n}$, we obtain the
  claimed running time.
\end{proof}

Next we use the algorithm for rooted vertex mincut to obtain an
algorithm for global vertex mincut and establish
\refcorollary{apx-vc}.  \citet{hrg-00} showed that running times of
the form $(n-\outdegree{\root}) T$ for rooted mincut from a root $\root$
imply a randomized $n T$ expected time algorithm for global vertex
mincut.  \reftheorem{apx-rvc} gives a
$\apxO{m + n (n-\outdegree{\root}) / \eps^2}$ running time, so some
modifications have to be made to address the additional $\apxO{m}$
additive factor.  This establishes the remaining part of \ref{thm:vertex-apx}.

\begin{restatable}{corollary}{ApxVC}
  \labelcorollary{apx-vc} For all $\eps \in (0,1)$, a
  $\epsmore$-approximate minimum weight global vertex cut in a
  directed graph with polynomially bounded vertex weights can be
  computed with high probability in $\apxO{n^2 / \eps^2}$ expected
  time.
\end{restatable}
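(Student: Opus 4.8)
The plan is to derive the global algorithm from the rooted algorithm of \reftheorem{apx-rvc} via the reduction of \citet{hrg-00}, and then to deal with the extra additive $\apxO{m}$ term that \reftheorem{apx-rvc} carries but the clean $(n-\outdegree{\root})T$ template of \citet{hrg-00} does not. Recall that a global vertex cut $(L,X,R)$ with separator $X$ coincides with a minimum rooted $\root$-cut of $G$ (sink side $R$) for every $\root\in L$, and with a minimum rooted $\root$-cut of $\revG$ (sink side $L$) for every $\root\in R$; the only bad event is $\root\in X$. So it suffices to run the rooted algorithm in both $G$ and $\revG$ from a collection of roots that, for \emph{some} minimum vertex cut, avoids the separator, and to return the best cut found. \citet{hrg-00} construct such a collection for which the values $n-\outdegree{\root}$ over the chosen roots sum to $\apxO{n}$ in expectation; feeding in any rooted subroutine of cost $(n-\outdegree{\root})T$ then yields expected time $\apxO{nT}$. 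Since \reftheorem{apx-rvc} has exactly this shape with $T=\apxO{n/\eps^2}$, the main term is $\apxO{nT}=\apxO{n^2/\eps^2}$.

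The remaining issue is the additive $\apxO{m}$ that \reftheorem{apx-rvc} pays on each invocation: if \citet{hrg-00}'s scheme makes $N$ rooted calls, naively this contributes $\apxO{Nm}$, and $N$ can be as large as (roughly) the vertex connectivity, so this is potentially as bad as $\apxO{nm}$. I would fix this by observing that the $\apxO{m}$ cost is incurred entirely in the linear-time sparsification step (\reflemma{apx-rvc-sparsification}), whose output already has only $\apxO{nk/\eps^2}=\apxO{n^{1.5}/\eps}$ edges at the balanced setting $k=\eps\sqrt n$ used to prove \reftheorem{apx-rvc}. The plan is therefore to sparsify once, at the top, for each of the $\apxO{1}$ guesses of the parameters $\vc,k$, and to run all subsequent rooted calls on the resulting $\apxO{n^{1.5}/\eps}$-edge graph, paying only $\apxO{n^{1.5}/\eps}$ per call; since one also shows $N=\apxO{\sqrt n}$ in the regime that dominates, the additive cost collapses to $\apxO{N\cdot n^{1.5}/\eps}=\apxO{n^2/\eps}$. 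Combined with the $\apxO{n^2/\eps^2}$ main term and the $\apxO{m+n^{1.5}}$-time vertex-capacitated maxflow of \cite{brand+20} already used inside \reftheorem{apx-rvc}, the total is $\apxO{n^2/\eps^2}$.

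The main obstacle is that \reflemma{apx-rvc-sparsification} is stated for a \emph{fixed} root, whereas \citet{hrg-00}'s reduction queries the rooted subroutine from many different roots, and a sparsifier built for one root need not preserve $\root'$-cuts for another root $\root'$. I would address this by exploiting that the sampling in \reflemma{apx-rvc-sparsification} depends on the root only through the set $V\setminus(\{\root\}\cup\outneighbors{\root})$, and that its cut-preservation guarantee holds simultaneously for all candidate sink sides: thus a sparsifier built for any one root already preserves every vertex in-cut whose sink side avoids that root's closed out-neighborhood, and only the (explicitly enumerated) roots themselves ever need to touch the original graph $G$. An alternative, and possibly cleaner, route is to re-examine \citet{hrg-00}'s reduction and verify that all but $\apxO{1}$ of its rooted calls can be run on a graph that has already been contracted into $\root$ and sparsified down to $\apxO{n^{1.5}/\eps}$ edges, so that the raw $\apxO{m}$ term is paid only $\apxO{1}$ times outright. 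In either case, the delicate part is carrying out \citet{hrg-00}'s reduction with an \emph{approximate}, $\apxO{m}$-additive rooted subroutine in place of their exact $(n-\outdegree{\root})T$-time one, and this yields \refcorollary{apx-vc}.
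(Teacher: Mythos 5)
Your plan is essentially the paper's: reduce global to rooted via the sampling scheme of \citet{hrg-00}, pay the $\apxO{m}$ sparsification cost only once, and bound the remaining per-root work by the expectation $\evof{(n-\outdegree{\root})T}$ over roots sampled in proportion to weight. Your first proposed fix for the root-dependence of \reflemma{apx-rvc-sparsification} also matches the paper in spirit: the paper observes that the only root-specific step of the sparsification algorithm is \refstep{rvc-extra-vertices} (adding the auxiliary $\root\to a_v\to v$ paths, with the reweighted edge to $\root$ in \refstep{rvc-contract} likewise cheap to localize), moves it to the end, runs the root-independent steps once for all roots in $\bigO{m}$ time, and then spends only $\bigO{n}$ per sampled root to complete the sparsifier. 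That is a more concrete version of your ``delay the root-specific part'' idea.

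Where your write-up goes astray is in the accounting for the additive term. You argue that each rooted call on the pre-sparsified graph contributes $\apxO{n^{1.5}/\eps}$ and then assert $N=\apxO{\sqrt{n}}$ roots ``in the regime that dominates,'' giving $\apxO{n^2/\eps}$. Neither piece is right: the number of sampled roots is $L = \bigO{W\log n/(W-\vc)}$, which can be as large as $\tO(n)$ when $\vc$ is close to $W$; and the sparsifier-size cost is not a flat additive $\apxO{n^{1.5}/\eps}$ per root, because the rooted algorithm for large sink components runs $\apxO{(n-\outdegree{\root})/k}$ maxflows on the sparsifier, so the $m_0=\apxO{nk/\eps^2}$ term is already absorbed into the $(n-\outdegree{\root})T$ bound once $k=\eps\sqrt{n}$. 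The paper never bounds $L$; it instead shows $\evof{(n-\outdegree{\root})T}\le nT(1-\vc/W)$, which exactly cancels the $W/(W-\vc)$ factor in $L$ and gives $\apxO{n^2/\eps^2}$ regardless of $\vc$. So your ``alternative, possibly cleaner route'' is not needed, and the $N=\apxO{\sqrt{n}}$ claim it rests on is false; the primary route you describe, with the accounting tightened as above, is the proof the paper gives.
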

\begin{proof}
  Let $T = \apxO{n / \eps^2}$.  Let $w: V \to \preals$ denote the
  vertex weights, and let $W = \sum_{v \in V} \weight{v}$ be the total
  weight of the graph. Let $\vc$ denote the weight of the minimum
  global vertex cut. The algorithm samples
  $L = \bigO{W \log{n} / (W - \vc)}$ vertices $\root$ in proportion to
  their weight, and -- morally, but not actually -- computes the
  minimum $\root$-vertex cut for each sampled vertex $\root$ via
  \reftheorem{apx-rvc}. It returns the smallest cut found.

  For the sake of running time, we adjust the algorithm from
  \reftheorem{apx-rvc}. Recall that for a fixed root $\root$, and for
  each of a logarithmic number of values for $k$ and $\vc$, the
  algorithm from \reftheorem{apx-rvc} applies
  \reflemma{apx-rvc-sparsification} which reduces the graph to having
  $\apxO{n k / \eps^2}$ edges and rooted mincut $\apxO{k/\eps^2}$. For
  fixed $k$ and $\vc$, rather than rerun
  \reflemma{apx-rvc-sparsification} entirely for each $\root$ we
  sample, we execute most of it just once for all $\root$, and make
  local modifications for each different root $\root$. Referring to
  the algorithm given in the proof of
  \reflemma{apx-rvc-sparsification}, observe that the only step that
  directly mentions $\root$ is step \refstep{rvc-extra-vertices},
  which adds auxiliary vertices between $\root$ and each other vertex
  $v$. We move this step to the very end of the algorithm. (Here the
  vertex weight of auxiliary vertices is scaled down appropriately.)
  It is easy to see that the proof of
  \reflemma{apx-rvc-sparsification} still goes through (with minor
  rearrangement in the argument). The advantage is that, over all $L$
  roots $\root$, we now spend a total of $\bigO{m + n L}$ time, rather
  than $\bigO{m L}$.  Thereafter, the rest of the rooted mincut
  algorithm takes $\apxO{(n-\outdegree{\root}) T}$ per root
  $\root$. Note that $\apxO{(n-\outdegree{\root}) T}$ dominates the
  $\bigO{n}$ time required to complete the sparsification for each
  root.

  Consider a single root $\root$ sampled from $V$ in proportion to its
  weight. The expected running time to compute the minimum $\root$-cut is
  \begin{align*}
    \evof{\parof{n - \outdegree{\root}} T}
    =                           %
    n T - \frac{T}{W}\sum_{v \in V} \outdegree{v} \weight{v}
    \tago{=}
    n T - \frac{T}{W} \sum_{v \in V} \sum_{x \in \inneighbors{v}} \weight{x}
    \tago{\leq}
    n T \parof{1 - \frac{\vc}{W}}.
  \end{align*}
  Here, in \tagr, $\inneighbors{v}$ denotes the in-neighborhood of
  $v$. The equality is obtained by implicitly interchanging
  sums. \tagr is because for each $v$, the sum
  $\sum_{x \in \inneighbors{v}} \weight{x}$ is the weighted in-degree
  of $v$, and at least $\vc$.  Thus The overall expected running time over
  all the sampled roots is
  \begin{align*}
    \bigO{\prac{W \log{n}}{W - \vc} \evof{\parof{n - \outdegree{\root}}
    T}}
    =
    \bigO{\prac{W \log{n}}{W - \vc} \cdot nT \parof{1 -
    \frac{\vc}{W}}}
    =                           %
    \apxO{n^2 / \eps^2}.
  \end{align*}
  Meanwhile, when we sample $\bigO{W \log{n} / (W - \vc)}$ vertices in
  proportion to their weight, then with high probability, at least one
  sampled vertex lies outside the minimum global vertex cut. Such a
  vertex then leads to the minimum global vertex cut with high
  probability.
\end{proof}


\end{krq}

\subsection*{Acknowledgement}
     This project has received funding from the European Research
        Council (ERC) under the European Union's Horizon 2020 research
        and innovation programme under grant agreement No
        715672. Nanongkai was also partially supported by the Swedish
        Research Council (Reg. No. 2019-05622). Panigrahi was supported in part
        by NSF Awards CCF 1750140 and CCF 1955703.  Quanrud was supported in part by NSF grant CCF-2129816. Cen and Panigrahi  would like to thank Yu Cheng and Kevin Sun for helpful discussions at the initial stages of this project. Quanrud thanks Chandra Chekuri for helpful discussions and feedback.
\bibliographystyle{alpha}

\bibliography{refs}

\appendix

\section{Proof of  \reflemma{apx-rec-skeleton}}
\label{sec:proof skeleton}

\begin{proof}
  For ease of notation, we prove \reflemma{apx-rec-skeleton} except
  with $\epsmore$ replaced by $(1 + c \eps)$, for a fixed constant
  $c > 1$, in properties \ref{rec-skeleton-min-cut} and
  \ref{rec-skeleton-apx-cuts}. The constant factor can then be removed
  by decreased $\eps$ by a constant factor in the construction.

  Let $\tau = c_{\tau} \eps^2 \ec / (k \log n)$ for a sufficiently
  small constant $c_{\tau} > 0$. Decreasing $c_{\tau}$ by at most a constant factor, we can assume that $\lambda$ and $\eps \lambda / 2k$ are integer multiples of $\tau$. Let $G_0 = (V,E_0)$ be the reweighted subgraph obtained as follows.
  \begin{enumerate}
  \item \labelstep{rec-importance-sample} Randomly round the weight of
    each edge independently up or down to the nearest multiple $\tau$,
    preserving the weight of each edge in expectation.\footnote{That is, if the weight of an edge $e$ in $G$ is $x\cdot \tau+y$ for an integer $x\geq 0$ and any $0\leq y<\tau$, then the weight of $e$ is  $(x+1)\cdot \tau$ with probability $y/\tau$ and $x\cdot \tau$ otherwise.} Drop any edges with weight rounded down to $0$.
  \item \labelstep{rec-extra-edges} Add an edge of weight
    $\eps \lambda / 2k$ from the root $\root$ to every vertex
    $v \in V - \root$.
  \item \labelstep{rec-scale-down} Scale down all the edge weights by $\tau$.
  \end{enumerate}

  For each set $\Sink\subseteq V$, let $w_1(\Sink) = \incutsize{\Sink}{G}$ denote
  the weight of the in-cut at $\Sink$ in $G$. Let $w_2(\Sink)$ denote
  the randomized weight of the in-cut after randomly rounding in step
  \refstep{rec-importance-sample}.  Let $w_3(\Sink)$ denote the
  randomized weight of the in-cut of $\Sink$ after adding the
  auxiliary edges in \refstep{rec-extra-edges}.  The first claim
  analyzes the concentration of $w_2(\Sink)$ for all sets
  $\Sink \subseteq V$.

  \begin{claims}
  \item \labelclaim{rec-concentration} With high probability, for all
    $\Sink \subseteq V$,
    \begin{align*}
      \absvof{w_2(\Sink) - w_1(\Sink)} \leq \eps w_1(\Sink) + \frac{\eps \lambda
      \sizeof{\Sink}}{2 k}.
    \end{align*}
  \end{claims}
  The above claim consists of an upper and lower bound on $w_2(\Sink)$ for
  all $\Sink$.  We first show the lower bound on $w_2(\Sink)$ holds for all $\Sink$
  with high probability.  Fix $\Sink \subseteq V$.  $w_2(\Sink)$ is an
  independent sum with expected value $w_1(\Sink)$ and where each term in
  the sum is nonnegative and varies by at most $\tau$. By a variation
  of standard Chernoff
  inequalities\footnote{\label{footnote:additive-chernoff} Here we
    apply the following bounds (appropriately rescaled) which follow
    from the same proof as the standard multiplicative Chernoff bound.
    \begin{quote}
      \itshape Let $X_1,\dots,X_n \in [0,1]$ independent random
      variables. Then for all $\eps > 0$ sufficiently small and all
      $\gamma > 0$,
      \begin{align*}
        \probof{X_1 + \cdots + X_n \leq \epsless \evof{X_1 + \cdots + X_n} -
        \gamma} \leq e^{-\eps \gamma},
      \end{align*}
      and
      \begin{align*}
        \probof{X_1 + \cdots + X_n \geq \epsmore \evof{X_1 + \cdots + X_n} +
        \gamma} \leq e^{-\eps \gamma}.
      \end{align*}
    \end{quote}
  }, for any $\gamma \geq 0$, we have
  \begin{align*}
    \probof{w_2(\Sink) \leq \epsless  w_1(\Sink) - \gamma} \leq e^{- \eps \gamma /
    \tau}
    =                           %
    n^{- \gamma k \log{n} / c_{\tau} \eps \lambda},
  \end{align*}
  In particular, for $\gamma = \eps \lambda \sizeof{\Sink} / 2 k$, the RHS is at most
  \begin{math}
    n^{- c_0 \sizeof{\Sink}}
  \end{math}
  where $c_0$ is a constant under our control (via $c_{\tau}$). For
  large enough $c_0$, we can take the union bound over all sets of
  vertices.  This establishes that the lower bounds for $w_2(\Sink)$
  hold for all $\Sink$ with high probability. The upper bounds also
  hold with high probability by a symmetric argument.

  Now we analyze the in-cuts after step
  \refstep{rec-extra-edges}.
  Recall that for $\Sink \subseteq V$, $w_3(\Sink)$ denotes the weight of
  the in-cut of $\Sink$ after adding the auxiliary edges in
  \refstep{rec-extra-edges}.
  \begin{claims}
  \item \labelclaim{rec-padded-concentration} With high
    probability, for all $\Sink \subseteq V - \root$, we have
    \begin{align*}
      \epsless w_1(\Sink) \leq w_3(\Sink) \leq \epsmore w_1(\Sink) + \eps \lambda
      \sizeof{\Sink} / k.
    \end{align*}
  \end{claims}
  Indeed, we have
  $w_3(\Sink) = w_2(\Sink) + \eps \lambda \sizeof{\Sink} / 2k$ for all
  $\Sink \subseteq V - \root$. The additive term introduced by $h$ offsets
  the additive error in the lower bound on $w_2(\Sink)$ in
  \refclaim{rec-concentration}. This term also adds on to the additive
  error in the upper bound of \refclaim{rec-concentration} for a total
  of $\eps \lambda \sizeof{\Sink} / k$. Thus in the high probability
  event of \refclaim{rec-concentration}, we have the bounds described
  by \refclaim{rec-padded-concentration} for all $\Sink$.

  Henceforth, let us assume that the high probability event in
  \refclaim{rec-padded-concentration} holds. (Otherwise the algorithm
  fails.) We now argue that \refclaim{rec-padded-concentration} implies that properties \ref{rec-skeleton-small-cut}, \ref{rec-skeleton-min-cut} and \ref{rec-skeleton-apx-cuts} hold whp.


 {\bf \ref{rec-skeleton-min-cut}:}
  Since
  the weight of each $\root$-cut decreases by at most an
  $\epsless$-multiplicative factor, the minimum $\root$-cut $\optSink$ has weight at
  least $\epsless \incutsize{\optSink}{G}$. Meanwhile $\incutsize{\optSink}$
  increases by at most a $\parof{1 + 2 \eps}$-multiplicative
  factor. Thus $\incutsize{\optSink}$ remains a $\apxmore$-approximate
  minimum $r$-cut.  This establishes property
  \ref{rec-skeleton-min-cut}.

  {\bf  \ref{rec-skeleton-apx-cuts}:}
  Since again the weight
  of any $\root$-cut decreases by at most an $\epsless$-multiplicative
  factor, any $\alpha$-approximate minimum cut in the randomly
  reweighed graph $G_0$ is an $\apxmore \alpha$-approximate min $\root$-cut in $G$. This establishes property
  \ref{rec-skeleton-apx-cuts}.

  {\bf \ref{rec-skeleton-small-cut}:}
  Finally, we observe that before
  scaling in step \refstep{rec-scale-down}, all edge weights are at least $\tau$, so after scaling, all
  edge weights are at least 1. Meanwhile, after scaling, the minimum
  $r$-cut has weight at most
  $\bigO{\incutsize{T} / \tau} = \bigO{k \log{n} / \eps^2}$. This
  establishes property \ref{rec-skeleton-small-cut} and completes the
  proof.
\end{proof}

\section{Proof of \Cref{thm:tree-packing}}
\label{sec:packarb_details}

We start by defining the fractional arborescence packing problem.
Let $A_1,A_2,\ldots, A_N$ be the set of all $s$-arborescences. We represent a fractional packing of arborescences as a vector $x\in \mathds{R}^N$, where $x_i$ represents the fractional contribution of arborescence $A_i$ in the packing.
Define the {\em value} of the arborescence packing as the sum of the coordinates of $x$, i.e., $\val(x) := \sum_{i=1}^N x_i$.
The goal is to maximize $\val(x)$, under the constraint that each edge $j$ can be fractionally used only up to its weight $w(j)$.
This definition generalizes integral arborescence packing, where we pack a multiset of arborescences, under the constraint that each edge $j$ is used at most $w(j)$ times, and the value is the cardinality of the multiset.

It will be convenient to state the fractional arborescence packing problem in the framework of a standard packing problem, defined below:
\begin{definition}[Packing problem \cite{Young1995}]
  \label{def:packing}
  For convex set $P\subseteq \mathds{R}^n$ and nonnegative linear function $f:P\to \mathds{R}^m$, the packing problem aims to find $\gamma^*=\min_{x\in P}\max_{j\in[m]} f_j(x)$, i.e., the solution in $P$ that minimizes the maximum value of $f_j(x)$ over all $j$.  The {\em width} of the packing problem $(P, f)$ is defined as $\omega=\max_{j\in[m],x\in P}f_j(x)-\min_{j\in[m],x\in P}f_j(x)$.
\end{definition}
For the fractional arborescence packing problem, we have the following: $P=\{x\in \mathds{R}^N : \val(x)=1,x\ge0\}$ is the convex hull of all arborescences. The function $f:P\to \mathds{R}^m$ defines the total usage of each edge $j$ in the arboresence packing normalized by the edge weight, i.e.,  $f_j(x)= \frac{\sum_{i\in[N]: j\in A_i} x_i}{w(j)}$ for $x\in P$ and any edge $j\in [m]$; we call this the {\em load} of arborescence packing $x$ on edge $j$.
The packing problem has width 
\begin{align}
    \omega = \max_{j\in[m],x\in P}f_j(x)-\min_{j\in[m],x\in P}f_j(x) \le \max_{j\in[m]}\frac{1}{w_j}-0 = \frac{1}{w_{\min}}, \label{eq:width bound}
\end{align}
where $w_{\min}$ is the minimum edge weight.
The objective function is to minimize the maximum load: $\gamma^*=\min_{x\in P}\max_{j\in[m]}f_j(x)$.

For any $x\in P$ with maximum load $\gamma=\max_{j\in[m]} f_j(x)$, we can multiply it by $1/\gamma$ to get an arborescence packing with value $1/\gamma$ and maximum load 1.
Conversely, for any fractional arborescence packing $x\in\mathds{R}^N$ with $\val(x)=v$ where $f_j(x)\le1$ for all edges $j$, we have $x/v\in P$.
Therefore, it suffices to look for the vector $x\in P$ achieving the optimal value $\gamma^*$, and then scale the vector up by $1/\gamma^*$ to obtain the maximum arborescence packing.
If the value of maximum $\root$-arborescence packing is $\lambda^*$, then
the optimal value of the packing problem is:
\begin{align}
    \gamma^*=\frac{1}{\lambda^*}. \label{eq:gamma star}
\end{align}

Next we describe the packing algorithm (Figure 2 of \cite{Young1995}).  Maintain a vector $y\in \mathds{R}^m$, initially set to $y=\bf{1}$.  In each iteration, find $x=\arg\min_{x\in P}\sum_j y_j f_j(x)$, and then add $x$ to set $S$ and replace $y$ by the vector $y'$ defined by $y'_j=y_j(1+\epsilon f_j(x))/\omega)$.  After a number of iterations, return $\bar{x}\in P$, the average of all the vectors $x$ over the course of the algorithm. The lemma below upper bounds the number of iterations that suffice:

\begin{lemma}[Corollary 6.3 of \cite{Young1995}]
  \label{lem:packing}
  After $\lceil\frac{(1+\epsilon)\omega\ln m}{\gamma^*((1+\epsilon)\ln(1+\epsilon)-\epsilon)}\rceil$ iterations of the packing algorithm, $\bar{\gamma}:=\max_j f_j(\bar{x})\le (1+\epsilon)\gamma^*$.
\end{lemma}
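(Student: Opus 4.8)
The plan is the textbook multiplicative-weights potential argument applied to the ``total weight'' $\Phi^{(t)} \defeq \sum_{j \in [m]} y^{(t)}_j$ measured at the start of iteration $t$, so that $\Phi^{(0)} = m$ (since $y^{(0)} = \mathbf{1}$). I will (i) upper bound $\Phi^{(T)}$ in terms of how well the oracle competes against a fixed optimal $x^\star$, and (ii) lower bound $\Phi^{(T)}$ by isolating a single coordinate $j$ that ends up heavily loaded by the average iterate $\bar x$. Comparing the two bounds forces $T$ to be at least the claimed quantity before $\bar\gamma$ can exceed $(1+\epsilon)\gamma^*$.

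For the upper bound, fix iteration $t$ and recall that the iterate $x^{(t)}$ minimizes $\sum_j y^{(t)}_j f_j(x)$ over $x \in P$. Since $\gamma^* = \min_{x \in P}\max_j f_j(x)$, there is a point $x^\star \in P$ with $f_j(x^\star) \le \gamma^*$ for all $j$, hence $\sum_j y^{(t)}_j f_j(x^{(t)}) \le \sum_j y^{(t)}_j f_j(x^\star) \le \gamma^* \Phi^{(t)}$. Substituting the update rule $y^{(t+1)}_j = y^{(t)}_j\bigl(1 + \epsilon f_j(x^{(t)})/\omega\bigr)$ into the definition of $\Phi$ gives $\Phi^{(t+1)} = \Phi^{(t)} + \tfrac{\epsilon}{\omega}\sum_j y^{(t)}_j f_j(x^{(t)}) \le \Phi^{(t)}\bigl(1 + \epsilon\gamma^*/\omega\bigr)$. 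Iterating from $t = 0$ to $T-1$ and using $1 + a \le e^a$ yields $\Phi^{(T)} \le m\, e^{\epsilon\gamma^* T/\omega}$.

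For the lower bound, I use that the width normalization (together with $f \ge 0$) places $f_j(x)/\omega$ in $[0,1]$ for every $j$ and $x \in P$, so the convexity of $z \mapsto (1+\epsilon)^z$ on $[0,1]$ gives $1 + \epsilon z \ge (1+\epsilon)^z$ there. Therefore, for every $j$, $y^{(T)}_j = \prod_{t=0}^{T-1}\bigl(1 + \epsilon f_j(x^{(t)})/\omega\bigr) \ge (1+\epsilon)^{\sum_{t} f_j(x^{(t)})/\omega}$, and since $f$ is linear with $\bar x = \tfrac1T\sum_t x^{(t)}$, the exponent is exactly $T f_j(\bar x)/\omega$. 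Choosing $j$ to attain $\bar\gamma = \max_j f_j(\bar x)$ and using $\Phi^{(T)} \ge y^{(T)}_j$ gives $(1+\epsilon)^{T\bar\gamma/\omega} \le m\, e^{\epsilon\gamma^* T/\omega}$.

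Taking logarithms and dividing by $\tfrac{T}{\omega}\ln(1+\epsilon)$ gives $\bar\gamma \le \tfrac{\epsilon\gamma^*}{\ln(1+\epsilon)} + \tfrac{\omega \ln m}{T\ln(1+\epsilon)}$, so $\bar\gamma \le (1+\epsilon)\gamma^*$ holds once $\tfrac{\omega\ln m}{T\ln(1+\epsilon)} \le \gamma^*\bigl((1+\epsilon) - \tfrac{\epsilon}{\ln(1+\epsilon)}\bigr) = \gamma^*\cdot\tfrac{(1+\epsilon)\ln(1+\epsilon) - \epsilon}{\ln(1+\epsilon)}$, i.e.\ once $T \ge \tfrac{\omega\ln m}{\gamma^*\bigl((1+\epsilon)\ln(1+\epsilon) - \epsilon\bigr)}$; the stated threshold, with its extra $(1+\epsilon)$ factor and the ceiling (needed only because $T$ must be an integer), comfortably exceeds this. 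I expect the only genuinely delicate point to be the lower-bound step: one must verify that the width $\omega$ really normalizes $f_j(x)/\omega$ into $[0,1]$ --- for a packing problem with nonnegative $f$ this is precisely what $\omega = \max_{j,x} f_j(x) - \min_{j,x} f_j(x)$ buys once one takes the conventional $\min = 0$ (cf.\ the bound $\omega \le 1/w_{\min}$ in the excerpt) --- since the inequality $1 + \epsilon z \ge (1+\epsilon)^z$ fails for $z > 1$; the remaining steps are routine bookkeeping.
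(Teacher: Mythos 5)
The paper does not prove this lemma; it is taken verbatim as Corollary~6.3 of Young~\cite{Young1995} and cited as a black box. So there is no internal proof to compare yours against, and the appropriate question is simply whether your blind reconstruction of Young's argument is correct.

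Your potential-function argument is the standard one and the bookkeeping checks out. The upper bound $\Phi^{(T)}\le m\,e^{\epsilon\gamma^*T/\omega}$ via the oracle optimality of each $x^{(t)}$ against a fixed $x^\star$ achieving $\gamma^*$, the lower bound $\Phi^{(T)}\ge(1+\epsilon)^{T\bar\gamma/\omega}$ via $1+\epsilon z\ge(1+\epsilon)^z$ on $[0,1]$ together with linearity of $f$ (so that $\sum_t f_j(x^{(t)})=T f_j(\bar x)$), and the final rearrangement to $\bar\gamma\le\frac{\epsilon\gamma^*}{\ln(1+\epsilon)}+\frac{\omega\ln m}{T\ln(1+\epsilon)}$ are all right, and this gives the claimed bound (in fact without the extra $(1+\epsilon)$ factor in the numerator, which you correctly observe is slack in the stated iteration count).

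One point you flag as delicate deserves a slightly sharper statement than you give it. You need $f_j(x)/\omega\le 1$ for all $j,x$, i.e.\ $\max_{j,x}f_j(x)\le\omega$. Since $\omega=\max_{j,x}f_j(x)-\min_{j,x}f_j(x)$ and $f\ge 0$, this is equivalent to $\min_{j,x}f_j(x)=0$; calling this a ``convention'' understates it --- it is a property of the instance. For the arborescence packing in this paper it does hold whenever some edge is avoided by some $\root$-arborescence, which is the generic case, and in Young's general framework one can always reduce to $\min=0$ by translating $f$. With that clarification your proof stands; the remaining gap to the precise constant in Young's statement is explained by the fact that his corollary is stated for a broader class of instances and is simply not tight for this potential argument.
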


We will also make use of the (exact) duality between $s$-arborescence packing and minimum $s$-cut:

\begin{lemma}[Corollary 2.1 of \cite{Gabow1995}]
  \label{lem:duality}
  The value of maximum $\root$-arborescence packing is equal to the value of $\root$-mincut.
\end{lemma}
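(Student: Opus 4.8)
The plan is to prove the equality as two inequalities: the easy direction by a direct load-counting argument, and the hard direction by reducing to Edmonds' theorem on edge-disjoint arborescences. Let $\lambda = \min\{\outcutsize{U} : \root \in U \subsetneq V\}$ be the value of the $\root$-mincut; since $\outcut{U}$ is exactly the set of edges entering $W := V \setminus U$, we equivalently have $\lambda = \min\{\incutsize{W} : \emptyset \neq W \subseteq V \setminus \root\}$. Let $A_1,\dots,A_N$ enumerate all $\root$-arborescences, and recall that a feasible (fractional) packing is a vector $x \in \mathbb{R}_{\ge 0}^N$ with $\sum_{i : e \in A_i} x_i \le w(e)$ for every edge $e$, of value $\val(x) = \sum_i x_i$; the maximum packing value is the quantity the lemma claims equals $\lambda$.

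\emph{Upper bound.} Fix any feasible $x$ and any $U$ with $\root \in U \subsetneq V$. Every $\root$-arborescence $A_i$ uses at least one edge of $\outcut{U}$: for any $v \in V \setminus U$, the unique directed $\root$-to-$v$ path inside $A_i$ starts in $U$ and ends outside it, so it traverses an edge leaving $U$. Hence $\val(x) = \sum_i x_i \le \sum_i x_i\,\sizeof{A_i \cap \outcut{U}} = \sum_{e \in \outcut{U}} \sum_{i : e \in A_i} x_i \le \sum_{e \in \outcut{U}} w(e) = \outcutsize{U}$. Minimizing over $U$ gives $\val(x) \le \lambda$, so the maximum packing value is at most $\lambda$.

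\emph{Lower bound.} It suffices to exhibit an \emph{integral} packing of value $\lambda$. As edge weights are positive integers, let $G'$ be the multigraph formed from $G$ by replacing each edge $e$ with $w(e)$ parallel copies. For every nonempty $W \subseteq V \setminus \root$, the number of arcs of $G'$ entering $W$ equals $\incutsize{W}$ in $G$, hence is at least $\lambda$. Edmonds' disjoint-arborescences theorem \cite{Edmonds73} then yields $\lambda$ pairwise edge-disjoint spanning $\root$-arborescences of $G'$; collapsing parallel copies back to their original edges turns these into $\root$-arborescences of $G$ in which each edge $e$ appears at most $w(e)$ times --- an integral packing of value $\lambda$. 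Combined with the upper bound, the maximum packing value equals $\lambda$, and it is attained integrally. For rational (non-integer) weights, one first rescales all weights to integers, which changes neither side.

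\emph{Main obstacle.} The entire content lies in the lower bound, which is exactly as deep as Edmonds' branching theorem; if citing it is not allowed, the real work is to reprove it, e.g.\ via Lov\'asz's inductive exchange argument --- grow a single spanning $\root$-arborescence one edge at a time while maintaining that the remaining (multi)graph still has at least $\lambda-1$ arcs entering every nonempty $W \not\ni \root$, then recurse. An LP-duality formulation (the packing LP versus its fractional ``cut-cover'' dual, whose separation oracle is the minimum-weight arborescence computed by Chu--Liu/Edmonds) is an alternative route, but proving that the dual optimum is attained by a scalar multiple of the indicator of a directed cut reduces to the same polyhedral fact, so it does not sidestep the difficulty. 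The load-counting argument and the rational-to-integer reduction are routine.
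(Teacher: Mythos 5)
The paper does not prove this lemma; it simply cites it as Corollary~2.1 of \cite{Gabow1995}, which is the capacitated form of Edmonds' branching theorem. Your proof --- weak duality by load-counting for $\val(x)\le\lambda$, and the lower bound via splitting each edge into $w(e)$ parallel copies and invoking Edmonds' disjoint-arborescence theorem \cite{Edmonds73} --- is correct and is essentially the standard derivation of the cited result, so there is nothing to compare against beyond noting that your reduction matches the content of Gabow's corollary.
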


We now prove \Cref{thm:tree-packing}.

\begin{proof}[Proof of \Cref{thm:tree-packing}]
  First, construct $G_0$ according to \Cref{sparsification}, with edge size $m_0=\min\{m,O(nk\log n/\epsilon^2)\}$. Let $\lambda_0$ denote the minimum $s$-cut value on $G_0$. \Cref{sparsification} guarantees
 \begin{align}
    w_{\min}\ge 1 \mbox{ and } \lambda_0 = O(k\log n/\epsilon^2). \label{eq:lambda 0 upper bound}
 \end{align}
By the duality (\Cref{lem:duality} and \eqref{eq:gamma star}),
the optimal value of the packing problem formulated above is  $\gamma^*=\frac{1}{\lambda_0}$.
%
Run the aforementioned arborescence packing algorithm. \Cref{lem:packing}, \eqref{eq:width bound},\eqref{eq:gamma star} and \eqref{eq:lambda 0 upper bound} guarantee that after
$$O\left(\frac{\omega \ln m}{\gamma^*}\right)=O\left(\frac{\lambda_0\ln m}{w_{\min}}\right)= O(\lambda_0\ln m)= O(k \log^2 m)$$
iterations (with constant $\epsilon$) we have
\begin{align}
    \bar{\gamma}&\le(1+\epsilon)\gamma^*.
\end{align}
Then $\bar{x}/\bar{\gamma}$ is an arborescence packing with value $1/\bar{\gamma}\ge \frac1{1+\epsilon}\lambda_0$.
%
 Consider sampling a random $s$-arborescence $A$ from distribution $\bar x$, so we choose arborescence $A_i$ with probability $\bar x_i$. Since $\delta_{G_0}^-(\optSink) \le (1+\epsilon)\lambda_0$ by \Cref{sparsification},  the expected number of edges in $A\cap\partial_{G_0}^-(\optSink)$ is at most $(1+\epsilon)\lambda_0 / (\frac{1}{1+\epsilon}\lambda_0) = (1+\epsilon)^2\le 1+3\epsilon$ for small enough constant $\epsilon$.  Since we always have $|A\cap\partial_{G_0}^-(\optSink)|\ge1$, by Markov's inequality $\Pr[|A\cap\partial_{G_0}^-(\optSink)|-1\ge 1]\le 3\epsilon \le 1/2$ for small enough constant $\epsilon$.  Therefore, if we uniformly sample $\Theta(\log n)$ arborescences from the distribution $\bar x$, at least one of the arborescences is 1-respecting whp.

  It remains to compute $x=\arg\min_{x\in P}\sum_j y_j f_j(x)$ on each iteration. Since $\sum_jy_jf_j(x)$ is linear in $x$, the minimum must be achieved by a single arborescence.  So the task reduces to computing the minimum cost spanning $s$-arborescence, which can be done in $O(m_0+n\log n)$ \cite{Gabow1986}. The total time complexity, over all iterations, becomes $O((m_0+n\log n)\lambda_0\log n) = O(\min\{mk\log^2 n,nk^2\log^3 n\})$.  The construction of \Cref{sparsification} costs another $O(m\log n)$ time.
\end{proof}


\section{Proof of \reflemma{apx-rvc-sparsification}}
\label{proof-apx-rvc-sparsification}

\begin{algorithm}[t]
  \caption{Sparsifying a graph to preserve the rooted vertex
    mincut\label{vertex-sparsification-algorithm}}

Let
  $\tau = c_{\tau} \eps^2 \vc / k \log{n}$ and let
  $\Delta = c_{\Delta} k \log{n} / \eps^2$, where $c_{\tau} > 0$
  is a sufficiently small constant and $c_{\Delta} > 0$ is a
  sufficiently large constant.    \labelstep{rvc-first-step}

  Randomly round each vertex
  weight to (nearest) multiples of $\tau$. \labelstep{rvc-importance-sample}

For each vertex $v$,
  introduce an auxiliary vertex $a_v$ with weight
  $\eps \vc / 2 k$. Add edges from the $\root$ to $a_v$, and from
  $a_v$ to $v$.    \labelstep{rvc-extra-vertices}

  \tcc{Decreasing $c_{\tau}$ and $\eps$ as needed, we
    assume that $\vc$ and $\eps \vc / 2k$ are multiples of
    $\tau$.}

  Remove all outgoing edges from any vertex with weight $0$.

    Scale down all vertex weights by $\tau$ (which makes them integers).  \labelstep{rvc-rescale}

Truncate all vertex weights to be
  at most $c_{w} k \log{n} / \eps^2$ for a sufficiently large
  constant $c_w > 0$.   \labelstep{rvc-truncate}

For all $v$ with unweighted in-degree at
  least $\Delta$, replace all incoming edges to $v$ with a single
  edge from $\root$.   \labelstep{rvc-contract}
\end{algorithm}

\begin{proof}
  Consider the Algorithm \ref{vertex-sparsification-algorithm} applied
  to the input graph $G$.  Let $\sparseG$ be the graph obtained at the
  end of the algorithm. Properties \ref{rvcs-root} through
  \ref{rvcs-zero} follow directly from the construction. The remaining
  proof is dedicated to proving the high probability bounds of
  \ref{rvcs-cut-weights} and \ref{rvcs-small-cut-sets}. We first show
  that the initial steps \Cref{step:rvc-first-step} to \refstep{rvc-extra-vertices}
  -- before rescaling -- preserves the
  minimum weight rooted vertex-cut approximately in the sense of
  \ref{rvcs-cut-weights} (sans scaling). We then analyze the remaining
  steps.  For each set $S$, let $w_1(S)$ denote the weight of the vertex
  in-cut of $S$. Let $w_2(S)$ denote the randomized weight of the vertex
  in-cut after step \refstep{rvc-importance-sample}. Let $w_3(S)$ denote
  the randomized weight of the vertex in-cut after step
  \refstep{rvc-extra-vertices}.
  \begin{claims}
  \item
    \labelclaim{rvc-concentration}
    With high probability, for all $S \subseteq V$, we have
    \begin{align*}
      \absvof{w_2(S) - w_1(S)} \leq \eps w_1(S) + \frac{\eps \vc \sizeof{S}}{2k}.
    \end{align*}
  \end{claims}
  The claim and proof are similar to \refclaim{rec-concentration} in
  the proof of \reflemma{apx-rec-sparsification}. The claim consists
  of an upper bound and a lower bound on $w_2(S)$ for all $S$ and we
  first show the lower bound holds with high probability.  Fix any set
  $S$.  $w_2(S)$ is an independent sum with expected value $w_1(S)$ and
  where each term in the sum is nonnegative and varies by at most
  $\tau$. Concentration bounds (see footnote
  \ref{footnote:additive-chernoff} on page
  \pageref{footnote:additive-chernoff}) imply that for any
  $\gamma \geq 0$, we have
  \begin{align*}
    \probof{w_2(S) \leq \epsless  w_1(S) - \gamma} \leq e^{- \eps \gamma /
    \tau}
    =                           %
    n^{- \gamma k \log{n} / c_{\tau} \eps \vc},
  \end{align*}
  In particular, for
  $\gamma = \eps \vc \sizeof{S} / 2 k$, the RHS is at most
  \begin{math}
    n^{- c_0 \sizeof{S}}
  \end{math}
  where $c_0 > 0$ is again a constant under our control (via
  $c_{\tau}$). For sufficiently large $c_0$, we can take the union
  bound over all sets $S$, establishing the high probability lower
  bound. The high probability upper bound follows by a symmetric
  argument.

  Now we analyze the vertex $\root$-cuts after step
  \refstep{rvc-extra-vertices}.
  Recall that for $S \subseteq V$, $w_3(S)$ denotes the weight
  of the in-cut of $S$ after adding auxiliary vertices in step
  \refstep{rvc-extra-vertices}.
  \begin{claims}
  \item \labelclaim{rvc-padded-concentration} For all
    $S \subseteq V'$, we have
    \begin{align*}
      \epsless w_1(S) \leq w_3(S) \leq \epsmore w_1(S) + \frac{\eps \vc\sizeof{S}}{k}.
    \end{align*}
  \end{claims}
  This claim and its proof is similar to
  \refclaim{rec-padded-concentration} in
  \reflemma{apx-rec-sparsification}.  We have
  $w_3(S) = w_2(S) + \eps \vc \sizeof{S} / 2 k$ for all
  $S \subseteq V'$.  The additive factor of
  $\eps \vc \sizeof{S} / 2 k$ combine with the high-probability
  additive error in \refclaim{rec-concentration} to establish the
  claim.

  We point out that \refclaim{rvc-padded-concentration} implies that,
  with high probability after step \refstep{rvc-extra-vertices}, the
  weights of all the vertex $\root$-cuts are preserved in the approximate
  sense described by \ref{rvcs-cut-weights} (without the scaling).
  Henceforth we assume that the high probability event of
  \refclaim{rvc-padded-concentration} holds.  Now, after step
  \refstep{rvc-extra-vertices}, all the weights are integer multiples
  of $\tau$. We have $\vc / \tau = \bigO{k \log{n} / \eps^2}$.  After
  scaling down by $\tau$ in step \refstep{rvc-rescale}, we still
  preserve the $\root$-cuts per property
  \ref{rvcs-cut-weights}. Truncating weights in $\sparseG$ to
  $\bigO{\vc / \tau}$ decreases the weight of some cuts, but to no
  less than $\bigO{\vc / \tau}$ (which maps to $\bigO{\vc}$ when
  rescaled back to the scale of $G$). Removing the outgoing edges of
  vertices with weight $0$ also has no impact on the weight of any
  vertex $\root$-cut. The final step adding edges from $\root$ only eliminates
  some of the vertex $\root$-cuts from consideration and does not impact
  the weight of the remaining vertex $\root$-cuts. This establishes
  \ref{rvcs-cut-weights}.

  It remains to prove \ref{rvcs-small-cut-sets} and in particular we
  must show that it is not impacted by the last step,
  \refstep{rvc-contract}.  Recall that step \refstep{rvc-contract}
  replaces the incoming edges to any vertex $v$ with unweighted
  in-degree greater than $\Delta = \bigO{k \log{n} / \eps^2}$ with a
  single edge in $\root$.  In particular, this edge places $v$ in
  $\outneighbors{\root}$ and destroys all $\root$-cuts where the sink
  component contains $v$.  Let $\Sink \subset V'$ be the sink
  component of a vertex $\root$-cut in $G$ where the capacity of the cut
  is at most $\vc$, and $\sizeof{\Sink} \leq k$.  We want to show that
  all vertices in $\Sink$ have in-degree less than $\Delta$, in which
  case the extra edges in \refstep{rvc-contract} have no impact on
  $\Sink$.  The vertex in-cut induced by $\Sink$ has weight at most
  $(1 + 2 \eps) \vc / \tau$ in the randomized graph before
  \refstep{rvc-contract} (per \ref{rvcs-cut-weights}).  Fix any
  $v \in \Sink$ and consider the edges going into $v$. At most $k - 1$
  of those edges can come from another vertex in $\Sink$, since
  $\Sink$ has at most $k$ vertices. The remaining edges must be from
  vertices in the vertex in-cut of $\Sink$.  Each of these vertices
  have weight at least $1$, and by \ref{rvcs-cut-weights} the in-cut
  has weight at most $(1 + 2 \eps) \vc / \tau$, so there are at most
  $\bigO{\vc / \tau}$ of these vertices.  This gives a maximum total
  of less than $\Delta$ edges incident to $v$, as desired.  In
  conclusion, for any vertex $v$ that is the endpoint to at least
  $\Delta$ edges, it is safe to replace all of $v$'s incoming edges
  with a single edge from the root, without violating
  \ref{rvcs-small-cut-sets}. This establishes
  \ref{rvcs-small-cut-sets} and completes the proof.
\end{proof}





  
  

\end{document}